\keywords{lambda-calculus, abstract machines, complexity, space}
\theoremstyle{plain} 
\newcommand{\myproof}[1]{
\ifthenelse{\boolean{withproofs}}{#1}{}}
\newcommand{\withproofs}[1]{
\ifthenelse{\boolean{withproofs}}{#1}{}}
\newcommand{\withoutproofs}[1]{
\ifthenelse{\boolean{withproofs}}{}{#1}}
\newcommand{\tm}{t}
\newcommand{\tmtwo}{u}
\newcommand{\tmthree}{r}
\newcommand{\tmfour}{w}
\newcommand{\var}{x}
\newcommand{\vartwo}{y}
\newcommand{\varthree}{z}
\newcommand{\varfour}{w}
\newcommand{\Rew}[1]{\rightarrow_{#1}}
\renewcommand{\to}{\Rew{}}
\newcommand{\tob}{\Rew{\beta}}
\newcommand{\towh}{\Rew{wh}}
\newcommand{\symfont}[1]{\mathsf{#1}}
\newcommand{\db}{{\symfont{dB}}}
\newcommand{\vsym}{\symfont{v}}
\newcommand{\val}{v}
\newcommand{\toval}{\rightarrow_\vsym}
\newcommand{\ctxholep}[1]{[#1]}
\newcommand{\ctxhole}{\ctxholep{\cdot}}
\newcommand{\ctx}{C}
\newcommand{\ctxp}[1]{\ctx\ctxholep{#1}}
\newcommand{\evctx}{E}
\newcommand{\nbvctxtwo}[1]{\nbvctxtwo{#1}}
\newcommand{\defeq}{:=}
\newcommand{\grameq}{::=}
\newcommand{\isub}[2]{\{#1/#2\}}
\newcommand{\esub}[2]{[#1/#2]}
\newcommand{\llbrace}{\{ \kern -0.27em \vert}
\newcommand{\rrbrace}{\vert \kern -0.27em \}}
\renewcommand{\l}{\lambda}
\newcommand{\ie}{i.e.\xspace}
\newcommand{\ih}{\textit{i.h.}\xspace}
\newcommand{\fv}[1]{\symfont{fv}(#1)}
\newcommand{\ignore}[1]{}
\newcommand{\myinput}[1]{\ifthenelse{\boolean{withimages}}{\input{#1}}{}}
\newcommand{\nat}{\mathbb{N}}
\newcommand{\size}[1]{|#1|}
\newcommand{\sizeparam}[2]{|#1|_{#2}}
\newcommand{\clos}{c}
\newcommand{\env}{E}
\newcommand{\envtwo}{\env'}
\newcommand{\lenv}{e}
\newcommand{\lenvtwo}{\lenv'}
\newcommand{\lenvthree}{\lenv''}
\newcommand{\stack}{S}
		\newcommand{\state}{s}
\newcommand{\statetwo}{{\state'}}
\newcommand{\tospkam}{\mathrm{\Rew{Sp\KAM}}}
\newcommand{\tosplam}{\mathrm{\Rew{Sp\LAM}}}
\newcommand{\tonakam}{\mathrm{\Rew{Na\KAM}}}
\newcommand{\tooutkam}{\mathrm{\Rew{Out\KAM}}}
\newcommand{\exec}{\rho}
\newcounter{numberone}
\newcounter{numbertwo}
\renewcommand{\ctxholep}[1]{\langle #1\rangle}
\newcommand{\dom}[1]{\mathsf{dom}(#1)}
\newcommand{\unfsym}{\rotatebox[origin=c]{-90}{$\rightarrow$}}
\newcommand{\unf}[1]{#1\unfsym}
\newcommand{\deriv}{\rho}
\newcommand{\refprop}[1]{Prop.~\ref{prop:#1}}
\newcommand{\refpropp}[2]{Prop.~\ref{prop:#1}.\ref{p:#1-#2}}
\newcommand{\refsect}[1]{Sect.~\ref{sect:#1}}
\newcommand{\refdef}[1]{Def.~\ref{def:#1}}
\newcommand{\refthm}[1]{Theorem~\ref{thm:#1}}
\newcommand{\reffig}[1]{Fig.~\ref{fig:#1}}
	\newcommand{\refequ}[1]{(\ref{eq:#1})}
\renewcommand{\esub}[2]{[#1{\shortleftarrow}#2]}
\renewcommand{\isub}[2]{\{#1{\shortleftarrow}#2\}}
\newcommand{\run}{\rho}
\newcommand{\runtwo}{\sigma}
\newcommand{\kstatetab}[3]{#1 & #2 & #3 }
\newcommand{\kstate}[3]{(#1,#2,#3)}
\newcommand{\cons}{{\cdot}}
\newcommand{\mach}{\mathrm{M}}
\newcommand{\KAM}{KAM\xspace}
\newcommand{\NaKAM}{Naive KAM\xspace}
\newcommand{\STKAM}{Sub-Term KAM\xspace}
\newcommand{\OutKAM}{Outlined KAM\xspace}
\newcommand{\LinkKAM}{Linked KAM\xspace}
\newcommand{\SpKAM}{Space KAM\xspace}
\newcommand{\CoKAM}{Collecting KAM\xspace}
\newcommand{\LAM}{LAM\xspace}
\newcommand{\SpLAM}{Space LAM\xspace}
\newcommand{\tomachhole}[1]{\rightarrow_{#1}}
\newcommand{\tomach}{\tomachhole{}}
\newcommand{\stempty}{\epsilon}
\newcommand{\la}[1]{\lambda #1.}
\newcommand{\midd}{\; \; \mbox{\Large{$\mid$}}\;\;}
\newcommand{\rel}{\mathcal{R}}
\newcommand{\bigo}[1]{\mathcal{O}(#1)}
\newcommand{\ccallbn}{Closed Call-by-Name\xspace}
\newcommand{\cbn}{CbN\xspace}
\newcommand{\ccbn}{Closed \cbn}
\newcommand{\cbv}{CbV\xspace}
\newcommand{\ccbv}{Closed \cbv}
\newcommand{\cbneed}{CbNeed\xspace}
\newcommand{\Id}{\symfont{I}}
\newcommand\mydots{\hbox to .6em{.\hss.}}
\newcommand{\lm}[1]{\lambda \text{Meas}(#1)}
\renewcommand{\lm}[1]{\sizeparam{#1}{\mathsf{sp}}}
\newcommand{\str}{s}
\newcommand{\spkamstate}[3]{(#1,#2,#3)}
\newcommand{\splamstate}[4]{(#1,#2,#3,#4)}
\newcommand{\enc}[1]{\lceil#1\rceil}
\newcommand{\ems}{\varepsilon}
\newcommand{\detLam}{\Lambda_{\tt det}}
\newcommand{\tobdet}{\rightarrow_{det}}
\newcommand{\alpone}{\Sigma}
\newcommand{\cods}[1]{\overline{#1}}
\renewcommand{\dump}{d}
\newcommand{\dentry}[2]{#1{\diamond}#2}
\newcommand{\sizeb}[1]{\sizeparam{#1}{\beta}}
\newcommand{\seasym}{\symfont{sea}}
\newcommand{\seavsym}{\symfont{sea}_{\symfont{v}}}
\newcommand{\seanvsym}{\symfont{sea}_{\neg \symfont{v}}}
\newcommand{\subsym}{\symfont{sub}}
\newcommand{\bwsym}{\beta_{\symfont{w}}}
\newcommand{\bnwsym}{\beta_{\neg\symfont{w}}}
\newcommand{\retsym}{\symfont{ret}}
\newcommand{\tokamsea}{\tomachhole{\seasym}}
\newcommand{\tokamb}{\tomachhole{\beta}}
\newcommand{\tokamsub}{\tomachhole{\subsym}}
\newcommand{\tokamseav}{\tomachhole{\seavsym}}
\newcommand{\tokamseanv}{\tomachhole{\seanvsym}}
\newcommand{\tokambw}{\tomachhole{\bwsym}}
\newcommand{\tokambnw}{\tomachhole{\bnwsym}}
\newcommand{\tokamret}{\tomachhole{\retsym}}
\newcommand{\sizetime}[1]{\sizeparam{#1}{T}}
\newcommand{\sizesub}[1]{\sizeparam{#1}{\subsym}}
\newcommand{\sizesea}[1]{\sizeparam{#1}{\seasym}}
\newcommand{\heap}{h}
\newcommand{\heaptwo}{\heap'}
\newcommand{\compil}[1]{\symfont{init}(#1)}
\newcommand{\fixnospace}{\symfont{fix}}
\newcommand{\fix}{\fixnospace\,}
\newcommand{\M}{\mathcal M}
\newcommand{\cont}{k}
\newcommand{\inits}{{\tt init}}
\newcommand{\elemblank}{\Box}
\renewcommand{\state}{q}
\newcommand{\States}{Q}
\newcommand{\statein}{\state_{\mathit{in}}}
\newcommand{\statefint}{\state_{\mathit{T}}}
\newcommand{\statefinf}{\state_{\mathit{F}}}
\newcommand{\config}{C}
\newcommand{\configtwo}{D}
\newcommand{\finals}{{\tt final}}
\newcommand{\transaux}{{\tt transaux}}
\newcommand{\transs}{{\tt trans}}
\newcommand{\strone}{s}
\newcommand{\strtwo}{r}
\newcommand{\append}{{\tt{append}}}
\newcommand{\appendchar}[1]{\append^{#1}}
\newcommand{\elone}{a}
\newcommand{\elem}{a}
\renewcommand{\enc}[1]{\overline{#1}}
\renewcommand{\cods}[1]{\lceil#1\rceil}
\newcommand{\cod}[2]{\cods{#1}^{#2}}
\newcommand{\initconfigs}{\config_{\tt in}(\inputstr)}
\newcommand{\tomachtur}{\tomachhole{\M}}
\newcommand{\inputstr}{i}
\newcommand{\counter}{n}
\newcommand{\Statesfin}{\States_{\mathit{fin}}}
\newcommand{\statefin}{\state_{\mathit{fin}}}
\newcommand{\tuple}[1]{\langle #1 \rangle}
\newcommand{\succl}{\mathtt{succ}}
\newcommand{\lookupl}{\mathtt{lookup}}
\newcommand{\too}{\rightarrow}
\newcommand{\ntostr}[1]{\lfloor #1 \rfloor}
\newcommand{\kop}{\textsc{k}}
\newcommand{\Bool}{\mathbb{B}}
\newcommand{\Boolb}{\Bool_{\mathsf{W}}}
\newcommand{\Boolin}{\Bool_{\mathsf{I}}}
\newcommand{\csep}{\,|\,}
\newcommand{\encp}[2]{\enc{#1}^{#2}}
\newcommand{\bool}{b}
\newcommand{\wstr}{w}
\newcommand{\wstrl}{\wstr_{l}}
\newcommand{\wstrr}{\wstr_{r}}
\newcommand{\istr}{i}
\newcommand{\varf}{f}
\renewcommand{\L}{\mathsf{L}}
\newcommand{\PSPACE}{\mathsf{PSPACE}}
\renewcommand{\P}{\mathsf{P}}
\newcommand{\EXP}{\mathsf{EXP}}
\newcommand{\toy}{\mathtt{toy}}
\newcommand{\toyaux}{\mathtt{toyaux}}
\newcommand{\localcopy}{\mathtt{loCpy}}
\newcommand{\globalcopy}{\mathtt{glCpy}}
\newcommand{\addr}{a}
\newcommand{\addrtwo}{\addr'}
\newcommand{\codesize}[1]{\norm{#1}}
\newcommand{\stsym}{\mathsf{L}}
\newcommand{\ensym}{\mathsf{R}}
\newcommand{\tox}{\Rew{X}}
\renewcommand{\tox}{\Rew{\symfont{str}}}
\newcommand{\sizecl}[1]{\sizeparam{#1}{{\mathsf{cl}}}}
\newcommand{\staddr}[2]{#1|_{#2}}
\newcommand{\sizespace}[1]{\sizeparam{#1}{\mathsf{sp}}}
\newcommand{\sizeaspace}[1]{\sizeparam{#1}{\mathsf{clsp}}}
\newcommand{\pointer}[1]{p_{#1}}
\renewcommand{\stack}{\pi}
\renewcommand{\sizetime}[1]{\sizeparam{#1}{\mathsf{tm}}}
\renewcommand{\refsect}[1]{Section~\ref{sect:#1}}
\begin{document}

\title[Reasonable Space for the $\lambda$-Calculus, Logarithmically]{Reasonable Space for the $\lambda$-Calculus, Logarithmically$^*$}
\titlecomment{$^*$This is the extended version of the paper appeared with the same title at LICS2022~\cite{DBLP:conf/lics/AccattoliLV22}}
\author[B.~Accattoli]{Beniamino Accattoli\lmcsorcid{0000-0003-4944-9944}}[a]
\author[U.~Dal Lago]{Ugo Dal Lago\lmcsorcid{0000-0001-9200-070X}}[b]
\author[G.~Vanoni]{Gabriele Vanoni\lmcsorcid{0000-0001-8762-8674}}[c]

\address{Inria \& LIX, \'Ecole Polytechnique, UMR 7161, France}	
\email{beniamino.accattoli@inria.fr}  

\address{Universit\`a di Bologna, Italy \& Inria, France}	
\email{ugo.dallago@unibo.it}  

\address{IRIF, CNRS, Université Paris Cité, France}	
\email{gabriele.vanoni@irif.fr}  





\begin{abstract}
  \noindent Can the $\lambda$-calculus be considered a reasonable computational 
model? Can we use it for measuring the time \emph{and} space consumption of 
algorithms? While the literature contains positive answers about time, much 
less is known about space. This paper presents a new reasonable space cost 
model for the $\lambda$-calculus, based on a variant over the Krivine abstract 
machine. For the first time, this cost model is able to accommodate logarithmic 
space.
Moreover, we study the time behavior of our machine, which is unreasonable but it can be turned into a reasonable one using known techniques. Finally, we show how to transport 
our results to the call-by-value $\lambda$-calculus.
\end{abstract}

\maketitle

\section{Introduction}
Bounding the amount of resources needed by algorithms and 
 programs is a fundamental problem in computer science. Here we are concerned 
with space. In many applications, say, stream processing or web 
crawling, \emph{linear} bounds on computing space are not satisfactory, given 
the enormous amount of data processed. Therefore \emph{logarithmic} bounds become the standard of reference. Theoretically, complexity classes 
such as the class $\L$ of logarithmic space, although apparently small, are very 
interesting, and it is not known whether they are distinct from $\P$, see for instance Hopcroft and Ullman \cite{DBLP:books/aw/HopcroftU79}.

\paragraph{Space and the $\l$-Calculus.} Dealing with space bounds in the $\l$-calculus, or in functional 
programming languages, has always been considered a challenge.

The first reason is related to the special role of garbage collection in functional languages and in the $\l$-calculus. Space usage that is linearly related to time is the worst possible usage in sequential models, since space is always bounded by time, given that---intuitively---using a unit of space requires a unit of time. In a purely functional setting without garbage collection, space is indeed linearly related to time. Therefore, to properly studying space requires explicitly taking into account garbage collection, which instead is exactly one of the aspects that functional programming aims at \emph{hiding}, leaving it to the meta-level. The case of the $\l$-calculus is slightly different. The $\beta$-reduction rule can erase sub-terms, but there is not much control over this form of erasure, because it is not  asynchronous as in functional languages, since---technically---erasing $\beta$-steps cannot be postponed. Consider indeed the following sequence:
\begin{center}
$\begin{array}{lllllll}
(\la\var\la\vartwo\vartwo)\tm\tmtwo &\tob& (\la\vartwo\vartwo)\tmtwo& \tob& \tmtwo
\end{array}$
\end{center}
The first step is erasing but it \emph{cannot} be postponed after the second one, because the second step is \emph{created} by the first one.

The second reason behind the challenge in studying space is that the abstract notions of time and space in the $\l$-calculus have some 
puzzling properties, as we shall discuss at length. In particular, there are families of terms where the abstract notion of space seems to be \emph{exponential} in the abstract notion of time, a phenomenon known as \emph{size explosion}. This puzzling fact, which roughly means that \emph{one does not need a unit of time to use a unit of space}, eclipses the issues related to garbage collection, and historically was the main reason why $\l$-calculus was not considered a good setting for computational complexity.

\paragraph{Logarithmic Space and the $\l$-Calculus.} But logarithmic space is special, because it adds a further difficulty to an already challenging topic: it requires \emph{log-sensitivity}, that is, to distinguish between \emph{input space} and \emph{work space}---without such a distinction one cannot even measure sub-linear space. Since the $\l$-calculus does not distinguish between \emph{programs} and \emph{data}, it is \emph{log-insensitive} and, apparently, at odds with logarithmic space.


The literature about the $\l$-calculus does nonetheless offer 
results about space complexity, but they are all \emph{partial}, as they either 
concern logarithmic space for
\emph{variants} of the $\l$-calculus, as for Dal Lago and Sch\"opp \cite{bllspace,dal_lago_computation_2016}, 
Mazza \cite{DBLP:conf/csl/Mazza15} and Ghica~\cite{ghica_geometry_2007}, or they do deal with the $\l$-calculus but apply only to linear space and above, as it is the case of Forster et al. \cite{cbv_reasonable}.

\paragraph{Contribution} The main result of this paper is the first fully fledged space 
reasonability result for the pure, untyped $\l$-calculus, accounting for logarithmic space. Precisely, we retrieve log-sensitivity by 
representing the \emph{input space} as $\l$-terms, and the \emph{work space} as 
the space used by a new variant of the well-known Krivine abstract machine (KAM) \cite{krivine_call-by-name_2007} that we dub \emph{\SpKAM}. We then prove that such a notion of work space is \emph{reasonable} and it accounts for logarithmic space. Reasonable, roughly, means equivalent to the notion of space of Turing machines (shortened to TMs). More accurately, we show that there are encodings of TMs into the $\l$-calculus and vice-versa inducing simulations with a linear space overhead\footnote{The notion of \emph{linear} space overhead with respect to sub-linear space might be confusing: if a TM uses work space $\bigo{\log n}$, where $n$ is the size of the input, then the simulation in the \SpKAM must use $k\cdot\log(n)\in\bigo{\log n}$ work space, and not $\bigo{n}$ space. In other words, what is linear is the overhead, not the function describing the space consumption.}. For space, the tricky simulation is the one of TMs into the $\l$-calculus (for time, it is the opposite one), which is studied in great detail in this paper. The other simulation is only outlined, as it does not present any difficulty. \refsect{theory-cost} contains an original perspective on the theory of reasonable cost models for the $\l$-calculus and of how our result fits in.

\paragraph{Key Ingredients} Our result follows from a careful dissection and refinement of the KAM and of the simulation of TMs into the $\l$-calculus. A peculiar aspect is that the result does not rest on a \emph{single} innovation or idea. It rests instead on the simultaneous addressing of \emph{numerous} critical points of the encoding of TMs and of the KAM. None of them is in itself difficult or striking---apart perhaps from the disabling of environment sharing discussed below---but all of them are mandatory for the result to hold. We identify six critical points, of which we provide an overview in \refsect{critical-points}. Four of them are of a high-level nature:
\begin{enumerate}
\item \emph{Eager garbage collection}: environment-based abstract machines such as the KAM are usually presented without garbage collection, assuming that the meta-level shall take care of it, as it is customary in functional programming. For a parsimonious use of space, it is instead essential to re-use space as much as possible, thus having a first-class treatment of garbage collection. From the point of view of cost models, this change disentangles space from time (more precisely, from time considered as the number of $\beta$-steps). Concretely, we introduce a new variant of the KAM with eager garbage collection (plus another optimization) dubbed \CoKAM.
\item \emph{Disabling data structure sharing}: there are two different forms of sharing at work in the KAM that are usually not distinguished, namely the \emph{sharing of sub-terms} provided by 
environments, and the \emph{sharing of environments} themselves. These aspects, actually, are not explicit in the  specification of the KAM, they are left to the concrete implementation of the KAM. The \SpKAM---which is a specific concrete implementation of the \CoKAM---
adopts the former but forbids the latter. Abstractly, this is needed to turn the data structures of the KAM into sort of \emph{tapes} of TMs. Tapes are special in that they are \emph{flat}, that is, cells are juxtaposed without using space, rather than linked via pointers, which would add a space overhead. Similarly, then, the data structures of the \SpKAM are flat, which has the consequence of forbidding the sharing of environments. This is probably the most surprising point of our work, and---to our knowledge---the first time that such an approach is adopted in the literature on abstract machines for the $\l$-calculus.
\item \emph{Encoding and moving over tapes}: designing the \SpKAM is only half of the story. The other half is the 
refinement of the encoding of TMs into the $\l$-calculus. 
Our reference is the encoding by Dal Lago and Accattoli~\cite{DBLP:journals/corr/abs-1711-10078}, which uses a linear amount of extra space to simulate the moving over TMs tapes. This is particularly bad for the input tape, for which a logarithmic overhead is required. Therefore, we change the encoding of input tapes, exploiting their read-only nature, to achieve the required overhead.
\item \emph{Low-level complexity analysis and left addresses}: the complexity analysis of the space used by the \SpKAM to execute the encoding of TMs is not \emph{abstract}, that is, it is not simply given by the maximum number of pointers to the code times the logarithm of the code. It is \emph{low-level} in the sense that one has to inspect the size of pointers, and the reasonable bound holds only because some of them turn out to have constant size. This is obtained via a specific addressing scheme for pointers to the code, what we call \emph{left addresses}.
\end{enumerate}

\paragraph{Looking Back} Understanding the whole picture, at both the high-level of the theory of cost models and the low-level of machines simulations, is far from obvious. The following sections shall strive to provide such a picture. The long-standing riddle of logarithmic reasonable space for the $\l$-calculus, however, turns out to have a relatively simple solution. Roughly, it is enough to add two simple space optimizations (eager garbage collection plus \emph{environment unchaining}) to the KAM, obtaining the \CoKAM, and to further disable its sharing of environments, obtaining the \SpKAM, if the new encoding of TMs is taken for granted. Our \SpKAM, indeed, is not much more complex than the KAM itself.

The difficulty behind the quest for a logarithmic reasonable work space is better understood by considering that it required dismissing two widespread intuitions about the problem, as we shall now explain. Additionally, the literature about abstract machines of the $\l$-calculus is mainly concerned with time, for which the issues for space are irrelevant, and are therefore often treated in an ambiguous and inaccurate way. 

Somewhat unusually, the main obstacle behind the solution of the long-standing problem turned out to be reasoning without the preconceptions and the inaccuracies of the established knowledge about space for the $\l$-calculus and the theory of abstract machines.

\paragraph{A Wrong Positive Belief: the Geometry of Interaction} 
For 15 years, logarithmic reasonable space was believed to be connected to the alternative execution schema offered by Girard's \emph{geometry of interaction} \cite{girard_geometry_1989}. Mackie's and Danos \& Regnier's \emph{interaction abstract machine} (shortened 
 to IAM) \cite{mackie_geometry_1995,danos_regnier_1995}, recently reformulated by Accattoli et al. in~\cite{DBLP:conf/ppdp/AccattoliLV20}, is a machine rooted in the geometry of interaction and in Abramsky 
 et al.'s game semantics \cite{DBLP:journals/iandc/AbramskyJM00}. It is based on a log-sensitive approach, 
 and---apparently---it is parsimonious with respect to space. 
 Sch\"opp  \cite{DBLP:conf/csl/Schopp06,bllspace} (with later developments with Dal Lago \cite{DBLP:conf/esop/LagoS10}) was the first one to show how IAM-like mechanisms can be used for dealing with 
 logarithmic space. It was since then conjectured that the space of the IAM were a 
 reasonable cost model. The belief in the conjecture was reinforced by further 
 uses of IAM-like mechanisms for space parsimony related to circuits, by 
 Ghica~\cite{ghica_geometry_2007}, and for characterizing $\L$, by Mazza 
 \cite{DBLP:conf/csl/Mazza15}. In 2021, however, Accattoli et al.  
 essentially \emph{refuted} 
 the conjecture: the space used by the IAM to evaluate the reference encoding 
 of TMs is unreasonable \cite{DBLP:conf/lics/AccattoliLV21} (as well as time 
 inefficient \cite{ADLVPOPL21}). While one might look for different encodings, 
 the 
 unreasonable behavior of the IAM concerns the modeling of recursion via 
 fix-point combinators (or self application) which is a cornerstone of the $\l$-calculus, hardly avoidable by any alternative encoding.

\paragraph{A Wrong Negative Belief: The Space Cost of Environments} Another misleading 
belief was that environment-based abstract machines could not provide reasonable notions of  space. Environments are data structures used to achieve time 
reasonability. According to Fernandez and Siafakas \cite{DBLP:journals/entcs/FernandezS09}, there are two main styles of environments, 
\emph{local} and \emph{global}, studied in-depth by Accattoli and Barras 
\cite{DBLP:conf/ppdp/AccattoliB17}. Global environments (as in the Milner Abstract Machine  
\cite{DBLP:conf/icfp/AccattoliBM14}) are log-insensitive because they work 
over the input space. Local environments (as in the KAM) are log-sensitive.
There are two reasons why their usual presentation is space unreasonable.

 Firstly, garbage collection is not usually accounted for, which leads to ever-increasing space usage, while reasonable space should be re-usable. This issue is easily solved by adding garbage collection. One actually needs an eager form, in order to maximize space re-usability, and to implement it in a naive, time-ineffiecient way, as time efficient techniques such as reference counters would add an unreasonable space overhead.

Secondly, and more subtly, environments are usually space unreasonable because of the use of pointers for sharing. To be precise, local environments use \emph{two} types of 
pointers, handling two forms of sharing: sub-term pointers, which serve to 
avoid
copying 
sub-terms, and environment pointers, which both realize their linked list 
structure 
and the sharing of environments. Sub-terms pointers are a key aspect of logarithmic space computations, and are thus crucial. Environment pointers are instead what 
makes environments space unreasonable (despite being, according to Douence and Fradet \cite{DBLP:journals/lisp/DouenceF07},
the \emph{essence} of the KAM): they introduce a logarithmic 
\emph{pointer overhead} that, at best, gives simulations of TMs with a 
$\bigo{n\log n}$ overhead in space, instead of the required $\bigo n$ for 
reasonability.
It was then generally concluded that environments cannot provide reasonable space. 

We here show that, instead, environments make perfect sense also without environment pointers, and so without sharing of environments, by implementing them simply as strings of adjacent symbols and copying their whole content instead of copying only pointers to them---these shall be referred to as \emph{flat environments}. Such an unusual approach is one of the critical ingredient for space reasonability. At the same time, adopting flat environments has two correlated consequences. Firstly, it breaks \emph{time reasonability} (with respect to time considered as the number of $\beta$-steps), as we show in \refsect{unreas-time}. In \refsect{time-spkam}, we discuss how to recover simultaneous reasonability for both time and space.

Secondly, for some terms it leads to extreme inefficiencies for both time and space. Namely, this happens for terms that are \emph{not} encodings of TMs and for which environment sharing provides a speed-up, that can even be exponential for both time and space. While at first sight this fact might seem startling, it is in fact well known that \emph{reasonable} does not mean \emph{efficient}: it only means that the theory can \emph{simulate and be simulated by TMs with negligible overhead}. See Accattoli \cite{DBLP:journals/entcs/Accattoli18} for discussions of this delicate point with respect to time.

\paragraph{Pointers, Abstract Machines, and Abstract Implementations} The literature on abstract 
machines for the $\l$-calculus usually assumes that pointers are used in implementations. Still, pointers are not usually  
 explicitly accounted for in abstract machine specifications. Such an ambiguity can be both positive and negative. On the positive side, it allows one to omit details that might be irrelevant for the intended result, obtaining simpler machines. On the negative side, it makes such specifications ambiguous and prevents precise cost analyses. Different implementations of the machine can indeed be possible, sometimes with very different asymptotic complexities, and forms of sharing can be unhygienically hidden behind the meta-level assumption that some of the machine components are represented via pointers. This is for instance the case of the KAM, which is time reasonable (for time = \# of $\beta$ steps) \emph{only if} environments are shared---as we here show for the first time, providing an example of unreasonable time overhead in absence of environment sharing (\refprop{NaKAM-time-unreas})---even if such sharing is not explicit in the specification of the KAM.

Taking into account pointers and their size is mandatory for the study of space, and even more so for logarithmic space. Therefore, we refine abstract machines by adding specifications of the time and space cost for each component, what we dub \emph{abstract implementations}. It is a methodological contribution of this work to the theory of abstract machines. It is required for the space reasonability result, but we believe that its value is independent of it.

\paragraph{Encoding of TMs and Call-by-Value} Beyond the design of the \SpKAM, our other main contribution is a new encoding of TMs in the $\l$-calculus. A critical point, as already mentioned, is that we change the representation of the input tape, in order to achieve the required logarithmic overhead. A further point is that the new encoding is carefully designed so as to retain the  
 \emph{indifference property} of the reference one, i.e. the fact that 
 it behaves the same under both call-by-name and call-by-value  evaluation. We then build over this design choice by showing that our results smoothly transfer to call-by-value  evaluation.
 
This is in contrast to what happens for time. The study of reasonable time for the $\l$-calculus is also based on a strategy-indifferent encoding of TMs, but in that case the difficult direction is the other one, that is, the simulation of the $\l$-calculus on TMs. To obtain such reasonable simulations, different strategies require different treatments. It turns out, then, that reasonable space can be studied more uniformly than reasonable time.

\paragraph{Sub-Term Property} The techniques for reasonable time and reasonable 
space seem to be at odds, as they make essential but opposite uses of linked 
data structures. Both techniques, however, crucially rely on the \emph{sub-term 
property} of abstract machines, that is, the fact that duplicated terms are 
sub-terms of the initial one. For time, it allows one to bound the cost of 
duplications, while for space it allows one to see sub-terms as (logarithmic) 
pointers to the input. The sub-term property seems to be the \emph{unavoidable 
ingredient for reasonability in the $\l$-calculus}. For extensive discussions about the sub-term property, see Accattoli \cite{DBLP:journals/lmcs/Accattoli23}, particularly Section 3 therein.

\paragraph{Related Work: Safe for (Reasonable Logarithmic) Space} Disabling sharing environments also plays a crucial role in a work about space by Paraskevopoulou and Appel \cite{DBLP:journals/pacmpl/Paraskevopoulou19}. They study \emph{closure conversion}, a program transformation at work in compilers for functional languages, turning abstractions into \emph{closures}, that is, into pairs of transformed abstractions and \emph{environments}. In the compilers literature, \emph{closures} and \emph{environment} refer to concepts that are similar and yet different with respect to those used in the abstract machine literature. Despite the differences (not discussed here), one can see some analogies between \cite{DBLP:journals/pacmpl/Paraskevopoulou19} and our work. The problem studied in \cite{DBLP:journals/pacmpl/Paraskevopoulou19} is which data structure for the (compiler) closures/environments of transformed programs is \emph{safe for space}, that is, preserves the space used by the source program. They show that \emph{flat environments} are safe for space, where \emph{flat} means without environment sharing. Playing with their slogan, one might then say that \emph{flat environments are safe for reasonable logarithmic space}.

In \cite{DBLP:journals/pacmpl/Paraskevopoulou19}, it is stressed that linked environments are \emph{not} safe for space because different environments might share sub-environments, preventing some garbage to be collected. Our study stresses a different danger, namely the pointer overhead introduced by the linked representation, which is unreasonable for logarithmic space. By removing pointers, we also remove the sharing of sub-environments. In \cite{DBLP:journals/pacmpl/Paraskevopoulou19}, flat environments are \emph{records}, that are assumed to not be linked via pointers, so the two approaches agree. Simply, the speech in \cite{DBLP:journals/pacmpl/Paraskevopoulou19} does not mention the danger of the pointer overhead, which is instead crucial for us. To avoid misunderstandings, we stress that both works study flat environments but the studied problems and the used techniques are incompatible: closure converted terms (with flat environments) can have size quadratically bigger than source $\l$-terms, so that closure conversion cannot be used to study reasonable space.

\paragraph{Related Work} The space inefficiency of 
environment machines is also  observed by
Krishnaswami et al.~\cite{DBLP:conf/popl/KrishnaswamiBH12}, who propose techniques to
alleviate it in the context of functional-reactive programming and
based on linear types. A characterization of $\PSPACE$ in the $\l$-calculus is 
given by Gaboardi et al. \cite{DBLP:journals/tocl/GaboardiMR12}, but it relies 
on alternating time rather than on a notion of space. The already-mentioned works by Dal Lago and Sch\"opp \cite{bllspace,dal_lago_computation_2016} and 
Mazza \cite{DBLP:conf/csl/Mazza15} characterize $\L$ in variants of the $\l$-calculus, while Jones characterizes $\L$ using a programming language but not based on the $\l$-calculus \cite{DBLP:journals/tcs/Jones99}. Blelloch and 
coauthors study in various papers how to profile (that is, measure) space consumption of 
functional programs~\cite{DBLP:conf/fpca/BlellochG95,DBLP:conf/icfp/BlellochG96, 
DBLP:journals/jfp/SpoonhowerBHG08}, also done by Sansom and Peyton Jones \cite{DBLP:conf/popl/SansomJ95}. They do not study, however, the reasonability of the cost models, that is, the equivalence with the space of TMs, which is the difficult 
part of our work. Finally, there is an extensive literature on garbage 
collection, as witnessed by the dedicated handbook~\cite{DBLP:books/wi/Jones2011}. We 
here need a basic eager form, that need  not be time efficient, as the \SpKAM is time unreasonable anyway.
%

\section{The Theory of Reasonable Cost Models for the \texorpdfstring{$\l$}{λ}-Calculus,\\ and How Our Result Fits in It}
\label{sect:theory-cost}
\paragraph{Reasonable Cost Models} According to the seminal work by Slot and 
van Emde Boas \cite{DBLP:journals/iandc/SlotB88,vanEmdeBoas90}, the adequacy of space and 
time cost models is judged in 
relationship to whether they reflect the corresponding cost models of TMs, the computational theory\footnote{We prefer \emph{computational theory} to \emph{computational model} in order to avoid the overloading of the word \emph{model} already in use for \emph{cost model}.} from which computational 
complexity  stems. Namely, a cost model for a computational theory $T$ is 
\emph{reasonable} if there are mutual simulations of $T$ and TMs (or another 
reasonable theory) working within:
\begin{itemize}
\item for \emph{time}, a polynomial overhead;
\item for \emph{space}, a linear overhead.
\end{itemize}
In many cases, the two bounds hold simultaneously for the same simulation, but 
this is not a strict requirement. The aim is to ensure that the basic hierarchy 
of complexity classes
\begin{equation}
\L \subseteq  \P \subseteq \PSPACE
\subseteq \EXP
\label{eq:hierarchy}
\end{equation}
can be equivalently defined on any reasonable theory, that is, that such classes are \emph{robust}, or theory-independent. Note a slight asymmetry: while for time the complexity of the required overhead (polynomial) coincides with the smallest robust time class ($\P$), for space the smallest robust class is logarithmic ($\L$) and not linear space. In particular, a simulation within a linear space overhead for logarithmic space implies that one needs to preserve logarithmic space.

A typical example of reasonable theory is \emph{random access machines} (RAMs), which are simulated by Turing machines within a \emph{quadratic} time overhead---which justifies the requirement for a polynomial (rather than linear) time overhead---needed for simulating random access on sequential access tapes. The very concept of reasonable cost model was introduced to study the relationship between the space consumption of RAMs and of Turing machines \cite{DBLP:journals/iandc/SlotB88}.

\paragraph{Locked Time and Space} On TMs, space cannot be greater than time, 
because 
using space requires time---we shall then say that space and time are 
\emph{locked}. If both the time and space cost models of a computational theory  
are reasonable, are they also necessarily locked? This seems natural, but it is 
not 
what happens in the $\l$-calculus, at least with respect to its abstract cost 
models. 

\paragraph{(Unreasonable) Abstract Machines} Before diving into the subtleties of cost models for the $\l$-calculus, we clarify concepts and terminology that might be confusing for the unacquainted reader.  The $\l$-calculus is an abstract setting relying on a single rule, $\beta$, which is non-deterministic (but confluent), and that is a non-atomic operation involving three meta-level aspects: 
\begin{enumerate}
\item The search of the $\beta$-redex, and
\item Capture-avoiding substitution, itself based on 
\item On-the-fly $\alpha$-renaming.
\end{enumerate}
 In order to study cost models for the $\l$-calculus, one usually fixes a deterministic evaluation strategy (typically call-by-name or call-by-value) and some micro-step formalism, typically an abstract machine, which simulates $\beta$ and explicitly accounts for the three meta-level aspects. Therefore, abstract machines are intermediate settings used to study simulations of the $\l$-calculus into TMs (or other reasonable theories).

We shall often say that a certain abstract machine is unreasonable for time or space. The use of \emph{unreasonable} in these cases is different than when referred to a computational theory (such as the $\l$-calculus or TMs). It means that the simulation realized by the machine, \emph{and only that simulation}, works in bounds that exceed those for reasonable time or space. In contrast, a theory is unreasonable if \emph{all} possible simulations exceed the bounds.

\paragraph{Why Studying Reasonability for the $\l$-Calculus?} As another clarifying preliminary, let us answer such a question. There are \emph{two} reasons. The \emph{theoretical} and \emph{external} motivation is being able to define the hierarchy \refequ{hierarchy} in the $\l$-calculus, hoping to help resolving the decades-long separation between the $\l$-calculus and mainstream computer science. The \emph{concrete} and \emph{internal} motivation is instead to better understand how to evaluate $\l$-terms, for which there are many different approaches and no general theory.

\paragraph{Cost Models for the $\l$-Calculus} For the $\l$-calculus there are a few natural candidates as cost models. Fix an evaluation strategy $\tox$. Then, we have three candidates for the time cost of a $\tox$ evaluation $\deriv:\tm_0\tox \tm_1 \tox \tm_2 \tox\ldots \tox \tm_n$ of length $n$:
\begin{center}
\begin{tabular}{rl}
\emph{Ink time}: & the time taken by printing out all the terms $\tm_i$ for $i\in\set{0,\ldots,n}$;
\\
\emph{Abstract time}: & the number $n$ of $\tox$ steps; 
\\
\emph{Low-level time}: & the time taken by an abstract machine implementing $\deriv$. 
\end{tabular}
\end{center}
For space, the ink and abstract notions coincide:
\begin{center}
\begin{tabular}{rl}
\emph{Ink/abstract space}: & the size of the largest $\l$-term among the $\tm_i$;
\\
\emph{Low-level space}: & the maximum space used by an abstract machine implementing $\deriv$. 
\end{tabular}
\end{center}
Let us first discuss time. Ink time is locked with ink space and easily shown to be reasonable. The problem with it is twofold: on the one hand, it is too generous a notion, since the cost of functional programs is not usually estimated in this way (in functional programming practice), on the other hand, it is difficult to reason with such a notion, as it is not abstract enough.

Low-level time is a better measure, which is locked with low-level space and easily proved to be reasonable. It differs from ink time in that abstract machines usually rely on some form of sharing to avoid managing the ink representation of all the terms of the evaluation sequence. The obvious drawback is that low-level time depends on the details of the implementation and on which optimizations are enabled. It is thus not abstract, nor fixed once for all, not even when the evaluation strategy is fixed, and not even when the abstract machine is fixed, because the choice of data structures for a concrete implementation usually affects the complexity. It is rather a family of cost models. In particular, it does not have the \emph{distance-from-implementative-details} that is distinctive of the $\l$-calculus. Similar arguments apply to low-level space.

Abstract time is the best notion, since it does not depend on an implementation and it is close to the practice of cost estimates, which does count the number of \emph{function calls}, that, roughly, is the number of $\beta$-steps.  The puzzling 
point is that it is \emph{not locked} with ink space: ink space can be \emph{exponential} in 
abstract time (independently of the strategy), a degeneracy known as \emph{size 
explosion}---we shall say that time and space are 
\emph{explosive}.

Is the $\l$-calculus reasonable? It certainly is, with respect to unsatisfying cost models. The question rather is whether abstract time is a reasonable cost model. This was unclear for a while, because of the intuition that reasonable cost models have to be locked.

The next paragraphs shall discuss the reasonability of abstract time and then explain the delicate aspects of reasonable space, but let us anticipate a point that shall seem to contradict what we just explained. About logarithmic space, there is an unresolvable tension. On the one hand, the principle of \emph{distance-from-implementative-details} would require abstract space---that is, ink space---to be reasonable in the logarithmic case. On the other hand, ink space is log-insensitive and thus cannot be sub-linear. The tension shall be resolved in this paper by abandoning the abstract principle and turning to a low-level notion of logarithmic space. While not ideal, this is the only available solution, at present. A more abstract cost model for logarithmic space would be preferable, in principle, but it cannot be ink space---because of the mentioned tension---and it is far from clear that an alternative abstract notion of space is possible, given the many difficulties discussed in this paper. In fact, before our work, even the existence of a low-level space cost model accounting for logarithmic space was a longstanding open problem. Implementations of the $\l$-calculus, indeed, are tuned for time-efficiency and inevitably use space in an unreasonable way. Therefore, simply adopting low-level space is \emph{not enough}.

\paragraph{Abstract Time is Reasonable} 
In the study of abstract time, what is delicate is the simulation of the 
$\l$-calculus into a reasonable theory, which typically is the one of random access machines 
rather than TMs. The difficulty stems from the explosiveness of abstract time, 
and requires a slight  paradigm shift. To circumvent the exponential 
explosion in space, $\l$-terms are usually evaluated \emph{up to sharing}, that 
is, in abstract machines with sharing that compute shared representations of 
the results. These representations can be exponentially smaller than the 
results themselves: explosiveness is then encapsulated in the sharing unfolding 
process (which itself has to satisfy some reasonable properties, see 
\cite{accattoli_leftmost-outermost_2016,DBLP:conf/ppdp/CondoluciAC19}). The 
number of $\beta$ steps (according to various evaluation strategies) then turns 
out to be a reasonable time cost model (up to sharing), despite  explosiveness. Resorting to sharing amounts to studying abstract time assuming that the underlying notion of space is \emph{low-level space} rather than ink space (which forbids sharing). It is important to point out that the adopted notion of low-level space is \emph{not} proved to be space reasonable. 

The first such result is for weak evaluation by Blelloch and Greiner 
\cite{DBLP:conf/fpca/BlellochG95}, then extended to strong call-by-name evaluation by 
Accattoli and Dal Lago \cite{accattoli_leftmost-outermost_2016}, and very 
recently transferred to strong call-by-value by Accattoli et al. 
\cite{DBLP:conf/lics/AccattoliCC21} and to a variant of strong call-by-value and to strong call-by-need by Biernacka et al. \cite{DBLP:conf/ppdp/BiernackaCD21,DBLP:journals/pacmpl/BiernackaCD22}.


\paragraph{Ink Space.} For space, the difficult direction is, instead, the 
simulation of TMs in the $\l$-calculus. TMs are space-minimalist, as their only 
data structure, the tape, is a flat data structure that \emph{juxtaposes} cells rather then linking them via pointers---this is one of the key points. Motivated by time-efficiency, all abstract 
machines for the $\l$-calculus rely instead on linked data structures, and---as already pointed out in the introduction---the 
linking pointers add a logarithmic factor to the overhead for the simulation of 
TMs that is space unreasonable. 
Therefore, reasonable space requires to  evaluate without using linked data 
structures when they are not needed, as it is the case for the encoding of TMs.

It is a recent insight by Forster et al. \cite{cbv_reasonable} that evaluating without \emph{any} data structure (via plain rewriting, without sharing) is reasonable for linear ink space even if unreasonable for abstract time (because of explosiveness). An interesting aspect of this result is that it establishes that the rigid (\ie non-postponable) management of garbage collection provided by the $\l$-calculus is enough for reasonable linear space.

\paragraph{Pairing up Abstract Time and Ink Space} Forster et al. 
\cite{cbv_reasonable} also show a surprising fact.  Given two simulations, one 
that is reasonable for ink space but not abstract time, and one that is reasonable for abstract time 
but not ink space, there is a smart way of interleaving them as to obtain 
reasonability for abstract time and ink space \emph{simultaneously}. Their result therefore 
shows that, surprisingly, a computational theory can be reasonable for \emph{unlocked} and explosive notions of
time \emph{and} space. Whenever their interleaving technique applies, however, it also induces a second (low-level) reasonable cost model for space that is locked with the time cost model.

Forster et al.'s is a remarkable contribution to the external interest in reasonable cost model (that is, for defining complexity classes) but their interleaving machine is not a machine that one would use for concrete implementations. That is, the result does not help in the internal (\ie implementation-oriented) understanding of reasonable space.

\paragraph{Work vs Ink Space} A puzzling fact is that sub-linear 
space \emph{cannot} be measured using the ink space cost model, and is then \emph{not} 
covered by Forster et al.'s result. The reason is that if space is the maximum 
size of terms in an evaluation sequence, the first of which contains the input, 
then space simply \emph{cannot be} sub-linear. How could one accommodate for 
\emph{logarithmic} reasonable space? As already explained in the introduction, 
one needs \emph{log-sensitivity}, that is, a distinction between an immutable \emph{input space}, which is not counted for space complexity, and a (smaller) mutable \emph{work space}, that is counted.  It is then natural to switch, again, from ink space to a form of low-level space. Namely, one considers the work space as the low-level space used by an input-preserving abstract machine. 

\paragraph{Low-Level Space}  Turning to low-level space, however, does not immediately provide a solution. As we already mentioned, most abstract machines rely on sharing mechanisms which allow one to prove that abstract time is reasonable while, unfortunately, also make those simulations space unreasonable, as they rely on pointers which add an unreasonable space overhead. It was thus unclear whether low-level space could be reasonable \emph{at all}. The machines for time reasonability realize sharing via environments. The community believed that abstract machines that rely on so-called \emph{tokens} (related to the geometry of interaction), rather than on environments, might provide reasonable notions of low-level space, but it was showed by Accattoli et al. that this is not the case \cite{DBLP:conf/lics/AccattoliLV21}. 

The intuition that one could use environments and yet disable their sharing, as in the \SpKAM, is a contribution of the present work. Our main result is that the low-level space of the \SpKAM---from now on referred  to as \emph{work space}---is indeed reasonable.

\paragraph{\SpKAM and Time} We also study the time behavior of the \SpKAM. Adopting flat environments implies giving up environment sharing, which---we show with an example---makes the \SpKAM unreasonable for abstract time. The situation is then a familiar one: abstract time and work space are explosive. Work space is in this respect a conservative refinement of ink space. 
On the other hand, we prove that the \emph{low-level time} of the \SpKAM is an alternative \emph{reasonable} cost model, obviously locked with work space. 

\paragraph{Is Work Space a Good Cost Model?} It might be argued that work space, being a low-level notion of space, is unsatisfactory. While this is partly true, we believe that it would be an unfair assessment. A first argument against this criticism is that it is unclear what would be the alternative, given that ink space is ruled out by its log-insensitivity. A second argument is that the requirements for reasonable logarithmic space are so strict that they almost \emph{dictate} how the \SpKAM has to be. There does not seem to be much room for alternative designs. That is, it is a cost model given by an abstract machine, but it is a quite special machine: the criticism to low-level cost models, amounting to the fact that different machines would provide different notions of cost, does not seem to apply here. Moreover, we also prove that the same cost model works also for call-by-value.

\paragraph{A Notion of Abstract Space} The space of the \SpKAM is obtained by taking the maximum number of closures during its execution, and weighing every closure with the size of its sub-term pointer (which is not necessarily logarithmic in the size of the input). Taking only the number of closures---ignoring the size of pointers---provides a more abstract notion of \emph{closure space} that is not the actual cost model (if adopted as space cost model, the simulation of the $\l$-calculus in TMs has an additional and unreasonable logarithmic overhead) and yet it provides a useful abstraction. For instance, this closure space is stable by $\eta$-equivalence, as we show. Additionally, in a companion paper we show how to measure closure space via multi types \cite{ICFP2022}, thus abstracting away the low-level details of the machine.

\paragraph{What Does Our Result Say for Concrete Implementations?} The \SpKAM is a realistic machine as long as logarithmic space is the main concern. The example showing that abstract time and work space are explosive also confirms that---as for time---\emph{space reasonable} and \emph{space efficient} are in some sense different concepts: the example uses exponentially less space (and it is thus more efficient) if environment pointers (which are space unreasonable) are enabled. The key point is that the $\l$-terms of the example are not in the image of the encoding of TMs: for them, environments sharing provides an exponential speed-up (for both time and space), while on the image of the encoding it only provides a slow-down. In other words, the \SpKAM is efficient for first-order, that is, TMs-like logarithmic space computations, but beyond them it can be \emph{desperately} space inefficient.

\section{Bird's Eye View of the Problems and Their Solutions}
\label{sect:critical-points}
There shall be six critical points which---only when are all solved simultaneously---shall allow us to achieve the main result of this paper, namely the fact that the space of the \SpKAM is a reasonable space cost model for the $\l$-calculus accounting for logarithmic space. None of them is particularly difficult to deal with, but each one of them is critical, and solving all of them at once does make the solution somewhat involved. In particular, a reader can easily get lost and lose sight of where the essence of the result is. Here we give an overview of these critical points and of how we address them. One of our main contributions is the careful identification of these points.

Let's start by listing some key points of the simulation of TMs by an abstract machine.
\begin{itemize}
\item \emph{Pointers}: TMs do not use pointers, while in the $\l$-calculus and its execution via abstract machines, three kinds of pointers play a role. Some are essential and some are dangerous for logarithmic reasonable space.
\begin{itemize}
\item \emph{Variable pointers}: variables are pointers to their binders. Such pointers are unavoidable. Different term representations induce pointers with different properties.
\item \emph{Sub-term pointers}: abstract machines often manipulate pointers to sub-terms rather than the sub-terms themselves. This is essential for logarithmic space.
\item \emph{Data pointers}: data structures are often implemented as pointer-based structures, such as pointer-based linked lists. These pointers are also often responsible for forms of data-structures sharing, which is essential in time-reasonable computations. They are dangerous for reasonable space.
\end{itemize}
\item \emph{Tapes}: the input tape of the TM is meant to be represented as part of the initial code fed to the abstract machine, the rest of which is dedicated to represent the transition function of the TM. The work tape is instead represented by the data structures of the abstract machine, namely the applicative stack and the environments. 

\item \emph{Overhead}: there are three points in which the encoding and the simulation incur some overhead.
\begin{enumerate}
\item \emph{Input representation}: this is due to the representation of input strings $\str$ as $\l$-terms. One expects such an overhead to be $\bigo{\size\str}$, that is, linear in the length $\size\str$ of the input string. It depends on the fixed encoding of strings \emph{and} on the adopted notion of variable pointer, that is, on the fixed representation of $\l$-terms. 

Technically speaking, the linear overhead here is not mandatory: a reasonable simulation is possible also if the input representation overhead is not linear (it can be polynomial, usually $\bigo{\size\str\log\size\str}$, but not exponential), as long as sub-term pointers and the overhead for scrolling the input tape (discussed below) are both logarithmic in the size of $\str$ (rather than in the size of the encoding of $\str$ as a $\l$-term). We shall however show how to obtain a linear input representation overhead, as it seems natural and it reinforces the trust in the correctness of the result.

\item \emph{(Work) tape representation}: tapes juxtapose cells using zero space for such a juxtaposition. For the input tape, in fact it is not mandatory that the representation overhead is linear, as discussed in the previous point. The representation of the work tape, instead, has to be linear. Since it is given by the data structures of the abstract machine, it usually rests on data pointers, which add an unreasonable logarithmic overhead. It is important to point out that here the additional logarithmic factor is not in the size of the input: data pointers are generated along the execution and thus they depend on the amount of space currently in use. Thus data pointers add a bureaucratic space factor that is logarithmic on the space used by TMs.
\item \emph{Tape scrolling}: this is the space overhead needed to simulate in the $\l$-calculus the moving of the TM head over the tape, with respect to the size of the tape. It is a notion that is tricky to define precisely because, in the reference encoding of TMs, a single move over the tape incurs only a constant space overhead. It is better intended as the space overhead generated by scrolling the \emph{whole} tape from, say, left to right. Most TM executions never do such a mono-directional scrolling, but they nonetheless incur such an overhead during their continuous moving over the tape.

For our result, we shall need a space overhead for scrolling the input tape that is logarithmic in its size, while for the work tape a linear space overhead (in its size) is enough. Clearly, the requirement for the input tape is specific to the study of logarithmic space. We shall see, however, that (independently of logarithmic space) the adoption of a low-level space cost model (instead of ink space) forces us to be very careful with respect to the linear space scrolling overhead for the work tape, since the use of machine pointers can easily add an unreasonable logarithmic factor.

Here various ingredients play a role: the way in which tapes are encoded as $\l$-terms, the space optimizations of the abstract machine as well as the way in which sub-term pointers are organized.

\end{enumerate}
\end{itemize}
We now list the ingredients that allow us to meet the requirements for the three overheads.
\begin{enumerate}
\item \emph{Input representation}. Here the optional linear space overhead is achieved by:
\begin{enumerate}
\item Using the Scott encoding of strings, which is also used in the standard enconding of TMs; 
\item Adopting a representation of $\l$-terms for which the Scott encoding uses variable pointers of size proportional to the size of the \emph{alphabet} $\alpone$ of the input string $\str$, rather than to the size of $\str$ itself (which would occur if one would represent variables via textual names and enforce Barendregt's convention, or if one would represent $\l$-terms as proof nets or string diagrams). The actual representation is left unspecified, but for instance de Bruijn indices would do. This is discussed at the beginning of \refsect{moving}.
\end{enumerate}
If the input representation has a more than linear overhead (typically $\bigo{\size\str\log\size\str}$ for input string $\str$), the logarithmic size of sub-terms pointers is obtained by adopting a specific address scheme for the \SpKAM, what we call \emph{left addresses}.  This is discussed in \refsect{lambda-addresses}.
\item \emph{Work tape representation}: in this case the linear space overhead is achieved by totally disabling the use of data pointers in the abstract machine. The stack and the environments are then represented via contiguous cells, as strings, without linking the different cells via data pointers. This is discussed in \refsect{naive-kam} and it is unusual for abstract machines. It has the consequence of removing the sharing of data structures needed for time reasonability. As already mentioned, our \SpKAM  shall be time unreasonable, with respect to the abstract cost model for time.

\item \emph{Tape scrolling}. For scrolling tapes with the required overhead, both the encoding and the abstract machine have to be modified.
\begin{itemize}
\item \emph{Encoding}. The standard encoding of (single tape) TMs is based on a representation of tapes, which comes with costant-time read operations but linear space scrolling overhead. This would simply forbid our result, as the input tape scrolling overhead has to be logarithmic. Therefore, we modify the encoding and adopt a different representation for the input tape---dubbed \emph{mathematical representation} following van Emde Boas \cite{DBLP:conf/sofsem/Boas12}---coming with polynomial time read operations (thus worse for time) and logarithmic space scrolling overhead (but better for space). For the work tape, we keep the standard representation because the work tape requires only a linear space scrolling overhead, and for write operations the standard representation is better suited. This aspect is discussed in \refsect{moving}.
\item \emph{Abstract machine}. There are two critical points here.
\begin{enumerate}
\item \emph{Sub-term pointer addressing}. We mentioned that the standard representation of tapes comes with linear space tape overhead. This is indeed true, but it requires a low-level analysis of the size of sub-term pointers, showing that some of them are of constant rather than logarithmic size, and it holds only if sub-term pointers satisfy some properties. Such properties are satisfied by the \emph{left addressing scheme} that we adopt for the \SpKAM. This aspect is discussed in \refsect{moving}.
\item \emph{Space awareness}. Environment-based abstract machines are usually developed having time efficiency in mind and neglecting space. For achieving the required tape scrolling overheads, they have to be tuned so as to be parsimonious with respect to space. There are two optimizations that have to be realized, and that are formally defined in \refsect{spkam}:
\begin{enumerate}
\item \emph{Eager garbage collection}. In absence of garbage collection, abstract machines have an inflationary, ever increasing use of space. In particular, they allocate space at every $\beta$-step, which is the \emph{tick} of their abstract notion of time. Therefore, their use of space is entangled with their use of time, which is unreasonable for space. Garbage collection is needed in order to disentangle space from time. Eager collection is needed to maximize such disentanglement. In addition, garbage collection has to be implemented in a naive and time-inefficient way, because smart techniques such as reference counters would incur a space overhead---for counters---that would be unreasonable.
\item \emph{Unchaining}. Disentangling space from time by freeing garbage is not enough. One also needs to ensure that non-garbage space is not redundant, by avoiding silly indirections, sometimes referred to as \emph{space leaks}. This is achieved by the \emph{environment unchaining} optimization, which at the moment of creating a closure checks whether it is simply referring to another closure (which happens when the term part of the closure is a variable), in which case it is short-cut.
\end{enumerate}
\end{enumerate}
 \end{itemize}
\end{enumerate}

Summing up, there are three main issues, the last of which actually composed by four sub-problems, for a total of six critical points. We tame them as follows:
\begin{enumerate}
\item \emph{Input representation overhead}: handled by the Scott encoding and the representation of $\l$-terms;
\item \emph{Work tape overhead}: handled by the disabling of data pointers;
\item \emph{Tape scrolling overhead}, composed by the following sub-points:
\begin{enumerate}
\item \emph{Tape representation overhead}: handled by the mathematical representation of the input tape;
\item \emph{Sub-term pointer addressing}: handled by the left addressing of sub-term pointers;
\item \emph{Space awareness}, composed by the following sub-points: 
\begin{enumerate}
\item \emph{Space entangled with time / re-usability of space}: handled by eager garbage collection;
\item \emph{Space leaks / indirections}: handled by environment unchaining.
\end{enumerate}
\end{enumerate}
\end{enumerate}

\section{The \texorpdfstring{$\l$}{λ}-Calculus, Term Sizes, and Addresses}
\label{sect:lambda-addresses}
In this section, we define the $\l$-calculus and discuss the delicate aspects of how to measure to size of terms and notions of addresses for constructors in terms.

\paragraph{$\l$-Calculus.} Let $\mathcal{V}$ be a countable set 
	of variables. 
	Terms of the \emph{$\lambda$-calculus} $\Lambda$ are given by:
	\[\begin{array}{rrcl}
	\textsc{$\l$-terms} & \tm,\tmtwo,\tmthree & \grameq & x\in\mathcal{V}\midd 
	\lambda x.\tm\midd 
	\tm\tmtwo.
	\end{array}\]
	\emph{Free} and \emph{bound variables} are defined as 
	usual: $\la\var\tm$ binds $\var$ in $\tm$. 
	Terms are considered modulo $\alpha$-equivalence. Capture-avoiding 
	(meta-level) substitution of 	all the free occurrences of $\var$ for 
	$\tmtwo$ in $\tm$ is noted $\tm\isub\var\tmtwo$. The computational rule is 
	\emph{$\beta$-reduction}:
\begin{center}
$\begin{array}{cccc}
(\la\var\tm)\tmtwo &\tob& 
\tm\isub\var\tmtwo\end{array}$\end{center}
which can be applied anywhere in a $\l$-term. Here, a strategy $\to$ shall be a sub-relation of $\tob$. Given a relation $\to$, its reflexive-transitive closure is noted $\to^{*}$, and a $\l$-term $\tm$ is $\to$-normal if there are no $\tmtwo$ such that $\tm \to \tmtwo$. A $\to$-sequence is a pair of $\to^{*}$-related terms, often noted $\deriv:\tm \to^{*}\tmtwo$, and it is \emph{complete} if $\tmtwo$ is $\to$-normal.

	\paragraph{The Constructor and Code Sizes of $\l$-Terms} The \emph{(constructor) size} of a $\l$-term  
is defined as follows:
\[\begin{array}{c@{\hspace{.17cm}}c@{\hspace{.17cm}}c@{\hspace{.6cm}}c@{\hspace{.17cm}}c@{\hspace{.17cm}}c@{\hspace{.6cm}}c@{\hspace{.17cm}}c@{\hspace{.17cm}}cccc}
\size\var & \defeq & 1
&
\size{\tm\tmtwo} & \defeq & \size\tm + \size\tmtwo +1
&
\size{\la\var\tm} & \defeq & \size\tm+1
\end{array}\]
The \emph{code size} $\codesize\tm$ of a $\l$-term $\tm$ is instead bounded by $\bigo{ \size\tm\log\size\tm}$. The idea is that, when terms are explicitly 
represented, variables are some abstract kind of pointer (de Bruijn 
indices/levels, names, or actual pointers to the syntax tree), of size  
logarithmic in the number $\size\tm$ of constructors  of $\tm$. Then a 
term with $n$ constructors requires space $\bigo{n\log n}$ to be represented. For 
our study, it is important to stress the difference between the constructor size $\size\tm$ and the code size 
$\codesize\tm$, because given a binary input string $\istr$, at first sight its encoding 
$\tm_{\istr}$ as a $\l$-term satisfies $\size{\tm_{\istr}} = 
\Theta(\size\istr)$ and $\codesize{\tm_{\istr}} = 
\bigo{\size\istr\log\size\istr}$, and so 
$\codesize{\tm_{\istr}}$ has an additional (unreasonable) logarithmic factor. We refer to this issue as \emph{the code-constructor gap}.

\paragraph{Notions of Space and Size} The relevance of the code-constructor gap depends on the notion of space under consideration. If one adopts ink space (that is, the max size of $\l$-terms) as cost model, and thus is concerned with at least \emph{linear} space, the gap is relevant, because the size of terms can indeed have an additional unreasonable logarithmic factor.

In \refsect{moving}, we shall encode strings in the $\l$-calculus using the Scott encoding, which has the property that, with respect to some concrete representations of terms (such as de Bruijn indices), variable pointers have \emph{constant size}, so that $\codesize{\tm_{\istr}} =\size{\tm_{\istr}} = 
\Theta(\size\istr)$, thus removing the code-constructor gap.

If instead one studies logarithmic space, and thus adopts a low-level cost model where the input is separated from the work space, the gap is---perhaps surprisingly---less relevant. It is indeed perfectly possible that the encoding of the input string $\istr$ has size $\bigo{\size\istr\log\size\istr}$, because the size of the input is not counted for the space usage, and still the space cost is logarithmic in $\size\istr$. What is important, indeed, is that pointers to the input---that are inevitably used if the space complexity is logarithmic---have size $\bigo{\log\size\istr}$ (rather than $\bigo{\log(\size\istr\log\size\istr)}$), which is easy to obtain, as we explain in the next paragraph. 

Since log-sensitivity allows us to use term representations also when the encoding of strings suffers from the code-constructor gap, we do not fix a concrete representation of terms. Such a relaxed approach is harmless, because, as we mentioned, there is at least one term representation (namely, de Bruijn indices) for which strings are encoded with no gap. It is however convenient because it allows us to use names rather than indices, improving readability, while obtaining more general results at the same time.

\paragraph{Addresses in Terms} We shall need addresses in terms for two purposes: comparing constructors (the reason is explained in the next section) and sub-term pointers. We then adopt two address schemes, as the two purposes need different properties. For constructor comparisons we adopt \emph{tree addresses}, while for sub-term pointers we rely on \emph{left addresses}, as we now explain.

It is standard that a constructor in a term can be identified via a path in the syntax tree of the term, and that such a path can be described as a binary address. We shall use this notion of tree address for comparing constructors in terms. For that, we also need to fix a way of comparing variables. We choose it to be de Bruijn indices. Note that this choice does not necessarily force the term representation itself to be de Bruijn indices, because given another representation one can usually compute the de Bruijn index of a variable in logarithmic space.

\begin{defi}[Tree addresses]
\label{def:tree-address}
A \emph{tree address} $\addr$ is a binary string. The constructor 
	$\tm|_{\addr}$ of a closed term $\tm$ at $\addr$ is given by the following partial function defined by structural induction:
	\[\begin{array}{rcl@{\hspace{1.2cm}}rlc}
	\staddr\var\ems & \defeq&  \db(\var) & 
		\staddr\var{b\cons\addr} & \defeq&  \bot \\ 

	\staddr{(\tm\tmtwo)}{\ems} & \defeq & @ &
	\staddr{(\tm\tmtwo)}{0\cons\addr} & \defeq &\staddr\tm\addr\\[3pt]
	\staddr{(\la\var\tm)}{\ems} & \defeq & \lambda & 
		\staddr{(\tm\tmtwo)}{1\cons\addr} & \defeq &\staddr\tmtwo\addr\\[3pt]
&&&	\staddr{(\la\var\tm)}{b\cons\addr} & \defeq &\staddr\tm\addr 
	\end{array}\]
where $b\in\set{0,1}$, $\db(\var)$ is the de Bruijn index corresponding to $\var$ (written in binary) if the whole term 
$\tm$ were written using de Bruijn indexes, and $\bot$ denotes that $\tm|_{\addr}$ is undefined.
\end{defi}
Tree addresses are a convenient way to point at a constructor or a sub-term. Unfortunately,  they are not in general logarithmic in the size of the term. It is enough to consider terms the tree structure of which is linear (such as $n$ applications of the identity to itself), so that the structural address of some constructors is linear in the size of the term.

We then consider also a second notion, \emph{left addresses}.

\begin{defi}[Left address]
\label{def:left-address}
Let $\tm$ be a $\l$-term and $c$ be a constructor of $\tm$ (identified by a context or by a tree address). The left address of $c$ in $\tm$ is simply the index (written in binary) of the constructor in the enumeration of the nodes generated by an in-order visit of its syntactic tree, that is, a visit that: 
\begin{enumerate}
\item On applications $\tmthree\tmfour$, it first enumerates the constructors in the left sub-tree $\tmthree$, then the application $@$, and last the constructors in the right sub-tree $\tmfour$, recursively;
\item On abstractions $\la\var\tmthree$, it first enumerates the abstraction $\lambda\var$ and then the constructors in $\tmthree$.
\end{enumerate}
\end{defi}

For instance, the constructors of $\var((\la\vartwo\varthree)\varfour)$ are enumerated in the following order $\var$, $@$, $\lambda\vartwo$, $\varthree$, $@$, and $\varfour$.

Left addresses are always logarithmic in the constructor size of a term (even when the code size is bigger than the constructor size), and they are nothing else than the left-to-right position of the constructor in the string representation of the $\l$-term, whence their name. 

Another property of left addresses that shall be crucial in our complexity analysis is that for an application $\tmtwo\tmthree$, the size of the addresses in $\tmtwo$ is independent from $\tmthree$. Note that this would hold also if one enumerates the constructors according to a visit in pre-order, but not in post-order. 

We avoid left addresses for comparisons because we shall have to compare constructors extracted from machine states, and the extraction from states is considerably simpler if done with respect to tree addresses.

\section{Reasonable Preliminaries}
\label{sect:prelim}


In the study of reasonable cost models for the $\l$-calculus, it is customary 
to show that the $\l$-calculus simulates TMs reasonably, and 
conversely that the $\l$-calculus can be simulated reasonably by TMs\footnote{In the study of reasonable time, random access machines (RAMs) rather than TMs are usually the target of the encoding of the $\l$-calculus, because RAMs are reasonable and easier to manage for the time analysis of algorithms. For our study, it shall be simpler to use TMs. Therefore, we avoid references to RAMs in the paper.} \emph{up to sharing}. 
Since 
space is more delicate than time, we fix the involved theories and their cost 
measures carefully. 

\paragraph{Turing Machines.} We adopt TMs working on the boolean 
alphabet $\Bool \defeq \set{0,1}$. For a study of logarithmic space 
complexity, one has to distinguish  \emph{input} space and  
\emph{work} space, and to \emph{not} count the input space for space 
complexity. On TMs, this amounts to having \emph{two} tapes, a read-only 
\emph{input tape} on the alphabet  $\Boolin \defeq \set{0,1,\stsym,\ensym}$, 
where $\stsym$ and $\ensym$ are delimiter symbols for the start and the end of 
the input binary 
string, and an ordinary read-and-write \emph{work 
tape} on the boolean alphabet extended with a blank symbol $\Boolb \defeq 
\set{0,1,\elemblank}$.  To keep things simple, we use TMs without any output 
tape. 
The machine rather has two final states $\state_{0}$ and $\state_{1}$ encoding a 
boolean output---there are no difficulties in extending our results to TMs with 
an 
output tape. 

Let us call these machines \emph{log-sensitive TMs}.
A log-sensitive TM $M$ computes the function 
$f:\Bool^*\rightarrow\Bool$ by a sequence of transitions $\run:\config_{\tt in}^M(\inputstr)\to^n 
\config_{\tt fin}^M({f(\istr)})$ where $\istr\in\Bool^*$, $\config_{\tt in}^M(\inputstr)$ is 
an initial configuration of $M$ with input $\istr$ and $\config_{\tt fin}^M({f(\istr)})$ 
is a final configuration of $M$ on the final state $\state_{f(\istr)}$. We define 
the time of the sequence $\run$ as $T_{\textrm{TM}}(\run)\defeq n$ and the space 
$S_{\textrm{TM}}(\run)$ 
of $\run$ as the maximum number of cells of the work tape used by 
$\run$.

While we shall study in detail the 
encoding of TMs in the $\l$-calculus, which is the difficult direction, we are not going to lay out 
the details of the simulation of the $\l$-calculus on TMs, as it is conceptually simpler. We shall provide an 
abstract machine for the $\l$-calculus and study its complexity using standard 
considerations for algorithmic analysis, but 
without 
giving the details of the simulation. 


\paragraph{Reasonable Cost Models.} We give the simulations and the bounds 
we shall consider for reasonable cost models for the $\l$-calculus.
 The case of the $\l$-calculus is
peculiar  because its simulation on 
TMs is up to sharing, that is, it computes a \emph{compact representation} of 
the result, not the encoding of the result itself. Therefore, some further 
conditions about such a representation are required, expressed via a decoding function. The one about 
space is new and motivated by the paragraphs after the definition.

\begin{defi}[Reasonable cost model for the $\l$-calculus]
\label{def:reasonable}
A \emph{reasonable time (resp. space) cost model} for the $\l$-calculus is an evaluation strategy $\to$ together with a function $T_{\l}$ (resp. $S_{\l}$) from complete $\to$-sequences $\deriv: \tm \to^{*} \tmtwo$ to $\nat$ such that:
\begin{itemize} 
\item \emph{TMs to $\l$}: there is an encoding $\enc\cdot$ of binary strings and TMs 
into the $\l$-calculus such that if the run $\runtwo$ of a TM $M$ on input 
$\istr$ 
ends on a state $\state_{\bool}$ with $\bool\in\Bool$, then there is a complete 
sequence 
$\deriv: \enc M\, \enc{\istr} \to^{*} \enc\bool$  such that $T_{\l}(\deriv) = 
\bigo{poly(T_{\textrm{TM}}(\runtwo), \size\istr)}$ (resp. space 
$S_{\l}(\deriv) = \bigo{S_{\textrm{TM}}(\runtwo) + \log\size\istr}$). Moreover:
\begin{itemize}
\item \emph{Complexity of the encoding:} computing $\enc\istr$ is done in time $\bigo{poly(\size{\istr})}$, and space $\bigo{\log(\size{\istr})}$ (measured on a reasonable model).
\end{itemize}

\item \emph{$\l$ to TMs}: there is an encoding $\underline\cdot$ of $\l$-terms into binary strings, 
a TM $M$, and a decoding $\unf\cdot$ of final configurations for $M$ such that 
if $\deriv: \tm \to^{*} \tmtwo$ is a complete sequence, then the execution 
$\runtwo$ of 
$M$ on input $\underline\tm$ produces a final configuration $C$ such that $\unf C=\tmtwo$ in time 
$T_{\textrm{TM}}(\runtwo) = \bigo{poly(T_{\l}(\deriv), \size\tm)}$ (resp. 
space 
$S_{\textrm{TM}}(\runtwo) = \bigo{S_{\l}(\deriv) + \log\size\tm}$). Moreover:
\begin{itemize}
	\item \emph{Complexity of the encoding:} computing $\underline\tm$ is done in time $\bigo{poly(\size{\tm})}$, and space $\bigo{\log(\size{\tm})}$ (measured on a reasonable model).
	\item \emph{Polytime result equality}: for all final configurations 
	$C'$ of $M$, 
	testing whether $\unf C = \unf {C'}$ can be done in 
	time $\bigo{poly(\size C,\size{C'})}$ (measured on a reasonable model).
	\item \emph{Logarithmic space constructor equality}: given the initial term $\tm$, the final configuration $C$, and a tree address 
	$\addr$, computing $(\unf C)|_\addr$ has space complexity
	$\bigo{\log\size\tm + \log\size C+ \log{\size\addr}}$ (measured on a reasonable model).
\end{itemize} 
\end{itemize}
\end{defi}

\paragraph{Explaining the Complexity of the Encoding Conditions} The two requirements on the complexity of the encodings ensure that no unreasonable overhead is hidden inside the enconding functions. In particular, in going from $\l$ to TMs, we do not want the encoding function to ``execute'' the encoded $\lambda$-term.

Note however that these conditions are somewhat vague, and necessarily so: they ask complexity bounds for the encoding of a computational theory into another, but in which theory are those bounds to be taken? To be precise, one should be able to express both theories inside a third reasonable theory... but how has this third theory been shown reasonable? There is no real way out, because the study of reasonable cost models is \emph{preliminary} to the definition of notions of complexity valid \emph{across theories}. We added the conditions anyway to intuitively point out that encodings doing more than just translating from one formalism to the other should be ruled out.

\paragraph{Explaining the Equality Conditions} Polytime result equality ensures that 
compact results (\ie final configurations) can be compared for equality of the 
underlying unshared result without having to unshare them, which might take 
exponential time. The requirement is with respect to another compact result 
because polynomiality in $\size\state$ and $\size\tmtwo$ is useless if 
$\size\tmtwo$ is exponential in $\size\state$. For sharing as explicit 
substitutions/environments, polytime result equality was first proved by Accattoli 
and Dal Lago \cite{DBLP:conf/rta/AccattoliL12}, and then showed linear (on random access machines with constant-time pointers manipulation) by 
Condoluci et al. \cite{DBLP:conf/ppdp/CondoluciAC19}.
The \emph{logarithmic space constructor equality} requirement is new, and ensures that the compact representation allows one to 
access atomic parts of the result as if the result were unshared. To motivate 
it, consider a result $\tmtwo$  that is 
exponentially bigger than its compact representation as a final configuration $C$, which is what 
happens with size exploding families. The requirement ensures that to read 
atomic parts of $\tmtwo$ out of $C$ there is no need to unfold the sharing 
in $C$, which might require space exponential in $\size C$. Moreover, the requirement essentially states that, to compute $(\unf C)|_\addr$, one has to use only a constant amount of pointers to $\tm$, $C$, and $\addr$. In particular, the requirement is independent of the cost of evaluation. This aspect rules out degenerated simulations 
where a part of the work is hidden in the representation of the result (think of 
the simulation that does nothing and leaves all the work to the 
de-compactification of the result). In our case, the proof that constructor equality can be tested in logarithmic space shall be straightforward, which contrasts with polytime result equality, that requires non-trivial algorithms.

\paragraph{Single Inputs, not Input Lengths} Note that our cost assignments concern \emph{runs}, thus a single input of a given length, rather than the max over all inputs of the same length, as it is usually done in complexity. The study of cost models is somewhat finer, the max can be considered afterwards.

\section{\texorpdfstring{$\l$}{λ}-Calculus and Abstract Machines} 
A term is \emph{closed} when 
	there are no free occurrences of variables in it. 	The operational semantics---that is, the evaluation strategy---that we adopt in most of the paper is \emph{weak head evaluation} $\towh$, defined as follows:
	\[
(\la\var \tm) \tmtwo \tmthree_1 \ldots 
\tmthree_h \ \ \towh \ \ \tm \isub\var \tmtwo 
\tmthree_1 \ldots \tmthree_h.
\]
We further restrict the setting by considering only closed terms, and refer to 
our framework as \emph{\ccallbn} (shortened to \ccbn). Basic well known facts 
are that in \ccbn normal forms are precisely abstractions and that 
$\towh$ is 
deterministic.

\paragraph{Abstract Machines Glossary.}  In this paper, an \emph{abstract 
machine} $\mach = (\States, \tomach, \compil\cdot, \unf\cdot)$ is a transition 
system $\tomach$ over a 
set of states, noted $\States$, together with two functions:
\begin{itemize}
\item \emph{Initialization.} $\compil\cdot:\Lambda\to\States$,  turning $\l$-terms into 
states;
\item \emph{Decoding.} $\unf\cdot:\States\to\Lambda$, turning states into 
$\l$-terms and such that $\unf{\compil\tm}=\tm$ for every $\l$-term. 
\end{itemize}

A state $\state\in\States$ is composed by the \emph{(immutable) code} $\tm_{0}$, the \emph{active term} $\tm$, and some data structures. Since the code never changes, it is usually omitted from the state itself, focussing on \emph{dynamic states} that do not mention the code.
A state $\state$ is \emph{initial} for $\tm$ if $\compil\tm = \state$. In this paper, $\compil\tm$ is always defined as the state having $\tm$ as both the code and the active term and having all the 
data structures empty. Additionally, the code $\tm$ shall always be \emph{closed}, without further mention. A state is \emph{final} if no transitions apply.
 A \emph{run} $\run: \state \tomach^*\statetwo$ from state $\state$ to state $\statetwo$ is a possibly empty sequence of transitions, the length of which is noted 
$\size\run$. If $a$ and $b$ are transitions labels (that is, $\tomachhole{a}\subseteq \tomach$ and 
$\tomachhole{b}\subseteq \tomach$) then $\tomachhole{a,b} \defeq \tomachhole{a}\cup \tomachhole{b}$, $\sizeparam\run a$ 
is the number of $a$ transitions in $\run$, and $\sizeparam\run{\neg a}$ is the size of transitions in $\run$ that are 
not $\tomachhole{a}$. An \emph{initial run} is a run from an initial state $\compil\tm$, and it is also called \emph{a run from $\tm$}. A state $\state$ is \emph{reachable} if it 
is the target state of an initial run. A \emph{complete run} is an initial run ending on a final state.

\paragraph{Abstract Machines and Abstract Implementations} Abstract machines do 
not specify how the (abstract) data structures of the machine are meant to be 
realized. In general an abstract machine can be implemented in various ways, 
inducing different, possibly incomparable performances. Therefore, it is not 
really possible to study the complexity of the machine without some assumptions 
about the implementation of its data structures. The study of reasonable 
space requires to take into account the use, and especially the \emph{size}, of 
pointers, which is instead usually omitted in the coarser study of reasonable 
time. In that context, indeed, pointers are assumed to be manipulable in constant 
time, which is safe because the omitted logarithmic factors are irrelevant for the 
required polynomial overhead. The more constrained study of space instead requires 
to clarify the cost of pointers.
 Switching to such a level of 
detail, apparently innocent gaps between the specification of a machine 
and how it is going to be implemented suddenly become relevant.

To account for these subtleties, we specify for every construct of the abstract 
machine the space that it requires, and for every transition the time that it 
takes, both asymptotically. The adoption of such an \emph{abstract 
implementations} is in our opinion  a contribution of this paper towards a more 
solid theory of abstract machines.

\begin{defi}[Abstract implementation]
Let $\mach$ be an abstract machine and $\run:\compil{\tm_{0}} \tomach^{*} 
\state$ an initial run for $\mach$. An abstract implementation $I$ for $\mach$ 
is the assignment of asymptotic space costs $\sizespace\cdot^{I}$ for every component of $\state$ 
and of asymptotic time costs $\sizetime\cdot^{I}$ for every transition from $\state$.
\end{defi}

Assigning costs to the state components provides the space cost $\sizespace\state^{I}$ of each state $\state$, by summing over all components. 

\begin{defi}[Space and time of runs]
\label{def:space-time-runs}
	Let $\run:\state_{0} \to^k\state_{k}$ be an initial run of an abstract machine $\mach$ and $I$ an abstract implementation for $\mach$. 
	\begin{enumerate}
	\item The $I$-space cost of $\run$ is 
	$\sizespace{\run}^{I}\defeq\max_{\state\in\run}\sizespace{\state}^{I}$.
	\item The $I$-time cost of $\run$ is  $\sizetime{\run}^{I}\defeq\sum_{i=0}^{k-1}\sizetime{\state_{i} \to \state_{i+1}}^{I}$.
	\end{enumerate}
\end{defi}

\paragraph{A Technical Remark} Note that abstract implementations do not specify the space cost of transitions $\state \tomach \statetwo$. According to the space cost for a run, such a cost has to be the difference $\sizespace{\statetwo}^{I} - \sizespace{\state}^{I}$ in space between the two involved states, which can be inferred by the size of the states. Therefore, it need not be specified by an abstract implementation. Note also a subtlety: implementing a transition might require auxiliary space temporarily 
exceeding both $\sizespace{\state}$ and $\sizespace{\statetwo}$, which we are not accounting for. The point is that 
for all the machines considered in this paper, such a temporary extra space is 
bounded by the current space (that is, $\sizespace{\state}^{I}$), and taking it into 
account would 
affect the globally used space only linearly, which is reasonable for space. 
Therefore, the auxiliary use of space can, and shall, be safely omitted. 

\section{The KAM and its Implementations}
\label{sect:naive-kam}
 \begin{figure*}[t]
	\input{machines/KAM}
	\vspace{-8pt}
	\caption{Data structures and transitions of the \OutKAM.}
	\vspace{-8pt}
	\label{fig:kam}
\end{figure*}
The Krivine abstract machine \cite{krivine_call-by-name_2007} is a 
standard environment-based machine for \ccbn, often defined as in \reffig{kam}. We 
refer to it as \emph{the \OutKAM}, to distinguish it from forthcoming variants and to stress that its specification is too abstract, as the role of low-level aspects such as the immutable initial code, pointers, and how the abstract data structures are concretely implemented is not made explicit, while it is essential for complexity analyses.

The machine evaluates closed $\l$-terms to weak head normal form via three 
transitions, the union of which is noted $\tooutkam$: 
\begin{itemize}
\item $\tokamsea$ looks for redexes descending on the left of topmost applications 
of the active term, accumulating arguments on the stack; 
\item $\tokamb$ fires a $\beta$ redex (given by an abstraction as active term 
having as argument the first entry of the stack) but delays the associated 
meta-level substitutions, adding a corresponding explicit substitution to the 
environment;
\item $\tokamsub$ is a form of micro-step substitution: when the active term is $\var$, the machine looks up the environment and retrieves the delayed replacement for $\var$.
\end{itemize}
The abstract data structures used by the \OutKAM are \emph{local environments}, \emph{closures}, and a \emph{stack}. \emph{Local environments}, which we shall simply refer to as \emph{environments}, are defined by mutual induction with \emph{closures}.
The idea is that every (potentially open) term $\tm$ in a dynamic state comes with an 
environment $\lenv$ that \emph{closes} it, thus forming a 
closure $\clos = (\tm,\lenv)$, and, in turn, environments are lists of entries 
$\esub\var\clos$ associating to each open variable $\var$ of $\tm$, a closure 
$\clos$ \ie, essentially, a closed term.
The \emph{stack} simply collects the closures associated to the arguments met during the search for $\beta$-redexes.

A dynamic state $\state$ of the \OutKAM is the pair $(\clos,\stack)$ of a 
closure $\clos$ and a stack $\stack$, but we rather see it as a triple 
$(\tm,\lenv,\stack)$ by spelling out the two components of the closure $\clos = 
(\tm,\lenv)$.
Initial dynamic states of the \OutKAM
are defined as $\compil{\tm_{0}}\defeq\kstate{\tm_{0}}{\stempty}{\stempty}$ (where $\tm_{0}$ is a closed $\l$-term, and also the code). The decoding of closures and states is as follows:
\[\begin{array}{r@{\hspace{1cm}}rcl@{\hspace{1cm}} rcl}
\textsc{Closures}
&
\unf{(\tm, \stempty)} & \defeq & \tm
&
\unf{(\tm, \esub\var\clos\cons\lenv)} & \defeq & \unf{(\tm\isub\var{\unf\clos},\lenv)}

\\[3pt]
\textsc{States}
&
\unf{(\tm, \lenv, \stempty)} & \defeq & \unf{(\tm, \lenv)}
&
\unf{(\tm, \lenv, \stack\cons\clos)} & \defeq & \unf{(\tm, \lenv, \stack)}\, \unf\clos
\end{array}\]

\paragraph{Basic Qualitative Properties.} Some standard facts about the \OutKAM  
follow. Let  $\run:\compil{\tm_{0}} \tooutkam^{*} \state$ be a run.
\begin{itemize}
\item \emph{Closures-are-closed invariant}: if the code $\tm_{0}$ is closed (that is the only case we consider here) then every closure 
$(\tmtwo, \lenv)$ in $\state$ is \emph{closed}, that is, for any free variable 
$\var$ of $\tmtwo$ there is an entry $\esub\var\clos$ in $\lenv$, and recursively so for the closures in $\lenv$. Thus 
$\unf{(\tmtwo,\lenv)}$ is a closed term, whence the name \emph{closures}.
\item \emph{Final states}: the previous fact implies that the machine is never stuck on the left of a $\tokamsub$ transition because the environment does not contain an entry for the active variable. Final states then have shape 
$\kstate{\la\var\tmtwo}{\lenv}{\stempty}$. 
\end{itemize}
\begin{thm}[Implementation]
\label{thm:kam-impl}
The \OutKAM implements \ccbn, that is, there is a complete $\towh$-sequence $\tm \towh^{n} \tmtwo$ if and only if there is a complete run $\run: \compil\tm \tonakam^{*} \state$ such that $\unf\state = \tmtwo$ and $\sizeb\run = n$.
\end{thm}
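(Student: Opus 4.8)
The plan is to follow the standard projection-and-reflection methodology for implementation theorems: classify the three transitions by their action on the decoding $\unf\cdot$, and then splice the two deterministic systems together. I would first prove three one-step properties of reachable states. \emph{Overhead transparency}: if $\state \tokamsea \statetwo$ or $\state \tokamsub \statetwo$ then $\unf\state = \unf\statetwo$, so search and substitution transitions are invisible to the decoding. \emph{Beta projection}: if $\state \tokamb \statetwo$ then $\unf\state \towh \unf\statetwo$, so each beta transition realizes exactly one head step. \emph{Halt}: every reachable final state decodes to a $\towh$-normal term, which by the \emph{Final states} remark above is an abstraction. Each property is established by unfolding the decoding equations for closures and states, reasoning by induction on the environment $\lenv$.

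Overhead transparency for $\tokamsea$ is immediate, since both $\unf{(\tm\tmtwo,\lenv,\stack)}$ and $\unf{(\tm,\lenv,(\tmtwo,\lenv)\cons\stack)}$ expand to $\unf{(\tm,\lenv)}\,\unf{(\tmtwo,\lenv)}$ applied to the decoded stack. For $\tokamsub$ I use the \emph{closures-are-closed invariant}, which guarantees both that the lookup $\lenv(\var) = (\tmtwo,\lenvtwo)$ succeeds and that $\unf{(\var,\lenv)} = \unf{(\tmtwo,\lenvtwo)}$, again by induction on $\lenv$. The \emph{Halt} property follows from the same invariant: a reachable final state must have the shape $\kstate{\la\var\tmtwo}{\lenv}{\stempty}$, whose decoding is an abstraction.

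The crux is Beta projection. Writing $\state = \kstate{\la\var\tm}{\lenv}{\clos\cons\stack}$ and $\statetwo = \kstate{\tm}{\esub\var\clos\cons\lenv}{\stack}$, after $\alpha$-renaming $\var$ fresh for $\lenv$ we have $\unf{(\la\var\tm,\lenv)} = \la\var\unf{(\tm,\lenv)}$, so $\unf\state$ exhibits a head redex $(\la\var\unf{(\tm,\lenv)})\,\unf\clos$ applied to the decoded stack, firing to $\unf{(\tm,\lenv)}\isub\var{\unf\clos}$ on that same stack. On the other side, $\unf\statetwo$ unfolds through the new environment entry to $\unf{(\tm\isub\var{\unf\clos},\lenv)}$ on the same stack. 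Hence everything reduces to the \emph{commutation lemma} $\unf{(\tm,\lenv)}\isub\var{\unf\clos} = \unf{(\tm\isub\var{\unf\clos},\lenv)}$, valid for a closed closure $\clos$ with $\var$ fresh for $\lenv$. I would prove it by induction on $\lenv$ using the substitution swap lemma of the $\l$-calculus; closedness of $\clos$ and freshness of $\var$ are exactly what prevents variable capture and keeps the reordering of the two substitutions sound. This bookkeeping is where the argument is genuinely delicate, and I expect it to be the main obstacle.

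With the one-step properties in hand, the two directions follow from determinism and termination of the overhead transitions. Both $\towh$ and $\tonakam$ are deterministic, and there is no infinite run made only of $\tokamsea$ and $\tokamsub$ transitions: search strictly decreases the structural size of the active term, while substitution strictly descends the finite environment chain (the retrieved term being a sub-term of the initial code), so a simple lexicographic measure is well-founded. For the left-to-right direction I iterate from $\compil\tm$: the machine runs overhead transitions (which terminate) until it either halts or reaches a $\tokamb$, which projects onto the first $\towh$ step; repeating $n$ times and then running overhead to a final state produces a complete run $\run$ with $\unf\state = \tmtwo$ and $\sizeb\run = n$, since only beta transitions are counted and each contributes one head step. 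For the right-to-left direction I project a complete run onto a $\towh$-sequence, using overhead transparency to collapse search and substitution steps, beta projection to obtain one head step per $\tokamb$, and \emph{Halt} to ensure the endpoint is $\towh$-normal; the length equality $\sizeb\run = n$ is again immediate from the bijection between beta transitions and head steps.
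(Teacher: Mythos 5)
Your proposal is correct and is precisely the standard argument (overhead transparency of $\tokamsea$/$\tokamsub$, one-step $\beta$-projection via the substitution commutation lemma for closed closures, termination of overhead transitions, and splicing by determinism) that the paper explicitly omits as standard. The one point worth double-checking, the well-foundedness of consecutive overhead transitions, is handled soundly by your lexicographic measure, since $\tokamsub$ strictly descends into the syntactic structure of the environment while $\tokamsea$ preserves it and shrinks the active term.
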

The proof of the facts and of the theorem are standard and omitted. Similar statements hold for all the variants of the \KAM that we shall see, with similar proofs which shall be omitted as well (we shall also omit the decoding of the variants of the \KAM).

The key point is that there is a bijection between $\towh$ 
steps and $\tokamb$ transitions, so that we can identify the two. Moreover, the 
number of  $\towh$ steps is a reasonable cost model for time, as first proved 
by Sands et al. \cite{DBLP:conf/birthday/SandsGM02}.
\paragraph{Quantitative Properties.} We recall also some less known \emph{quantitative} facts, for runs as above, from papers by Accattoli and co-authors \cite{DBLP:conf/rta/AccattoliL12,DBLP:conf/icfp/AccattoliBM14,DBLP:conf/ppdp/AccattoliB17}. The aim is to bound quantities relative to the run $\run$ and the reachable state $\state$. The bounds are given with respect to two parameters: the size $\size{\tm_{0}}$ of the code and the number $\sizeb\run$ of $\beta$-transitions, which, as mentioned, is an abstract notion of time for \ccbn.
\begin{itemize}
\item \emph{Number of transitions}: the number $\sizesub\run$ of $\subsym$ 
transitions in $\run$ is bounded by $\bigo{\sizeb\run^{2}}$, and there are terms 
on which the bound is tight. The number $\sizesea\run$ of $\seasym$ transitions is 
bounded by $\bigo{\sizeb\run^{2}\cdot \size{\tm_{0}}}$, but on complete runs the 
bound improves to $\bigo{\sizeb\run}$.

\item \emph{Sub-term invariant}: every term $\tmtwo$ in every closure 
$(\tmtwo,\lenv)$ in every reachable state is a literal 
(that is, \emph{not} up to $\alpha$-renaming) sub-term of the code $\tm_{0}$ . Therefore, 
in particular $\size\tmtwo \leq \size{\tm_{0}}$.

\item \emph{The length of a single environment}: the number of entries in a single 
environment is bounded by the size $\size{\tm_{0}}$ of the code.

\item \emph{The number of environments}: the number of distinct environments in 
$\state$ is bounded only by $\sizeb\run$.
\item \emph{The length of the stack}: the length of the stack in $\state$ is 
bounded by $\bigo{\sizeb\run^{2}\cdot \size{\tm_{0}}}$.

\end{itemize}

\paragraph{Sub-Term Pointers and Data Pointers: the \LinkKAM} The \OutKAM is usually implemented using an immutable initial code and \emph{two} forms of pointers, obtaining what here refer to as the \emph{\LinkKAM}:
\begin{enumerate}
\item \emph{Sub-term pointers} $\pointer\tm, \pointer\tmtwo$: the initial term $\tm_{0}$ provides the initial immutable code. The essential sub-term invariant mentioned above allows one to represent the active term and the terms $\tmtwo$ in every closure $(\tmtwo,\lenv)$ of every reachable state with a pointer $\pointer\tmtwo$ to $\tm_{0}$, instead that with a copy of $\tmtwo$. 

\item \emph{Data (structure) pointers} $\pointer\lenv, \pointer{\lenvtwo}$: to ensure that the duplication of the environment $\lenv$ in transition $\tokamsea$ can be implemented efficiently (in time), environments are shared so that what is duplicated is just a pointer to an environment, and not the environment itself. This means that environments entries are stored in the \emph{heap} (or global environment), a new data structure which is simply a store, and that environments are pointers to heap entries.
\end{enumerate}
The \LinkKAM is in \reffig{LinkKAM}; explanations and comments follow.
\begin{itemize}
\item \emph{Environment entries and look-up}. Note that environment entries now have shape $\esub{\pointer{\la\var\tm}}\clos$ instead of $\esub{\var}\clos$, since they pair the pointer $\pointer{\la\var\tm}$ to the binder of $\var$ (with the closure) rather than the variable $\var$. Consequently, the function $\lenv(\var)$ looking up environments is now replaced by a function $\lenv(\pointer\var)$ acting on pointers, which first retrieves the pointer $\pointer{\la\var\tm}$ of the binder of $\pointer\var$ and then looks up the list structure of $\lenv$ in the heap for the closure $(\pointer\tmtwo,\pointer\lenvtwo)$ associated to $\pointer{\la\var\tm}$.

\item \emph{Stack pointers}. The representation of stacks can also be (data-)pointer-based or flat. In the \LinkKAM, we assume that stacks are flat. In extensions of the $\l$-calculus with control operators (which are not treated in this paper), stacks can be duplicated and so therein it would be more natural to have pointer-based representations of stacks, to enable their sharing.

\item \emph{Sub-term pointers}. Note that having made pointers explicit is still not concrete enough for precise complexity analyses, in the case of sub-term pointers. Using tree addresses (\refdef{tree-address}), one can obtain $\pointer\tm$ and $\pointer\tmtwo$ from $\pointer{\tm\tmtwo}$ (as required by transition $\tokamsea$) in constant time, but tree addresses in general have size $\bigo{\size{\tm_0}}$. Using left addresses  (\refdef{tree-address}), obtaining $\pointer\tm$ and $\pointer\tmtwo$ requires instead more time, namely $\bigo{\size{\tm_0}}$, but left addresses use less space, namely $\bigo{\log(\size{\tm_0})}$.
\end{itemize}

 \begin{figure*}[t]
	\input{machines/LinkKAM}
	\vspace{-8pt}
	\caption{Data structures and transitions of the \LinkKAM.}
	\vspace{-8pt}
	\label{fig:LinkKAM}
\end{figure*}

There is a big difference between sub-term and data pointers. As already mentioned, sub-term pointers can be crafted as to have size
$\bigo{\log\size{\tm_{0}}}$. For the present discussion they are 
\emph{space-friendly}, because their size does not depend on the length of the 
run. 
Data pointers, on the 
other hand, are \emph{space-hostile}, because (as recalled above) the number of 
environments is bounded only by $\sizeb\run$, that is, \emph{(abstract) time}. Data pointers 
have thus size $\bigo{\log{\sizeb\run}}$, entangling space with time, which is 
unreasonable for space. In the next section we shall add garbage collection, which disentangles space from time, but data pointers shall still add a $\bigo{\log{E}}$ overhead, where $E$ is the number of environment entries, which is excessive for space reasonability. Therefore, we now remove data pointers altogether, turning to a \emph{flat} representation of environments (and stacks), as explained in the next paragraph.

 \begin{figure*}[t]
	\input{machines/STKAM}
	\vspace{-8pt}
	\caption{\STKAM.}
	\vspace{-8pt}
	\label{fig:STKAM}
\end{figure*}

\paragraph{Sub-Term and Naive KAM} Summing up, as a reference we want a KAM adopting sub-term pointers but avoiding data pointers. Such a KAM can be presented in two ways. The first one is the \STKAM in \reffig{STKAM}, where the immutable code and sub-terms pointers are explicit, while environments are presented as in the \OutKAM, to suggest that they are flat rather than linked via data pointers.

We shall however adopt a different presentation, for two reasons. Firstly, the \SpKAM of the next section needs further tweaks of the KAM (namely garbage collection and environment unchaining) and, to avoid a too heavy treatment, we prefer to hide sub-term pointers and the immutable code. We hope that the discussions of this section have clarified how these two points are managed. Secondly, as already mentioned, the \STKAM is still not detailed enough for unambiguous complexity analysis. Therefore, we rather revert to the \OutKAM \emph{but} we pair it with an abstract implementation specifying the cost of the involved components and transitions as intended for the \STKAM (with left sub-term addresses). The pair of the \OutKAM and the naive abstract implementation is referred to as the \emph{\NaKAM}, defined in \reffig{NaKAM}, the transition relation of which is noted $\tonakam$.

 \begin{figure*}[t]
	\input{machines/NaKAM}
	\vspace{-8pt}
	\caption{\NaKAM = \OutKAM + Naive Abstract Implementation.}
	\vspace{-8pt}
	\label{fig:NaKAM}
\end{figure*}
\paragraph{Naive Abstract Implementation} Implementing the \NaKAM without 
data pointers means that environments and stacks are implemented as 
unstructured \emph{strings}, in a linear syntax. We abstract from the actual 
encoding, what we retain is the abstract implementation in 
\reffig{kam}, which captures its essence. Sub-terms are assumed to be of size $\log(\size{\tm_0})$, which is obtainable by adopting left addresses.

The time cost of all $\tonakam$ transition depends polynomially on the size of the whole source state 
$\size\state$, because the lack of data sharing forces to use a new string for the 
new stack and the new environment; in particular, transition $\tokamsea$ requires to copy the whole string representing $\lenv$. To be precise, one could develop a finer 
analysis, thus obtaining slightly better bounds, but this would require entering 
in the details of the implementation and would not give a substantial 
advantage. As we shall see, indeed, the \NaKAM is unreasonable for abstract time (\refprop{spkam-not-reas-for-nat-time}). Via an analysis of the \NaKAM execution of the encoding of TMs, it shall turn out that also the space usage of the \NaKAM is unreasonable (\refprop{toy}). A space-reasonable refinement of the \NaKAM is the topic of the next section.






\section{The Space KAM}
\label{sect:spkam}
Here we define an abstract space optimization of the \OutKAM, dubbed \emph{\CoKAM}, which when paired with the same abstract implementation of the \NaKAM shall give the \SpKAM. The \CoKAM is 
derived from the \OutKAM by adding two 
modifications aimed at space efficiency: namely \emph{unchaining} and 
\emph{eager 
garbage collection}.


\paragraph{Unchaining.} Environment unchaining is a 
folklore 
optimization for abstract 
machines  bringing speed-ups with respect to both time and space, used for instance 
by Sands et al. \cite{DBLP:conf/birthday/SandsGM02}, Wand 
\cite{DBLP:journals/lisp/Wand07}, Friedman et al. 
\cite{DBLP:journals/lisp/FriedmanGSW07}, and Sestoft 
\cite{DBLP:journals/jfp/Sestoft97}. Its first 
systematic study is by Accattoli and Sacerdoti Coen in 
\cite{DBLP:journals/iandc/AccattoliC17}, with respect to time. The optimization 
prevents the creation of chains of \emph{renamings} in environments, that is, 
of delayed substitutions of variables for variables, of which the simplest 
shape in the \KAM is:
\[\esub{\var_0}{(\var_1, \esub{\var_1}{(\var_2, 
			\esub{\var_2}{\ldots})})}\]
where the links of the chain are generated by $\beta$-redexes having a variable 
as argument. On some families of terms, these chains keep growing, 
leading to the quadratic dependency of 
the number of transitions from $\sizeb\run$.

\paragraph{Eager Garbage Collection.} Besides the malicious chains connected to 
unchaining,  the \OutKAM is not parsimonious with space also because there is no 
garbage collection (shortened to GC). In transition $\tokamsub$, the current 
environment is discarded, so something is collected, but this is not enough. 
It is thus natural to modify the machine as to maximize GC and space re-usage, that is, as 
to perform it \emph{eagerly}.

\begin{figure*}[t]
	\input{machines/SpaceKAM}
	\vspace{-8pt}
	\caption{Transitions of the \CoKAM, which is the abstract layer of the \SpKAM.}
	\vspace{-8pt}
	\label{fig:spkam}
\end{figure*}

\paragraph{The \CoKAM} The \OutKAM optimized with both eager GC and unchaining 
(both optimizations are mandatory for space reasonability)
is here called \CoKAM and it is defined in \reffig{spkam}. The data structures, 
namely closures and (local) environments, are defined as before---the novelty concerns the machine transitions only. Unchaining is realized by transition 
$\tokamseav$, while
eager garbage collection is realized by transition $\tokambw$, which 
collects the argument if the variable of the $\beta$ redex does not occur, and by 
transitions $\tokamseav$ and $\tokamseanv$, by 
restricting the environment to the occurring variables, when the environment is 
propagated to sub-terms. As a consequence, we obtain the following invariant.

\begin{lem}[Environment domain invariant]
\label{l:spkam-invariants} 
Let $\state$ be a \CoKAM reachable state. Then $\dom\lenv = \fv\tm$ for every 
closure $(\tm,\lenv)$ in $\state$.
\end{lem}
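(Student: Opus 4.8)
The plan is to strengthen the statement to a full-state invariant and prove it by induction on the length $k$ of the run $\run:\compil{\tm_0}\to^{*}\state$ witnessing reachability. Precisely, let $P(\state)$ be the property that \emph{every} closure $(\tmtwo,\lenvtwo)$ occurring in $\state$ — the active closure $(\tm,\lenv)$, every closure on the stack $\stack$, and, recursively, every closure appearing inside any of these environments — satisfies $\dom{\lenvtwo}=\fv{\tmtwo}$. The lemma is the instance of $P$ restricted to the active closure and the stack, so it suffices to establish $P$ for all reachable states. For the base case, the initial state is $\compil{\tm_0}=\kstate{\tm_0}{\stempty}{\stempty}$; since $\tm_0$ is closed, $\fv{\tm_0}=\emptyset=\dom{\stempty}$ and there are no further closures, so $P$ holds.

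For the inductive step I would assume $P(\state)$ and inspect each transition $\state\to\statetwo$ of \reffig{spkam}, relying on two observations. First, closures that are merely \emph{moved} or \emph{retained unchanged} — e.g. $\lenv(\var)$ passing from an environment to the stack in $\tokamseav$, the closure $\clos$ passing from the stack to the environment in $\tokambnw$, or the closures kept by a restriction $\lenv|_{\tm}$ — already occurred in $\state$ and hence satisfy the domain equation by the induction hypothesis. Second, the only \emph{newly built} closures are active closures, whose environment is obtained either by restriction or by prepending a single entry, and whose domain I compute directly. For the search transitions $\tokamseav$ and $\tokamseanv$ the source active closure $(\tm\tmthree,\lenv)$ satisfies $\dom\lenv=\fv{\tm\tmthree}\supseteq\fv\tm$, so $\dom{\lenv|_{\tm}}=\dom\lenv\cap\fv\tm=\fv\tm$, and likewise $\dom{\lenv|_{\tmthree}}=\fv{\tmthree}$ for the new stack entry; the membership $\var\in\fv{\tm\var}=\dom\lenv$ in $\tokamseav$ guarantees that $\lenv(\var)$ is defined. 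For the two $\beta$-transitions the occur-check is exactly what makes the bookkeeping balance: in $\tokambw$ the side condition $\var\notin\fv\tm$ gives $\fv{\la\var\tm}=\fv\tm$, so the unchanged $\lenv$ still has domain $\fv\tm$; in $\tokambnw$ the side condition $\var\in\fv\tm$ gives $\dom{(\esub\var\clos\cdot\lenv)}=\{\var\}\cup(\fv\tm\setminus\{\var\})=\fv\tm$. Finally, in $\tokamsub$ the source $(\var,\lenv)$ has $\dom\lenv=\{\var\}$, so $\lenv(\var)=(\tmtwo,\lenvtwo)$ is a closure occurring in $\state$ and the new active closure $(\tmtwo,\lenvtwo)$ satisfies the equation by the induction hypothesis.

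The argument is essentially routine invariant-tracking, and I do not expect a serious obstacle; the one point to get right is the \emph{direction} of the domain computation for the restriction operation. It is tempting to read $\dom{\lenv|_{\tm}}=\fv\tm$ off the definition, but this equality holds only because the induction hypothesis already supplies $\fv\tm\subseteq\dom\lenv$ — without it the restriction could a priori yield a strictly smaller domain. Dually, the two side conditions $\var\in\fv\tm$ and $\var\notin\fv\tm$ on the $\beta$-transitions must be used to rewrite $\fv{\la\var\tm}$ correctly; this is precisely the place where the \emph{eager garbage collection} built into the \SpKAM, collecting the argument when the bound variable does not occur, is what keeps the domain equal to, rather than strictly larger than, $\fv\tm$.
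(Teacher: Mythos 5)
Your proof is correct and is exactly the routine transition-by-transition induction (with the strengthening to all closures nested inside environments) that the paper leaves implicit --- the paper states the lemma as a direct consequence of the design of the transitions and omits the proof as standard. The two points you flag --- that $\dom{\lenv|_{\tm}}=\fv\tm$ needs $\fv\tm\subseteq\dom\lenv$ from the induction hypothesis, and that the occur-check side conditions on the two $\beta$-transitions are what keep the domain from growing --- are precisely the places where the argument has content.
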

Because of the invariant, which concerns also the closure given by the active term and the local environment of the state, the substitution transition $\tokamsub$ simplifies as follows:
\[\begin{array}{l@{\hspace{.3cm}} 
				l@{\hspace{.3cm}}l|l|l@{\hspace{.3cm}} 
				l@{\hspace{.3cm}}l}
			\mathsf{Term}   & 
			\mathsf{Env} & \mathsf{Stack}
			&&	\mathsf{Term} 
			& 
			\mathsf{Env}
			& \mathsf{Stack} 
			\\
			\cline{1-7}
			 \var    &
			\esub\var{(\tmtwo,\lenv)}
			& \stack
			 &
			\tokamsub &
			 \tmtwo  & \lenv & \stack
	\end{array}\]
	
	\paragraph{Sub-Term Pointers and Abstract Implementation: the \SpKAM} We now add an abstract implementation to the \CoKAM, finally obtaining the \SpKAM. The abstract implementation to be added is the same of the \NaKAM, but for a crucial point. For the \NaKAM, the size of a term $\size{\tmtwo}$ is given by $\log{\size{\tm_{0}}}$, and we said that this is obtainable via left addresses. For the \SpKAM, we need a finer approach, or rather a finer definition. 
	\begin{defi}[\SpKAM]
The \SpKAM is defined as the \CoKAM together with the abstract implementation $I$ defined as for the \NaKAM except for the size of sub-terms, which is redefined as 
$\sizespace{\tmtwo}^{I} \defeq \size{{\tt ladd}(\tmtwo,\tm_{0})}$, where ${\tt ladd}(\tmtwo,\tm_{0})$ is the left address of $\tmtwo$ in the initial code $\tm_0$.
\end{defi}
In the complexity analysis of the encoding of TMs we shall use the fact that, for some sub-terms, $\size{{\tt ladd}(\tmtwo,\tm_{0})}$ is of size $\bigo{1}$ rather than $\bigo{\log{\size{\tm_{0}}}}$, and this shall be crucial for the space reasonability result. In fact, other addressing schemes might work as well. What is important for our result is that the size of sub-term pointers is always $\bigo{\log\size{\tm_0}}$ and that, for an application $\tmtwo\tmthree$, the size of the addresses in $\tmtwo$ is independent of $\tmthree$, which shall be used to infer that some pointers have size $\bigo{1}$ because of how the encoding of TMs is built.

Note that left addresses do not respect the \emph{tree locality} of terms, in the sense that in $\tm \tmtwo$ the first constructors of $\tm$ and $\tmtwo$ are not the address of the root application $\pm 1$. Their addresses can however be retrieved in time polynomial and space logarithmic in $\size{\tm_0}$, which is what is important. Note that such a non-locality of left addresses is one of the ways in which time is traded for space in the Naive/\SpKAM: manipulating terms via left addresses requires a (polynomial) time overhead with respect to use actual pointers to the code.

For the abstract implementation, it is fine to keep the same time bounds used for the \NaKAM,
because
the garbage collector has a time cost which however stays within the polynomial (in the 
size of the 
states) cost of the transitions. It is mandatory that it is implemented  
by naively and repeatedly checking whether variables occur, and \emph{not} via 
pointers or counters, as they would add an unreasonable space overhead. This 
fact is implicit in using the same abstract implementation of the \NaKAM, as a 
less naive GC would alter the space requirements.

\paragraph{Space Cost and Closure Space}
When we defined the space cost of a generic machine run (\refdef{space-time-runs}), we considered the max over the size of states. The size of states is a very concrete notion. Now that we have defined the \CoKAM and the \SpKAM, it is possible to define also a second, more abstract notion of space, here dubbed \emph{closure space}. For ease of language, we define it on the \SpKAM, but the abstract aspect of closure space is evident from the fact that it can already be defined on the \CoKAM.

\begin{defi}[Closure space]
Let $\run:\state_{0} \to^k\state_{k}$ be an initial run of the \CoKAM and use $\sizecl\state$ for the number of closures in a state $\state$. Then the closure space of $\run$ is defined as
	$\sizeaspace{\run}\defeq\max_{\state\in\run}\sizecl{\state}$.
\end{defi} 
Since concrete space is obtained by considering for each closure also the size of its sub-term pointer, one has $\sizeaspace{\run} \leq \lm{\run} \leq \bigo{\sizeaspace{\run}\cdot\log{\size{\tm_0}}}$. This is similar to what happens with abstract time (that is, the number of $\beta$-steps), where the actual time is more precisely bounded by abstract time times the size of the initial term. There is however an important difference. On the encoding of TMs, it shall turn out that $\sizespace{\run}^I \neq \Theta(\sizeaspace{\run}\cdot\log{\size{\tm_0}})$, because some sub-term pointers shall have size $\bigo{1}$ rather than $\bigo{\log\size{\tm_0}}$. Additionally, $\Theta(\sizeaspace{\run}\cdot\log{\size{\tm_0}})$ would not be a reasonable use of space. It is nonetheless useful to consider such an abstract notion of closure space, as the next paragraph shows.

\paragraph{Closure space and $\eta$-equivalence.}
We point out that the space consumption of the \SpKAM is almost
invariant with respect to $\eta$-expansion, thanks to unchaining. In particular, $\eta$-expansion preserves closure space, but not the concrete one:
$\eta$-expanding $\tm$ for $n$ times preserves the number of closures but causes the size of sub-term pointers in each closure to grow by no more than $\log(n)$. 
Formally, let 
us define $\eta$-expansion as the function $\eta:\Lambda\to\Lambda$ such that:
\[
\eta(\tm)\defeq \la\var\tm\var \ \ \ \mbox{with }\var\notin\fv\tm
\]
\begin{lem}
	Let $\tm$ be a closed $\lambda$-term,  and $\run_n$ the complete run 
	from $\spkamstate{\eta^n(\tm)}{\lenv}{\clos\cdot\stack}$. Then $\sizeaspace{\run_n} = \sizeaspace{\run_0}$ and 
	$\lm{\run_n}\in\bigo{\log(n)\cdot\lm{\run_0}}$.
\end{lem}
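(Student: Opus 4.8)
The plan is to show that $\run_n$ factors as a block of $2n$ purely administrative transitions followed by \emph{exactly} $\run_0$, and then to read off both claims from this factorisation. The heart of the argument is the following \emph{peeling step}, valid for every $k\geq 1$ (writing $\eta^k(\tm)=\la{\var_k}(\eta^{k-1}(\tm)\,\var_k)$ with $\var_k\notin\fv{\eta^{k-1}(\tm)}$):
\[
\spkamstate{\eta^k(\tm)}{\lenv}{\clos\cdot\stack}
\ \tokambnw\
\spkamstate{\eta^{k-1}(\tm)\,\var_k}{\esub{\var_k}{\clos}\cdot\lenv}{\stack}
\ \tokamseav\
\spkamstate{\eta^{k-1}(\tm)}{\lenv}{\clos\cdot\stack}.
\]
The first transition fires because $\eta^k(\tm)$ is an abstraction whose bound variable $\var_k$ occurs in the body, so $\tokambnw$ applies and pushes $\clos$ into the environment. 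The second transition is where \emph{unchaining} is essential: the active term is the application $\eta^{k-1}(\tm)\,\var_k$ whose argument is the variable $\var_k\in\mathcal{V}$, so $\tokamseav$ (and not $\tokamseanv$) fires, looking $\var_k$ up and pushing the retrieved closure $\clos$ directly back onto the stack instead of creating a renaming closure $(\var_k,\esub{\var_k}{\clos}\cdot\lenv)$. Since $\var_k\notin\fv{\eta^{k-1}(\tm)}$ and $\dom\lenv=\fv\tm=\fv{\eta^{k-1}(\tm)}$ by the environment domain invariant (Lemma~\ref{l:spkam-invariants}, using that $\eta$-expansion introduces only fresh variables), the environment restriction strips precisely the $\var_k$ entry and returns $\lenv$ unchanged. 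Performing these two transitions for $k=n,n-1,\dots,1$ carries $\spkamstate{\eta^n(\tm)}{\lenv}{\clos\cdot\stack}$ to $\spkamstate{\tm}{\lenv}{\clos\cdot\stack}$ in $2n$ steps, the latter being the initial state of $\run_0$; by determinism of the \SpKAM, $\run_n$ is this block followed by $\run_0$.

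For the abstract space, I would observe that each peeling step merely \emph{relocates} the closure $\clos$ — from the stack into the environment by $\tokambnw$, then back onto the stack by $\tokamseav$ — without creating or deleting any closure. Hence $\sizecl\state$ is constant across the whole peeling block and equals $\sizecl{\spkamstate{\tm}{\lenv}{\clos\cdot\stack}}$, the closure count of the initial state of $\run_0$. As that state belongs to $\run_0$, this common value is $\leq\sizeaspace{\run_0}$, so the peeling block never exceeds the abstract space of the $\run_0$-suffix; since $\run_0$ is a suffix of $\run_n$ we conclude $\sizeaspace{\run_n}=\sizeaspace{\run_0}$.

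For the concrete space, the only thing that changes between $\run_0$ and the $\run_0$-suffix of $\run_n$ is the underlying code, which is obtained by $\eta$-expanding the relevant occurrence $n$ times and hence grows by exactly $3n$ constructors (each layer adds one $\la{}$, one application and one variable). Since sub-term pointers are left addresses, whose size is logarithmic in the constructor size of the code, every left address is shifted by $\bigo{n}$ (by $0$, $n$, or $3n$ depending on its position relative to the expanded occurrence), so its bit-size $\size{{\tt ladd}(\tmtwo,\tm_0)}$ grows additively by at most $\bigo{\log n}$; the active-term pointers created during peeling point at sub-terms of $\eta^n(\tm)$ and are likewise of size $\bigo{\log(\size\tm+n)}$. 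Because every address has bit-size at least $1$, an additive growth of $\bigo{\log n}$ is a multiplicative factor of $\bigo{\log n}$, whence the weight $\sum_{\clos}\size{{\tt ladd}(\tmtwo,\tm_0)}$ of each state of $\run_n$ is at most $\bigo{\log n}$ times the weight of the corresponding state of $\run_0$ (comparing peeling states to the initial state of $\run_0$). Taking the maximum over states yields $\lm{\run_n}\in\bigo{\log(n)\cdot\lm{\run_0}}$.

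The main obstacle is the peeling step landing \emph{exactly} on $\spkamstate{\eta^{k-1}(\tm)}{\lenv}{\clos\cdot\stack}$: this requires both the freshness of $\var_k$ and the domain invariant so that the restriction in $\tokamseav$ removes precisely the $\var_k$ entry and nothing else, and it is exactly the unchaining transition that prevents the $\var_k$ entries from accumulating into a growing renaming chain — which is what would otherwise destroy the invariance of $\sizecl\cdot$ and hence the abstract-space claim. The concrete-space bound is then routine once the additive-to-multiplicative conversion for left addresses is made explicit.
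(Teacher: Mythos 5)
Your proof is correct and follows essentially the same route as the paper's: peel off the $\eta$-layers in $2n$ steps via $\tokambnw$ followed by the unchaining transition $\tokamseav$ (you present this as iterated peeling where the paper uses induction on $n$, and you make explicit the freshness/domain-invariant point the paper leaves implicit), observe that no closures are created or destroyed, and then bound the concrete space by the $\bigo{\log n}$ growth of left addresses. Your additive-to-multiplicative conversion for pointer sizes is a slightly cleaner packaging of the paper's $k\cdot\log(3n+\size{\tm_0})$ computation, but the argument is the same.
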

\begin{proof}
	We first prove that $\spkamstate{\eta^n(\tm)}{\lenv}{\clos\cdot\stack} 
	\rightarrow^{2n}_{\text{SpKAM}} \spkamstate{\tm}{\lenv}{\clos\cdot\stack}$. 
	We proceed by induction 
	on $n$ executing the \SpKAM. The case $n=0$ is 
	trivial. Then, we consider the case $n = m + 1$.
\[\begin{array}{l|l|ll}
	\mathsf{Term}   & 
	\mathsf{Env} & \mathsf{Stack}\\
    \hhline{---} \rule{0pt}{2.6ex} 
	\eta^{m+1}(\tm)\defeq\la\var\eta^m(\tm)\var & \lenv & \clos\cdot\stack & 
	\tokambnw\\
	\tm\var & \esub\var\clos\cdot\lenv & \stack & \tokamseav\\
	\eta^m(\tm) & \lenv & \clos\cdot\stack & 	\rightarrow^{2n}_{\text{SpKAM}}\ (\ih)\\
	\tm & \lenv & \clos\cdot\stack
\end{array}\]
We observe that no space is consumed during this transitions. About pointer size, 
let us call $\tm_n$ the initial code. We have that the number of constructors of 
$\tm_n$ is $\size{\tm_n}=3n+\size{\tm_0}$ and thus its pointer size is 
$\log(3n+\size{\tm_0})$. Then, let us call $k$ the maximum 
number of closures stored in $\run_n$ (we have already observed that this is 
independent of $n$). Then
\[
\begin{array}{rcl}\lm{\run_n}&=&k\cdot \log(3n+\size{\tm_0}) \leq 
k\cdot(\log(3n)+\log(\size{\tm_0}))=k\cdot\log (3n) + 
k\cdot\log(\size{\tm_0})\\[3pt]
&\leq& k\cdot\log (3n)\cdot\log(\size{\tm_0}) + k\cdot\log(\size{\tm_0})=\log 
(3n)\cdot\lm{\run_0}+\lm{\run_0}\in\bigo{\log(n)\cdot\lm{\run_0}}
\end{array}\]
\end{proof}

\section{Encoding and Moving over Strings}\label{sect:moving}
We now turn to the analysis of the encoding of TMs, taking as reference the one 
by Dal Lago and Accattoli based over the Scott encoding of strings 
\cite{DBLP:journals/corr/abs-1711-10078}. The first key step is understanding 
how to scroll Scott strings.

\paragraph{Encoding alphabets.} Let $\alpone=\{\elone_1,\ldots,\elone_n\}$ be a 
finite alphabet.
 Elements of $\alpone$ 
are encoded in the $\l$-calculus in accordance to a fixed (but arbitrary) total 
order of the elements of $\alpone$ as follows:
\[
\cod{\elone_i}{\alpone} \defeq \lambda\var_1.\ldots.\lambda\var_n.\var_i \;.
\]
Note that the representation of an element $\cod{\elone_i}{\alpone}$ requires a 
number of constructors that is linear (and not logarithmic) in $\size\alpone=n$. 
Since the alphabet $\alpone$ shall not depend on the input of the TM, however, the 
cost in space is actually constant.

\paragraph{Encoding strings.} A string in
$\strone\in\alpone^*$ is represented by a term
$\encp\strone{\alpone^*}$. The encoding exploits the fact that a string is a concatenation of characters \emph{followed by the empty string $\varepsilon$} (which is generally omitted). For that, the encoding uses $\size\alpone+1$ abstractions, the extra one ($\var_\varepsilon$ in the definition below) being used to represent $\varepsilon$. The encoding is defined by induction on the structure of $\strone$ as follows:
\begin{align*}
	\enc{\varepsilon}^{\alpone^*} & \defeq 
	\la{\var_1}\ldots\la{\var_n}\la{\var_\varepsilon}\var_\varepsilon\;,\\
	\enc{\elone_i\strtwo}^{\alpone^*} & \defeq 
	\la{\var_1}\ldots\la{\var_n}\la{\var_\varepsilon}\var_i\enc{\strtwo}^{\alpone^*}.
\end{align*}
Note that the representation depends on the cardinality
of $\alpone$. As before, however, the alphabet is a fixed parameter, and so such a 
dependency is irrelevant. As an example, the encoding of the string $aba$ with respect to the alphabet $\set{a,b}$ ordered as $a<b$ is 
\begin{center}
$\begin{array}{lll}
\enc{aba}^{\set{a,b}} &=& \la{\var_a}\la{\var_b}\la{\var_\varepsilon}\var_a (\la{\var_a}\la{\var_b}\la{\var_\varepsilon}\var_b(\la{\var_a}\la{\var_b}\la{\var_\varepsilon}\var_a(\la{\var_a}\la{\var_b}\la{\var_\varepsilon}{\var_\varepsilon})))\end{array}$
\end{center}

\paragraph{Linear Space Representation Overhead.} As announced in \refsect{prelim}, we now explain how to remove the code-constructor gap, that is, how to make the size of variable pointers irrelevant for space (for the Scott encoding of strings). As already mentioned this is not mandatory for our result (it would be mandatory in a log-insensitive approach to linear space), but it is interesting to see that it is possible. 

Note that in $\encp\strone{\alpone^*}$ every variable occurrence is bound inside the list of binders immediately preceding the occurrence. If de Bruijn indices are used to represent $\l$-terms, one needs only indices---that is, variable pointers---between $1$ and $\size\alpone+1$, that is, of \emph{constant size}. Note that, similarly, if variables are represented with textual names, again having only $\size\alpone +1$ distinct names is enough \emph{if} one permits that different sequences of abstractions re-use the same names, that is, if one accepts Barendregt's convention to be broken. Remarkably, a notable folklore property of the (Space) KAM is that its implementation theorem does not need Barendregt's convention to hold. Therefore, de bruijn indices are not the only possible approach that removes the code-constructor gap on the Scott encoding of strings.

In their result about reasonable space, Forster et al. \cite{cbv_reasonable} also rely on the Scott encoding of strings and they represent $\l$-terms using de Bruijn indices. Since they study (super-)linear space in a log-insensitive way, their choice of de Bruijn indices is crucial for their result to hold, even if they do not stress it.

\paragraph{Recursion and Fix-Points} The encoding of TMs crucially relies on 
the 
use of a fix-point operator to implement recursion. Precisely, fix-points are 
used to model the transition function, making a copy of the (sub-term encoding 
the) transition table at each step. It is the only point of the encoding where 
duplication occurs, and it is thus where the expressive power is 
encapsulated. The rest of the encoding is affine---note 
that the representation of strings is affine.

\paragraph{Fix-Points and Toy Scrolling Algorithms.} To understand the delicate 
interplay between the space of the \KAM and fix-points, we analyze it via 
simple toy algorithms on strings. The first, simplest one is the 
\emph{consuming scrolling algorithm}: going through an input string $\str$ 
doing nothing and accepting when arriving at the end of the string, without 
having to preserve the string itself---the aim is just to see the space used 
for scrolling a string. The toy algorithm is a very rough approximations of the 
moving of TMs over a tape, which is the most delicate aspect of the space 
reasonable simulation of TMs in the $\l$-calculus that we shall develop. It is 
used to illustrate the key aspects of the problems that arise and of their 
solutions, without having to deal with all the details of the encoding of TMs 
at 
once. On TMs, scrolling a string obviously runs in constant space, and on 
log-sensitive TMs the consuming aspect cannot be modeled---we shall consider 
non-consuming scrolling later in this section. 

We encode the algorithm as a $\l$-term over Scott strings, where a fix-point combinator is used to iterate over the (term $\tm_{\str}$ encoding) the input string $\str$. Since the input string $\str$ is consumed in the process, the normal form would be the encoding of the accepting state $\state_{1}$ of the TM, which for simplicity here is simply given by the identity combinator $\Id$.

We use Turing’s fix-point combinator and the boolean alphabet 
$\Bool\defeq\{0,1\}$. Let $\fixnospace \defeq \theta\theta$, where $
\theta \defeq \la\var \la\vartwo \vartwo(\var \var \vartwo )$.
Given a term $\tmtwo$, $\fix\tmtwo$ is a fix-point of $\tmtwo$.
\[\begin{array}{l@{\hspace{.3cm}}c@{\hspace{.3cm}}l}
\fix\tmtwo  
& = &
(\la\var \la\vartwo \vartwo(\var \var \vartwo)) \theta 
\tmtwo 
\\
& \tob & 
(\la\vartwo \vartwo(\theta \theta \vartwo)) \tmtwo 

\quad \tob  \quad
\tmtwo (\theta \theta \tmtwo)
\quad = \quad
\tmtwo ( \fix \tmtwo )
\end{array}\]
Algorithms moving over binary Scott strings always follow the same structure. They 
are given by the fix-point iteration of a term that does pattern matching on the 
leftmost character of the string and for each of the possible outcomes (in our 
case, the first character is $0$, $1$, or the empty string $\ems$) does the corresponding 
action. The general term is $\fix (\la\varf\la {\varthree}\varthree 
A_{0}A_{1}A_{\ems})$, where $\varf$ is the variable for the recursive call and 
$A_{0}$, $A_{1}$, and $A_{\ems}$ represent the three actions, which in our case 
are simply given by $A_{0}=A_{1} = \varf$ and $A_{\ems} = \Id$, using the identity 
$\Id$ as encoding of the accepting state. Formally, we have the following Proposition, the proof of which is in Appendix~\ref{app:moving}.1.
\begin{prop}
\label{prop:toy}
Let $\strone\in\Bool^*$ and $\toy \defeq \fix (\la\varf\la {\varthree}\varthree 
\varf\varf \Id )$.
\begin{enumerate}
\item $\toy\, \encp\strone\Bool \towh^{\Theta(\size\strone)} \Id$.
\item \label{point:nakam} The \NaKAM evaluates $\toy\, \encp\strone\Bool$ in 
space $\Omega(2^{\size\strone})$.
\item The \SpKAM evaluates $\toy\, \encp\strone\Bool$ in space 
$\Theta(\log\size\strone)$.
\end{enumerate}
\end{prop}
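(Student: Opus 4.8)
The plan is to isolate a single round of the loop by which $\toy$ consumes one Scott character per recursive unfolding, and then read that round at three resolutions: number of $\beta$-steps for the first point, \NaKAM space for the second, and \SpKAM space for the third. Throughout set $M\defeq\la\varf\la\varthree.\varthree\varf\varf\Id$, so that $\toy=\fix M$. For the first point I would prove, by induction on $\size\strone$, the one-round law $\toy\,\encp{d\strtwo}\Bool\towh^{h}\toy\,\encp\strtwo\Bool$ for a fixed constant $h$ and either leading character $d\in\Bool$, together with the terminal law $\toy\,\encp\varepsilon\Bool\towh^{h'}\Id$. Both follow by unfolding: the fix-point identity $\fix\tmtwo\tob^2\tmtwo(\fix\tmtwo)$ already computed in the excerpt gives $\toy\towh^2 M\,\toy$; two further head steps feed $\toy$ and the string to $M$, yielding $\encp{d\strtwo}\Bool\,\toy\,\toy\,\Id$; and three head steps apply the three Scott binders to $\toy,\toy,\Id$, selecting $\toy$ applied to the tail for a genuine character, or $\Id$ for the empty string. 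Since $h$ is constant, the induction gives $\Theta(\size\strone)$ steps.

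For the second point I would run the \NaKAM on the same round and track $G_k$, the environment of the continuation closure $(\encp{\mathrm{tail}_k}\Bool,G_k)$ carried into round $k$. Two features are to blame: $\tokamsea$ glues the \emph{whole} current environment onto each pushed argument, and $\tokamb$ records \emph{every} binding regardless of occurrence. Hence, when the matcher $\encp{\mathrm{tail}_k}\Bool$—active with environment $G_k$—consumes its three stack arguments, it prepends to $G_k$ two copies of the recursive closure together with one $\Id$-closure, and each of these contains, through the binding of the matcher variable $\varthree$ to the previous continuation, a full copy of $G_k$. This gives $\size{G_{k+1}}\ge 3\,\size{G_k}$, hence $\size{G_k}=\Omega(2^{k})$, and taking $k=\size\strone$ exhibits a reachable state of size $\Omega(2^{\size\strone})$. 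The one thing to check is that this swelling environment is never reclaimed, which is exactly where the absence of garbage collection (it is dead, being attached to a closed tail) and of environment sharing (it is copied, not pointed to) is used.

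For the third point I would run the \SpKAM and show that the round degenerates into a genuine cycle on bounded states. Tracing it once, unchaining ($\tokamseav$) pushes the resolved recursive closure rather than a renaming, eager collection ($\tokambw$) discards the dead second recursive copy and the dead $\Id$ at the two vacuous Scott binders, and restriction in $\tokamseanv$ attaches the \emph{empty} environment to the closed tail. The state thus returns to the fixed shape $\spkamstate{\vartwo(\var\var\vartwo)}{E_1}{(\encp\strtwo\Bool,\stempty)\cdot\stempty}$ with $E_1\defeq\esub\vartwo{(M,\stempty)}\cdot\esub\var{(\theta,\stempty)}$ constant, the only change between rounds being the sub-term pointer to the tail. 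Consequently the number of closures $\sizecl\state$ is bounded independently of $\strone$, so by \refdef{space-time-runs} the space of the run is the maximal sub-term-pointer size. Here I invoke left addresses (\refsect{lambda-addresses}): $\theta\theta M$ and $\Id$ sit at the front of the code with addresses $\bigo 1$—using that the left addresses inside a left subterm do not depend on the right one—while the tail reached after $k-1$ characters has left address $\Theta(k)$ and hence pointer size $\Theta(\log k)$. Maximizing over $k\le\size\strone$ yields the upper bound $\bigo{\log\size\strone}$, and at the last character ($k=\Theta(\size\strone)$) the machine already holds a pointer of size $\Omega(\log\size\strone)$, giving the matching lower bound.

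I expect the second point to be the main obstacle: one must name precisely the closure that doubles and prove that it survives. The subtlety is that this closure is \emph{dead}—being glued to a closed subterm it never affects the computation and is invisible to the decoding—so the bound $\size{G_{k+1}}\ge 3\,\size{G_k}$ has to be extracted from an occurrence-level reading of the raw \NaKAM trace, not from any behavioural or decoding argument. By contrast, once the cyclic \SpKAM shape is identified the third point is routine, and its pointer-size estimate is precisely the independence property of left addresses recalled in \refsect{lambda-addresses}.
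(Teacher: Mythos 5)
Your proposal is correct and, for the two substantive points, follows essentially the same route as the paper: point 2 is proved by tracing the \NaKAM and exhibiting an exponentially growing environment (the paper formalizes your $G_k$ as mutually recursive environments $\lenv_i,\lenvtwo_i,\lenvthree_i$ with $\size{\lenvthree_{i+1}}\geq 3\size{\lenvthree_i}$, exactly your tripling via the three stack closures each carrying the previous string environment), and point 3 is proved by showing the \SpKAM run cycles on a fixed state shape with a bounded number of closures (the paper counts $8$) whose only varying datum is the sub-term pointer to the remaining suffix, of size $\Theta(\log\size\strone)$. The only divergence is point 1: you prove it by a direct term-level induction on the one-round unfolding law, whereas the paper obtains it for free from the implementation theorem applied to the \SpKAM run of point 3; both are routine, and your left-address discussion in point 3, while not needed by the paper for this proposition (it matters only later, for the local-copy scrolling bound), is accurate.
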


We can see that the \NaKAM is desperately inefficient for space, while the 
\SpKAM works within reasonable space bounds.
It turns out, however, that the \SpKAM is still not enough in order to obtain a space reasonable 
simulation of TMs. The problem now concerns the standard encoding of TMs and 
its 
managing of the tapes, rather than the use of space by the abstract machine 
itself. The issues  can be explained using further toy algorithms.

\paragraph{String-Preserving Scrolling} Consider the same scrolling algorithm 
as above, except that now the input string $\str$ is \emph{not} consumed by 
the moving over $\str$, that is, it has to be given back as output of the 
$\l$-term implementing the algorithm. This variant is a step forward towards 
approximating  what happens to the tapes of TMs \emph{during} the computation: 
the TM moves over the tapes \emph{without consuming them}, it is only at the  
end of the computation that the TM can be seen as discarding the tapes.
There are two ways of implementing the new algorithm:
\begin{enumerate}
\item \emph{Local copy}: moving over the string $\str$ while accumulating in a new accumulator string $\strtwo$ the characters that have already been visited, returning $\strtwo$. 
\item \emph{Global copy}: making a copy $\strtwo$ of the string $\str$, and 
then moving over $\str$ in a consuming way, returning $\strtwo$.
\end{enumerate}

\paragraph{Local Copy} The local approach is the one underlying  the 
reference encoding of TMs. In particular, it is almost affine, as duplication 
is 
isolated in the fix-point. The $\l$-term $\localcopy$ realizing it uses the same fix-point schema as before, but with different, more involved 
action terms $A_{0}$, $A_{1}$, and $A_{\ems}$. We provide the following Proposition (and the next one) without proof, since $\localcopy$ is just a fragment of the encoding of TMs, for which we detail the execution by the \SpKAM  in Appendix~\ref{sec:main-proof}.
\begin{prop}
\label{prop:toytape}
Let $\strone\in\Bool^*$.
\begin{enumerate}
\item $\localcopy\, \encp\strone\Bool \towh^{\Theta(\size\strone)} \encp\strone\Bool$.
\item \label{p:toytape-2}The \SpKAM evaluates $\localcopy\, \encp\strone\Bool$ 
in space $\bigo{\size\strone\log\size\strone}$.
\end{enumerate}
\end{prop}
The $\bigo{\size\strone\log\size\strone}$ bound in point 2 is problematic for the space reasonable modeling in the $\l$-calculus of both the input and the work tapes, for different reasons. 

\paragraph{Work Tape and Left Addresses} For a space-reasonable managing of the work tape, a scrolling algorithm should rather work in space $\bigo{\size\strone}$. This improvement can be obtained by keeping the same algorithm and refining the complexity analysis. In \refpropp{toytape}{2}, the cost comes from the use of $\bigo{\size\str}$ sub-term pointers to the code $\localcopy\, \encp\strone\Bool$ used by the \SpKAM. These pointers have size $\bigo{\log\size\str}$ because $\size{\localcopy\, \encp\strone\Bool} = \bigo{\size\str}$, that is, the size of $\localcopy$ is independent of $\size\str$ and thus constant. A close inspection of the \SpKAM run in \refpropp{toytape}{2} shows that, of the used $\bigo{\size\str}$ pointers, only $\bigo{1}$ of them actually point to $\encp\strone\Bool$, while all the others (that is, an $\bigo{\size\str}$ amount) point to $\localcopy$. Since $\localcopy$ is of size independent from $\size\str$,  if one admits separate address spaces for $\localcopy$ and $\encp\strone\Bool$, as it is done using left addresses, then the pointers to $\localcopy$ have size $\bigo{1}$. And indeed the size of the left addresses that the \SpKAM uses for sub-terms pointers give addresses to the sub-terms in $\localcopy$ is independent from $\encp\strone\Bool$. Therefore, one obtains that the space cost is given by 
\[\underbrace{\bigo{\size\str}\cdot\bigo{1}}_{\mbox{pointers to 
}\localcopy}\ +\quad \underbrace{\bigo{1}\cdot 
\bigo{\log\size\str}}_{\mbox{pointers to }\encp\strone\Bool} = 
\bigo{\size\str}.\]

\begin{prop}[Linear Space Local-Copy Scrolling]
\label{prop:toytape2}
Let $\strone\in\Bool^*$. The \SpKAM evaluates $\localcopy\, \encp\strone\Bool$ 
in space $\bigo{\size\strone}$.
\end{prop}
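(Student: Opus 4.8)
The plan is to keep the \SpKAM run underlying \refpropp{toytape}{2} untouched and to refine only the pointer accounting, using the left-address scheme of \refsect{spkam}. Recall that the $\Theta(\size\strone\log\size\strone)$ bound of \refpropp{toytape}{2} arises as \emph{(number of closures)} $\times$ \emph{(uniform sub-term pointer size)}, that is, as $\bigo{\size\strone}$ closures each charged the worst-case pointer size $\bigo{\log\size\strone}$. The abstract space $\sizeaspace{\run}=\bigo{\size\strone}$ is thus already available; what remains is to replace this uniform charge with the \emph{exact} left-address size of each closure, and to show that the overwhelming majority of closures are charged only $\bigo{1}$.

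First I would classify, at an arbitrary reachable state, every closure $(\tmtwo,\lenv)$ by the position of its term-component in the code $\localcopy\,\encp\strone\Bool$. By the sub-term invariant, $\tmtwo$ is a literal sub-term of this code, hence a sub-term of $\localcopy$, a sub-term of $\encp\strone\Bool$, or the whole code. The key structural claim -- which I would state as an invariant and prove by induction on the run -- is that at most $\bigo{1}$ closures have their term-component inside $\encp\strone\Bool$ (a single advancing ``input tail'' pointer), while all the remaining $\bigo{\size\strone}$ closures have their term-component inside $\localcopy$. The reason is that the already-visited prefix is \emph{reconstructed} from the action sub-terms $A_0,A_1$ of $\localcopy$ selected by the pattern match, rather than \emph{copied} out of the input: hence the accumulator consists only of closures pointing into $\localcopy$, and the input is ever touched through one tail closure.

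Next I would bound the two classes of addresses. Since left addresses are in-order indices and $\localcopy\,\encp\strone\Bool$ is an application with $\localcopy$ on the left, every sub-term of $\localcopy$ gets an index below $\size\localcopy$, a constant in $\size\strone$, so its address has size $\bigo{1}$; this is exactly the property of \refsect{spkam} that for an application $\tmtwo\tmthree$ the addresses in $\tmtwo$ are independent of $\tmthree$. Dually, sub-terms of $\encp\strone\Bool$ get indices up to $\size{\localcopy\,\encp\strone\Bool}=\bigo{\size\strone}$, hence addresses of size $\bigo{\log\size\strone}$. Summing these sub-term pointer charges over a single state then gives
\[
\underbrace{\bigo{\size\strone}\cdot\bigo{1}}_{\mbox{pointers into }\localcopy}\;+\;\underbrace{\bigo{1}\cdot\bigo{\log\size\strone}}_{\mbox{pointers into }\encp\strone\Bool}\;=\;\bigo{\size\strone}.
\]
The environments do not raise this order: by the environment domain invariant (Lemma~\ref{l:spkam-invariants}) each of the $\bigo{\size\strone}$ closures has domain $\fv\tmtwo$ of size $\bigo{1}$ and $\bigo{1}$-size variable names under the representation of \refsect{moving}, adding at most a further $\bigo{\size\strone}$. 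Taking the maximum over the run preserves the bound, so $\lm{\run}=\bigo{\size\strone}$.

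The main obstacle is precisely the structural claim that only $\bigo{1}$ closures ever point into $\encp\strone\Bool$; the rest is a routine re-reading of \refpropp{toytape}{2} with exact address sizes. This claim is the exact place where the \emph{local-copy} design and the \emph{left-addressing} scheme interact, and it has to be read off from the detailed \SpKAM run. I would therefore isolate it as an explicit invariant -- at each reachable state the $\encp\strone\Bool$-sub-terms are carried only by the active closure or by a single stack/environment closure, all accumulator closures carrying $\localcopy$-sub-terms -- and check its preservation across the transition groups $\tokamseav,\tokamseanv$, then $\tokambw,\tokambnw$, and finally $\tokamsub$, reusing the run detailed in Appendix~\ref{sec:main-proof}.
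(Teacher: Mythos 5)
Your proposal is correct and follows essentially the same route as the paper: the paper also keeps the run of \refpropp{toytape}{2} unchanged and merely refines the pointer accounting, observing that only $\bigo{1}$ closures point into $\encp\strone\Bool$ while the remaining $\bigo{\size\strone}$ point into the constant-size $\localcopy$, whose left addresses are independent of the input, yielding the same sum $\bigo{\size\strone}\cdot\bigo{1}+\bigo{1}\cdot\bigo{\log\size\strone}=\bigo{\size\strone}$. Your additional step of isolating the ``single input-tail closure'' claim as an explicit inductive invariant simply makes rigorous what the paper asserts via a close inspection of the run.
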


\paragraph{Input Tape and Global Copy} For the input tape, a linear space bound for scrolling is unreasonable, if one aims at preserving logarithmic space complexity. For
meeting the required $\bigo{\log\size\strone}$ bound, we need a more radical solution, which shall be possible because the tape is read-only (and thus the solution does not directly apply to the work tape). 

The first step is the straightforward modification of the consuming scrolling algorithm into a global-copy string-preserving algorithm: it is enough to capture the input at the beginning with an extra abstraction $\l\var$ and to give it back at the end with the action $A_{\ems}$, that is, having $A_{\ems} \defeq \var$. Namely, let $\globalcopy \defeq \la\var(\fix (\la\varf\la {\strone'}\strone' \varf\varf \var) \var)$. Clearly, this approach breaks the almost affinity of the encoding, as copying is no longer encapsulated only in the fix-point. Formally, we have the following Proposition, the proof of which is in Appendix~\ref{app:moving}.2.
\begin{prop}
\label{prop:toytapeglobal}
Let $\strone\in\Bool^*$.
\begin{enumerate}
\item $\globalcopy\, \encp\strone\Bool \towh^{\Theta(\size\strone)} \encp\strone\Bool$.
\item The \SpKAM evaluates $\globalcopy\, \encp\strone\Bool$ in space 
$\Theta(\log\size\strone)$.
\end{enumerate}
\end{prop}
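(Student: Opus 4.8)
The plan is to fire the redexes by hand, isolating one fix-point unfolding per character of $\strone$. First I would fire the outermost redex, freezing the captured copy into the recursive body: $\globalcopy\,\encp\strone\Bool \towh \fix M'\,\encp\strone\Bool$, where $M' \defeq (\la\varf\la{\strone'}\strone'\varf\varf\var)\isub\var{\encp\strone\Bool}$. The core is then the auxiliary claim that for every suffix $\strtwo$ of $\strone$ one has $\fix M'\,\encp\strtwo\Bool \towh^{\Theta(\size\strtwo)} \encp\strone\Bool$, proved by induction on $\strtwo$. Using the two-step fix-point unfolding $\fix\tmtwo\towh^{2}\tmtwo\,(\fix\tmtwo)$ of the excerpt (all steps being head steps here) and two further head steps, the term rewrites to the pattern match $\encp\strtwo\Bool\,(\fix M')\,(\fix M')\,\encp\strone\Bool$. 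In the inductive case $\strtwo=\bool\strtwo'$ with $\bool\in\Bool$, three head steps select the recursive branch, leaving $\fix M'\,\encp{\strtwo'}\Bool$ and closing the induction at constant per-character cost; in the base case $\strtwo=\ems$ the Scott empty string selects instead its $\var_\varepsilon$-slot, which is exactly the third argument $\encp\strone\Bool$. Summing over the $\size\strone$ characters and instantiating at $\strtwo=\strone$ gives Point 1.

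\textbf{Point 2, abstract space.} For the space bound I would run the \SpKAM along this very reduction and show it cycles through a single shape of state, one cycle per character. Here $\tm_0=\globalcopy\,\encp\strone\Bool$ has $\size{\tm_0}=\Theta(\size\strone)$, and by the environment domain invariant (Lemma~\ref{l:spkam-invariants}) the captured variable $\var$ persists in a single environment entry bound to the closure $(\encp\strone\Bool,\stempty)$, since $\fv{M'}=\set\var$. The structural heart is the claim $\sizeaspace\run=\bigo 1$, i.e. the number of closures stays constant: each unfolding stacks two copies of the fix-point closure via the $\varf\,\varf$ pattern, but once the recursive branch is taken these copies become unreachable and are removed by \emph{eager garbage collection}, while \emph{unchaining} prevents a chain of renamings from $\strone'$ building up linearly. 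This is precisely the mechanism already at work in the consuming scrolling of \refpropp{toy}{3}, the present run differing only by the one extra, persistent closure for $\var$; I would make this a formal invariant on the cyclic state and prove it by induction on the number of characters consumed.

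\textbf{Point 2, concrete space.} It remains to weigh closures by the size of their left-address sub-term pointers. Since every such pointer is $\bigo{\log\size{\tm_0}}=\bigo{\log\size\strone}$, the general bound $\lm\run\le\bigo{\sizeaspace\run\cdot\log\size{\tm_0}}$ together with $\sizeaspace\run=\bigo 1$ gives the upper bound $\lm\run=\bigo{\log\size\strone}$. For the matching lower bound I would follow the pointer to the \emph{current} tail $\encp{\strtwo_k}\Bool$: because the Scott encoding nests its suffixes, the root of the $k$-th tail occupies left-to-right position $\Theta(k)$ in $\tm_0$, so near the end of the run the machine holds a state whose tail pointer is a left address of $\Theta(\log\size\strone)$ bits, forcing $\lm\run=\Omega(\log\size\strone)$. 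I would stress that the captured copy $\var\mapsto\encp\strone\Bool$ points to the root of the right subterm of $\tm_0$, whose left address has only $\bigo 1$ bits, so it is asymptotically free---which is exactly why global copy meets the logarithmic bound of consuming scrolling rather than the $\Theta(\size\strone\log\size\strone)$ of local copy (\refpropp{toytape}{2}).

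\textbf{Main obstacle.} The delicate step is the abstract-space claim: on the explicit run one must check, state by state, that eager garbage collection and unchaining really keep the closure count at $\bigo 1$ and do not let dead fix-point copies or a renaming chain accumulate with $\size\strone$---this is where a careful execution (as deferred to the appendix) is genuinely needed. By contrast, the left-address bookkeeping that separates the $\bigo 1$ captured pointer from the $\Theta(\log\size\strone)$ tail pointer is routine once the independence of left addresses in a left subterm from the right subterm is invoked.
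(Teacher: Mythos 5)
Your proposal is correct and follows essentially the same route as the paper: the paper likewise proves Point 2 by explicitly executing the \SpKAM with a per-character inductive invariant (Lemma~\ref{l:global}), observing that the number of closures stored at any time is bounded independently of $\strone$ while each sub-term pointer costs at most $\bigo{\log\size\strone}$, and it obtains Point 1 as a corollary of Point 2 through the implementation theorem rather than by your direct head-reduction induction on suffixes. Your explicit $\Omega(\log\size\strone)$ lower bound via the left address of the current tail, and your factoring of the upper bound through abstract space $\sizeaspace{\run}=\bigo{1}$, are small refinements of details the paper leaves implicit.
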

Interestingly, the space cost stays logarithmic, because the global copy of the input  in point 1 (in fact there actually is a copy for every iteration of the fix-point) is not performed by the \SpKAM, which instead only copies a \emph{pointer} to it. The second step is refining this scheme as to implement a read-only tape, rather than just scrolling the tape. A slight digression is in order.

\paragraph{Intrinsic and Mathematical Tape Representations}  A TM tape is a string plus a distinguished position, representing the head. There are two tape representations, dubbed \emph{intrinsic} and \emph{mathematical} by van Emde Boas in  \cite{DBLP:conf/sofsem/Boas12}. 
\begin{itemize}
\item The \emph{intrinsic} one represents both the string $\str$ and the current position of the head as the triple $\str = \str_{l} \cdot h \cdot \str_{r}$, where $\str_{l}$ and $\str_{r}$ are the prefix and suffix of $\str$ surrounding the character $h$ read by the head. This is the representation underlying the local-copy scrolling algorithm as well as the reference encoding of TMs. In this approach, reading from the tape costs $\bigo{1}$ time but the reading mechanism comes with a $\bigo{\size\str}$ space overhead, as showcased by \refprop{toytape2}.

\item
 The \emph{mathematical} representation, instead, is simply given by the index $n\in\nat$ of the head position, that is, the triple $\str_{l} \cdot h \cdot \str_{r}$ is replaced by the pair $(\str, \size{\str_{l}}+1)$. The index $\size{\str_{l}}+1$ has the role of a pointer, of logarithmic size when represented in binary. 
\end{itemize}

\paragraph{Mathematical Input and Global Copy} Given a \emph{mathematical} read-only tape $(\str,n)$, one can use the global-copy scrolling scheme for a simulation in the $\l$-calculus in space $\bigo{\log\size\str}$. The idea is to represent $n$ as a binary string $\ntostr{n}$. Since $n\leq \size\str$, we have $\size{\ntostr{n}}\leq \log\size\str$. Moreover, it is possible to pass from $\ntostr{n}$ to $\ntostr{n+1}$ or $\ntostr{n-1}$---which is needed to move the position of the head---in $\bigo{\log\size\str}$ space. Finally, reading from the tape, that is, given a tape $(\str,n)$, returning $(\str,n)$ plus the $n$-th character $\str_n$ of $\str$, is doable in space $\bigo{\log\size\str}$ via the following composite operation:
\begin{itemize}
\item making a global copy of the tape (returned at the end), 
\item scrolling the current copy of $n$ positions, 
\item extracting the head $\str_n$ of the obtained suffix, and 
\item discarding the tail.
\end{itemize}

Two remarks. First, this approach works because the tape is read-only, so that one can keep making global copies of the same immutable tape, and only changing the index of the head. Second, there is a (reasonable) time slowdown, because at each read the simulation has to scroll sequentially the input tape to get to the $n$-th character. Such a scrolling has time cost $\bigo{n\log n}$ because it is done by moving of (the encoding of ) a cell at a time, and  decrementing of one the index, until the index is $0$. Therefore, accessing the right cell requires to decrement the index $n$ times, each time requiring time $\log n$, because the index is represented in binary. Therefore, the time cost of a read operation is $\bigo{\size\str\log\size\str}$.

\section{The \SpKAM is Reasonable for Space}\label{sect:reasonability}
We are ready for our main result. It is based on a new variant over 
Dal Lago and Accattoli encoding of TMs into the $\l$-calculus 
\cite{DBLP:journals/corr/abs-1711-10078} which is defined in a separate document \cite{log-encoding} (including also the encoding in the $\l$-calculus of the binary arithmetic needed for the mathematical representation of the input tape), to spare the tedious details of the encoding to the reader. The key points 
are: 
\begin{itemize}
\item \emph{Refined TMs}: the notion of TM we work with is 
log-sensitive TMs with mathematical input tape and intrinsic work tape (the 
formal definition of TMs is in \cite{log-encoding}). 

\item \emph{CPS and indifference}: following \cite{DBLP:journals/corr/abs-1711-10078}, the encoding is in \emph{continuation-passing style}, and carefully designed (by adding some $\eta$-expansions) as to fall into the \emph{deterministic $\l$-calculus} $\detLam$, a particularly simple fragment of the $\l$-calculus where the right sub-terms of applications can only be variables or abstractions and where, consequently, call-by-name and call-by-value collapse on the same evaluation strategy $\tobdet$. We shall exploit this \emph{indifference} property in \refsect{cbv}.
\item \emph{Duplication}: duplication is isolated in the unfolding of fix-points and in the managing of the input tape, all other operations are affine.
\end{itemize}

\begin{thm}[TMs are simulated by the \SpKAM in reasonable space]
\label{thm:main-simulation}
There is an encoding $\enc{\cdot}$ of log-sensitive TMs into 
$\detLam$ such that if the run $\run$ of the TM $M$ on input $\istr\in\Bool^{*}$:
\begin{enumerate}
\item \label{p:main-simulation-one}
\emph{Termination}: ends in $\state_{\bool}$ with $\bool\in\Bool$, then
there is a complete sequence $\runtwo:\enc M\, \enc\istr \tobdet^n \enc{\state_{\bool}}$
where $n=\Theta((T_{\textrm{TM}}(\run)+1)\cdot \size\istr\cdot 
\log{\size\istr})$.

\item \label{p:main-simulation-two}
\emph{Divergence}: diverges, then $\enc M\, \enc\istr$ is $\tobdet$-divergent.

\item \label{p:main-simulation-three}
\emph{\SpKAM}: the space used by the \SpKAM to simulate the evaluation of  
point 1 is $\bigo{S_{\textrm{TM}}(\run) +\log\size\istr}$.
\end{enumerate}
\end{thm}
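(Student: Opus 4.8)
The plan is to derive all three points from one step-by-step simulation, tracking simultaneously the number of $\tobdet$ steps and the \SpKAM space. First I would fix a simulation invariant relating each reachable configuration of $M$---its state, its mathematical input tape $(\istr,n)$, and its intrinsic work tape---to a \SpKAM reachable state: the immutable code is $\enc M\,\enc\istr$, the once-unfolded fix-point $\fixnospace$ carries the transition table, the active term and stack encode the current CPS continuation, and the two tapes are encoded by closures in the mathematical and intrinsic representations of \refsect{moving}. Correctness of a single simulation block then amounts to unfolding $\fixnospace$ once, performing the pattern matching and arithmetic on the two encoded tapes, and re-installing the continuation; since the encoding lives in $\detLam$, the reduction order is irrelevant and the very same block works for both call-by-name and call-by-value, which I reuse in \refsect{cbv}.

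For the $\tobdet$ count of point~\ref{p:main-simulation-one} and the divergence of point~\ref{p:main-simulation-two} I would prove a one-step lemma: each TM transition is simulated by a block whose length is dominated by the cost of reading the current input cell. With the mathematical representation the head index $n\leq\size\istr$ is stored in binary, and, as discussed after \refpropp{toytapeglobal}{2}, reading the $n$-th character scrolls the input sequentially while decrementing $n$, costing $\bigo{\size\istr\cdot\log\size\istr}$ $\tobdet$ steps; all the remaining per-step work---reading the intrinsic work cell, selecting the transition-table entry, rewriting and moving over the work tape, and re-unfolding the fix-point---is $\bigo{1}$ in $\tobdet$ steps. Summing over the $T_{\textrm{TM}}(\run)$ transitions, with the ``$+1$'' absorbing set-up and the $T_{\textrm{TM}}(\run)=0$ case, gives $n=\Theta((T_{\textrm{TM}}(\run)+1)\cdot\size\istr\cdot\log\size\istr)$, the lower bound coming from the fact that every block necessarily traverses the input. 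Divergence is then immediate: a non-halting run enters infinitely many nonempty blocks, so the $\tobdet$ sequence is infinite.

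For the space bound of point~\ref{p:main-simulation-three} I would maximise a space invariant over the \SpKAM states occurring inside the blocks. Because the term implementing the transition function has size independent of $\istr$, the three kinds of pointers separate as in \refsect{moving}. The input is handled by the global-copy scheme of \refpropp{toytapeglobal}{2}: the \SpKAM duplicates only a $\bigo{\log\size\istr}$-bit pointer to the immutable Scott input together with the binary index, so this component stays within $\bigo{\log\size\istr}$. The work tape is scrolled by the local-copy scheme of Proposition~\ref{prop:toytape2}: it uses $\bigo{S_{\textrm{TM}}(\run)}$ closures, but by the left-addressing scheme their sub-term pointers land inside the fixed transition code and hence have size $\bigo{1}$, while only $\bigo{1}$ closures carry a $\bigo{\log\size\istr}$-bit pointer into the input. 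Adding these to the $\bigo{1}$ closures of the continuation and of the single fix-point unfolding yields $\sizespace{\run}^{I}=\bigo{S_{\textrm{TM}}(\run)}\cdot\bigo{1}+\bigo{1}\cdot\bigo{\log\size\istr}=\bigo{S_{\textrm{TM}}(\run)+\log\size\istr}$.

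The hard part is exactly this last, low-level space invariant, since bounding only the number of closures would give the abstract space and the unreasonable factor $\bigo{S_{\textrm{TM}}(\run)\cdot\log\size{\tm_{0}}}$. One must verify, block by block, that (i) eager garbage collection frees every discarded copy of a tape or a continuation so that no stale closure survives across TM steps, (ii) unchaining suppresses the renaming chains that scrolling would otherwise accumulate, and (iii) the left addresses of the sub-terms of the fixed transition code are genuinely $\bigo{1}$ and independent of $\istr$, as guaranteed by the left-address independence property recalled in \refsect{lambda-addresses}. Only when all three hold at once does the product $\bigo{S_{\textrm{TM}}(\run)}\cdot\bigo{1}$ survive; carrying this bookkeeping through the detailed \SpKAM run of the encoding---whose full specification is deferred to \cite{log-encoding}---is where the real work lies.
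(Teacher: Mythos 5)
Your proposal matches the paper's proof essentially step for step: points 1--2 are established by a per-transition block analysis of the encoding (delegated by the paper to the companion document \cite{log-encoding}), and point 3 by the same invariant you state---that each TM configuration is represented by a \SpKAM closure of size $\Theta(\size{\config}+\log\size\istr)$, with the work-tape closures carrying constant-size left addresses into the fixed transition code and only a bounded number of closures carrying $\bigo{\log\size\istr}$ pointers into the input---verified by executing the $\inits$, $\transs$, $\finals$ combinators block by block on the \SpKAM. You also correctly locate the real work in the low-level, per-block bookkeeping of garbage collection, unchaining, and left addresses, which is exactly what the paper's appendix carries out.
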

The previous theorem provides the subtle and important half of the space reasonability result. The first two points, proved in \cite{log-encoding}, establish the qualitative part of the simulation in the $\l$-calculus, together with the time bound (with respect to the number of $\beta$ steps). They are connected to the \SpKAM by the fact that the \SpKAM implements closed call-by-name (that coincides with $\tobdet$ in $\detLam$), via an omitted minor variant of \refthm{kam-impl}.
The third point is the important one, as it provides the space-reasonable simulation of the $\l$-calculus by the \SpKAM. It is proved by directly executing on the \SpKAM the $\l$-terms which are the image of the encoding of TMs. The proof is a tedious analysis of machine executions, and it is thus developed in Appendix~\ref{sec:main-proof}. The main ingredient is an invariant stating that, during the execution of the encoding on the \SpKAM, configurations of the encoded TM are represented by \SpKAM configurations of the same size.

The other half of the reasonability result amounts to showing that the \SpKAM can be simulated 
on TMs within the space costs claimed in \refsect{spkam}. The idea is 
that it can clearly be simulated reasonably by a multi-tape TM using one work tape 
for the active term (as a pointer to the fixed initial code), one for the 
environment, one for the stack, plus one for auxiliary pointers manipulations. Note that to encode the $\l$-calculus we use a notion of TM which is different from the one that we encoded in the $\l$-calculus, as there is no binary output, rather there is an output tape which, for the simulation, at the end of the execution is filled in with the content of the tape for the active term and the tape for the environment, which provide a shared representation of the result $\l$-term.

\begin{prop}[\SpKAM is simulated by TMs in reasonable space]
\label{thm:converse-simulation}
Let $\tm$ be a $\l$-term. Every \SpKAM run $\exec: \compil\tm 
\tospkam^{*} \state$  is implemented on TMs in space $\bigo{\lm{\run}}$. 
\end{prop}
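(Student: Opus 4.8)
The plan is to realize each step of the \SpKAM on a multi-tape TM whose read-only input tape holds the immutable code $\tm_0=\tm$ as $\underline{\tm_0}$ — and is therefore not counted for work space — while an unshared representation of the dynamic state is maintained on a fixed number of work tapes, mirroring the abstract implementation fixed in \refsect{spkam}. Concretely, I would use one work tape for the active term, stored simply as its left address ${\tt ladd}(\tmtwo,\tm_0)$ into the code; one tape for the environment $\lenv$ and one for the stack $\stack$, each written as a flat string juxtaposing its entries with no linking pointer between cells — a stacked closure or an entry $\esub\var\clos$ being recursively a variable name, a sub-term pointer, and a nested environment; and one auxiliary tape for address computations and counters. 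By construction the three representation tapes occupy space $\Theta(\sizespace{\state}^{I})$ for the current state $\state$, since the TM stores precisely the pointers, variable names, and juxtaposed cells that the abstract implementation charges for.

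The core of the argument is to check that each of the five transitions of \reffig{spkam} is realizable on this layout. Reading the head constructor of the active term reduces to moving to the string position ${\tt ladd}(\tmtwo,\tm_0)$ on the input tape; by the properties of left addresses recalled in \refsect{lambda-addresses}, locating a constructor and computing the addresses of its immediate sub-terms is doable in polynomial time and space logarithmic in $\size{\tm_0}$. With the head constructor in hand the TM performs the evident case analysis: for $\tokamseav$ and $\tokamseanv$ it overwrites the active pointer with that of the left sub-term and appends the relevant closure — the lookup $\lenv(\var)$ for $\tokamseav$, or $(\tmtwo,\lenv|_{\tmtwo})$ for $\tokamseanv$ — to the stack tape; for $\tokambnw$ it prepends the popped closure to the environment tape and descends into the body; for $\tokambw$ it erases the popped closure and descends; for $\tokamsub$ it replaces the active closure by the one read off the environment, which by the environment-domain invariant of the \SpKAM is a singleton. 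The two space-sensitive auxiliary operations are the garbage-collecting restriction of an environment to the free variables of the sub-term it is attached to (performed by the search transitions) and the occurrence test on the abstraction body deciding between $\tokambw$ and $\tokambnw$; both are carried out naively — scanning the relevant sub-term inside $\tm_0$ and testing whether the variables at hand occur — using $\bigo{\log\size{\tm_0}}$ auxiliary space and no reference counters, exactly as prescribed.

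For the space bound I would establish that at every step the TM uses space $\bigo{\sizespace{\state}^{I}}+\bigo{\log\size{\tm_0}}$: the first summand is the joint size of the three representation tapes, and the second bounds every counter, binder-depth, and candidate address living on the auxiliary tape during the navigation of the code. Taking the maximum over $\run$ turns the first summand into $\max_{\state\in\run}\sizespace{\state}^{I}=\lm{\run}$, so the simulation runs in space $\bigo{\lm{\run}+\log\size{\tm_0}}$. As $\size{\tm_0}=\size\tm$, this is $\bigo{\lm{\run}+\log\size\tm}$, which is exactly the additive space overhead permitted by the reasonable cost model of \refdef{reasonable}, so reasonability follows regardless of the remaining analysis.

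The one genuinely delicate point — and the main obstacle to the sharper, literal bound $\bigo{\lm{\run}}$ — is absorbing the auxiliary $\log\size{\tm_0}$ term into $\lm{\run}$. For the address arithmetic this goes through, since the child addresses a transition produces are stored in the successor state and hence have bit-size at most $\lm{\run}$, so the counters realizing their computation can be kept within $\bigo{\lm{\run}}$; the empty-run and already-final cases need no navigation at all. The residual difficulty is the occurrence test: confirming $\var\notin\fv\tmtwo$ for a deep body $\tmtwo$ forces a full scan with a binder-depth counter of size $\Theta(\log\,\mathsf{depth}(\tmtwo))$, and one can exhibit runs — for instance evaluating $(\la f.ff)(\la x.\tmthree)$ with $\tmthree$ a tall, closed, right-nested abstraction and $x\notin\fv\tmthree$ — in which all live pointers stay of size $\bigo{1}$, so that $\lm{\run}=\bigo{1}$ while the scan still costs $\Theta(\log\size{\tmthree})$. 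Thus the clean and self-contained statement is $\bigo{\lm{\run}+\log\size\tm}$, and the literal $\bigo{\lm{\run}}$ holds precisely in the regime where $\log\size\tm=\bigo{\lm{\run}}$, as holds on the image of the encoding of TMs, which is the intended input of the simulation.
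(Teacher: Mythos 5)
Your simulation is exactly the one the paper sketches for this proposition: a multi-tape TM with the code on a read-only tape, one work tape each for the active term (stored as a left address), the environment, and the stack as flat unlinked strings, plus an auxiliary tape for address arithmetic and naive occurrence tests, so the approach is correct and matches. Your further observation --- that the auxiliary space for navigating the code during occurrence tests is $\bigo{\log\size\tm}$ and need not be dominated by $\lm{\run}$, so that the self-contained bound is really $\bigo{\lm{\run}+\log\size\tm}$ --- is a legitimate sharpening of the paper's blanket remark that temporary auxiliary space is ``bounded by the current space''; it does not threaten the reasonability result, since Definition~\ref{def:reasonable} permits exactly this additive logarithmic term, and on the image of the encoding of TMs one has $\log\size\tm=\bigo{\lm{\run}}$ anyway.
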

\paragraph{Logarithmic Space Constructor Equality.} The definition of reasonable space cost model (\refdef{reasonable}) also requires that final configurations of the TM used in 
the simulation can be inspected in logarithmic space. For that, we provide a pseudo-code algorithm, presented as an inductive definition, that, given the initial term $\tm_0$, a closure $\clos_0=(\tm_{\symfont{fin}},\lenv_{\symfont{fin}})$ (meant to be the closure of the final state of the \SpKAM run), and a tree address 
$\addr_0$ (with respect to the decoded final term $\unf\clos$), returns the constructor of the term $\unf {\clos_0}$ at $\addr_0$. The algorithm navigates through $\clos_0$ using a notion of current closure $\clos=(\tmtwo,\env)$ which is either $\clos_0$ or a closure somewhere in $\lenv_{\symfont{fin}}$. The navigation process is realized via three pointers:
\begin{itemize}
\item \emph{Term pointer}: a pointer inside $\tmtwo$, which we represent abstractly as a decomposition of $\tmtwo$ into a context $\ctx$ and a sub-term $\tmthree$, thus using a triple $(\tmthree,\ctx,\lenv)$ to represent the closure with pointer $(\ctxp\tmthree,\lenv)$. Since $\tmtwo$ is a sub-term of the initial code, the term pointer actually moves over the initial code $\tm_0$. 

\item \emph{Address pointer}: a pointer inside $\addr_0$, which, similarly to the term pointer, is represented as a pair of addresses $(\addr,\addrtwo)$ such that $\addr\cdot\addrtwo=\addr_0$.

\item \emph{Closure pointer}: a pointer inside $\clos_0$ to the current closure. For the sake of simplicity, we do not represent the closure pointer explicitly (it would require to introduce closure contexts, and the technicality is not worth it).
\end{itemize}
Given the initial closure $\clos_0=(\tm_{\symfont{fin}},\lenv_{\symfont{fin}})$ and the initial address $\addr_0$, the algorithm is invoked as $\staddr{(\tm_{\symfont{fin}},\ctxhole,\env_{\symfont{fin}})}{(\ems,\addr_0)}$. It is defined as follows.
	\[\begin{array}{lcl@{\hspace{.9cm}}lll}
	\staddr{(\tm\tmtwo,\ctx,\lenv)}{(\addr,0\cons\addrtwo)} & \defeq 
	&\staddr{(\tm,\ctxp{\ctxhole\tmtwo},\lenv)}{(\addr\cons0,\addrtwo)} &
	 \staddr{(\tm\tmtwo,\lenv)}{(\addr,\ems)} & \defeq & @\\[3pt]
	\staddr{(\tm\tmtwo,\ctx,\lenv)}{(\addr,1\cons\addrtwo)} & \defeq 
	&\staddr{(\tmtwo,\ctxp{\tm\ctxhole},\lenv)}{(\addr\cons1,\addrtwo)} &
	\staddr{(\la\var\tm,\ctx,\lenv)}{(\addr,\ems)} & \defeq & \lambda \\[3pt] 
	\staddr{(\la\var\tm,\ctx,\lenv)}{(\addr,b\cons\addrtwo)} & \defeq 
	&\staddr{(\tm,\ctxp{\la\var\ctxhole},\lenv)}{(\addr\cons b,\addrtwo)}
	\\
	\end{array}\]
		\[\begin{array}{lcl@{\hspace{.9cm}}llc}
	\staddr{(\var,\ctx,\lenv)}{(\addr,\addrtwo)} & \defeq&  \begin{cases}
	\db(\var) & \text{ if }\var\not\in\dom\lenv \text{ and }\addrtwo=\ems,\\
	\staddr{(\tm,\ctxhole,\lenvtwo)}{(\addr,\addrtwo)} &  \text{ if }\var\in\dom\lenv \text{ and } \lenv(\var)=(\tm,\lenvtwo)\\
	\bot & \text{ if }\var\not\in\dom\lenv \text{ and }\addrtwo\neq\ems
	\end{cases} \end{array}\]
The algorithm moves the pointer inside a sub-term according to the address $\addrtwo$. The key case is the one of a variable $\var$, for which there are three possible outcomes: 
\begin{itemize}
\item \emph{Return}: if the address $\addrtwo$ is empty, the de Bruijn index of the variable is returned (the initial code is closed by hypothesis). Note that terms are not necessarily represented with de Bruin indices: it is a convenient representation, but also with other representations, one can usually compute the de Bruijn index of a variable occurrence in logarithmic space;
\item \emph{Jump}: if the address $\addrtwo$ is non-empty and $\var$ is bound by the environment, that is, $\var\in\dom\lenv$, then the algorithm moves to navigate the closure $(\tm, \lenv')$ associated to $\var$ by $\lenv$. Here the implicit closure pointer changes, and it is the only point of the algorithm where it changes;
\item \emph{Error}: if the address $\addrtwo$ is non-empty and $\var$ is not bound by the environment then it means that $\addr_0$ is an address that it is not compatible with the structure of the result term $\unf\clos$ of the computation. Thus the algorithm returns the undefined symbol $\bot$.
\end{itemize}
Note that the algorithm never needs to backtrack. The space complexity of the algorithm is easily verified to be $\bigo{\log\size{\addr}+\log\size{\tm}+\log\size\clos}$, because essentially one only needs to manipulate the term, address, and closure pointers (plus possibly a constant number of auxiliary pointers to implement the algorithm). Therefore, we obtained the following result.

\begin{prop}[Logarithmic space constructor equality]
\label{prop:lin-space-constr-eq}
Let $\tm$ be a closed $\l$-term, $\exec: \compil\tm 
\tospkam^{*} \state$ be a \SpKAM run,	and $\addr$ be a tree address. Then computing $(\unf \state)|_\addr$ has space complexity
	$\bigo{\log\size\addr + \log\size \state+ \log{\size\tm}}$.
\end{prop}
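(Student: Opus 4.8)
The plan is to prove that the displayed procedure $\staddr{\cdot}{\cdot}$ both \emph{computes} $(\unf\state)|_\addr$ and does so within the stated space, the former being the substantial part and the latter then a matter of bookkeeping. As a preliminary reduction, write the closure of $\state$ as $\clos_0=(\tm_{\symfont{fin}},\lenv_{\symfont{fin}})$: on the final states required by \refdef{reasonable} the stack is empty (as for the \NaKAM), so $\unf\state=\unf{\clos_0}$ and the procedure applies directly, invoked as $\staddr{(\tm_{\symfont{fin}},\ctxhole,\lenv_{\symfont{fin}})}{(\ems,\addr_0)}$ with $\addr_0=\addr$. A residual stack would only prepend finitely many topmost applications to $\unf\state$, navigable with one extra logarithmic pointer, and does not affect the analysis.

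For correctness I would prove, by induction on the length of the remaining address $\addrtwo$ with a secondary induction on the well-founded environment structure traversed by the jumps, the invariant that from any configuration reachable from the initial call the procedure returns $(\unf{\clos_0})|_{\addr_0}$, the focus $(\tmthree,\ctx)$ marking the position $\addr$ of $\unf{\clos_0}$ inside the currently visited closure, so that $\addr\cdot\addrtwo=\addr_0$ throughout. The application and abstraction clauses match, constructor for constructor, the corresponding clauses of the decoding $\unf\cdot$ and of the tree-address function $\staddr{\cdot}{\cdot}$ of \refsect{lambda-addresses}, since $\unf\cdot$ is a homomorphism on applications and abstractions. The crux is the variable clause, whose three-way split rests on the environment domain invariant (Lemma~\ref{l:spkam-invariants}): a variable of the current term is \emph{environment-bound} exactly when $\var\in\dom\lenv$ and \emph{context-bound} exactly when $\var\notin\dom\lenv$. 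In the former case the decoding substitutes $\unf{\lenv(\var)}$ for $\var$, so the procedure correctly jumps into $\lenv(\var)=(\tm,\lenvtwo)$, resetting the context to the hole. In the latter case the decoded term has a variable leaf here, returned as its de Bruijn index, or $\bot$ when $\addrtwo\neq\ems$, i.e. when the address runs past a leaf. The delicate point, and the reason resetting $\ctx$ at every jump is sound, is that this index can be read off from the current $\ctx$ alone: since every closure is closed (the closures-are-closed invariant) and substitution is capture-avoiding, the binders enclosing the occurrence in $\unf{\clos_0}$ are exactly the abstraction frames on the current closure's spine, with no contribution from the contexts discarded at earlier jumps. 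Termination of the jump chains, and hence of the procedure, follows from the finiteness and acyclicity of $\clos_0$.

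For the space bound the decisive observation is that the procedure is \emph{tail-recursive}: each clause's right-hand side is a single recursive call, so it runs as a loop maintaining only its pointers, with no call stack to charge. The remaining address is kept as one index into the read-only input $\addr_0$, of size $\bigo{\log\size\addr}$; the term-plus-context focus is a single position inside the current closure's term, which by the sub-term invariant is a literal sub-term of the code $\tm$, hence a pointer of size $\bigo{\log\size\tm}$; and the pointer locating the current closure inside $\state$ has size $\bigo{\log\size\state}$. Computing the de Bruijn index in the return clause, performing the environment lookup in the jump clause, and the constantly many auxiliary pointers all stay within $\bigo{\log\size\tm+\log\size\state}$, so summing yields $\bigo{\log\size\addr+\log\size\state+\log\size\tm}$. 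I expect the main obstacle to be precisely the de Bruijn bookkeeping in the correctness argument, namely convincing oneself that resetting the context at each jump loses no binder information, which hinges on the closedness of closures; by contrast, once tail-recursion is seen to preclude a logarithmic-depth call stack, the space analysis itself is routine.
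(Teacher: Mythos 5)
Your proposal is correct and follows essentially the same route as the paper: the same navigation procedure with a term/context pointer, an address pointer, and an (implicit) closure pointer, the same three-way case split at variables, and the same observation that the absence of backtracking (tail recursion) means only a constant number of logarithmic pointers are ever live. You are in fact more detailed than the paper, which omits the correctness argument entirely and merely asserts that the space bound is "easily verified"; your discussion of why the de Bruijn index is recoverable from the current context alone (via closedness of closures) and of the residual stack fills genuine gaps the paper leaves to the reader.
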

	
Then, from the two simulations (Theorem \ref{thm:main-simulation} 
and Proposition 
\ref{thm:converse-simulation}) and the logarithmic space constructor equality test (\refprop{lin-space-constr-eq}), our main result follows.
\begin{thm}[The \SpKAM is reasonable for space]\label{thm:main-thm-spkam}
\ccbn evaluation $\towh$ and the space of the \SpKAM provide a reasonable space cost model for the $\l$-calculus.
\end{thm}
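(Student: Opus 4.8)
The plan is to instantiate \refdef{reasonable} with the strategy $\to\ \defeq\ \towh$ and the space function $S_{\l}(\deriv) \defeq \lm\run$, where $\run:\compil\tm\tospkam^{*}\state$ is the \SpKAM run implementing the complete $\towh$-sequence $\deriv$, the two being related by the minor variant of \refthm{kam-impl} for the \SpKAM. With these choices fixed, the whole theorem reduces to checking that the three ingredients already established---Theorem~\ref{thm:main-simulation}, Proposition~\ref{thm:converse-simulation}, and \refprop{lin-space-constr-eq}---together with the imported polytime result equality, fit exactly the two bullets of the definition in their space version. Note that $S_{\l}$ is thereby defined on \emph{every} complete $\towh$-sequence, not only on those arising from encoded TMs.

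For the first bullet (simulation of TMs inside the $\l$-calculus), I would take $\enc\cdot$ to be the encoding of log-sensitive TMs of Theorem~\ref{thm:main-simulation}. Its termination clause provides, for a terminating run $\run$ of $M$ on $\istr$ ending in $\state_\bool$, a complete sequence $\enc M\,\enc\istr\tobdet^{n}\enc{\state_\bool}$ (with a polynomial time bound, which incidentally yields time reasonability as well), its divergence clause provides the preservation of divergence, and its \SpKAM clause provides precisely the required space bound $S_{\l}(\deriv)=\bigo{S_{\textrm{TM}}(\run)+\log\size\istr}$. Since the encoding lands in $\detLam$, where $\tobdet$ coincides with closed call-by-name $\towh$ and the \SpKAM implements the latter, this sequence is also a $\towh$-sequence and $S_{\l}$ on it is literally the low-level \SpKAM space; moreover the final term $\enc{\state_\bool}$ is the representation of the boolean output $\bool$ matching $\enc\bool$ in the definition, so no translation between cost measures is needed.

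For the second bullet (simulation of the $\l$-calculus on TMs), Proposition~\ref{thm:converse-simulation} gives a multitape TM simulating any \SpKAM run $\exec:\compil\tm\tospkam^{*}\state$ in space $\bigo{\lm\run}=\bigo{S_{\l}(\deriv)}$, with the decoding $\unf\cdot$ of final configurations recovering the shared result $\tmtwo$; the additive $\log\size\tm$ permitted by \refdef{reasonable} is absorbed because $\lm\run\ge\log\size{\tm_0}$, at least one sub-term pointer to the code being always present. Of the two side conditions, logarithmic space constructor equality is exactly \refprop{lin-space-constr-eq}, while polytime result equality is not re-proved here: it holds for the environment/explicit-substitution style of sharing used by the \SpKAM by the algorithm of Condoluci et al.\ (\cf the discussion after \refdef{reasonable}), and I would simply cite it.

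The genuine mathematical content lives entirely in the cited statements, so the only real obstacle in assembling the theorem is the bookkeeping across the log-sensitivity boundary. Concretely, one must verify that the two additive logarithmic slacks in \refdef{reasonable} really coincide with the pointer overheads on the input ($\log\size\istr$ in one direction, $\log\size\tm$ in the other), and, crucially, that the \SpKAM space $\lm\run$ taken as $S_{\l}$ counts only work space and never the immutable input code: this is exactly what keeps the bound $\bigo{S_{\textrm{TM}}(\run)+\log\size\istr}$ genuinely sub-linear in $\size\istr$ rather than linear whenever $S_{\textrm{TM}}(\run)$ is, say, logarithmic. Once this alignment is confirmed, the theorem follows immediately.
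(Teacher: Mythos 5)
Your proposal is correct and follows essentially the same route as the paper, whose proof of this theorem is precisely the assembly of Theorem~\ref{thm:main-simulation}, Proposition~\ref{thm:converse-simulation}, and \refprop{lin-space-constr-eq} against \refdef{reasonable}, with polytime result equality imported from the literature as you do. Your extra remarks (the indifference of $\detLam$ aligning $\tobdet$ with $\towh$, and the absorption of the additive $\log\size\tm$ slack) are sound elaborations of points the paper leaves implicit.
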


\subsection{The \SpKAM is not Reasonable for Abstract Time.} 
\label{sect:unreas-time}
We complete our 
study of the \SpKAM by analyzing its time behavior. For abstract time (in our 
case, the number of \ccbn $\beta$ steps), the \SpKAM is unreasonable, because 
simulating \ccbn at times requires exponential overhead. The number of 
transitions of the \SpKAM is reasonable, while it is the cost of single transitions, 
thus of the manipulation of data structures, that can explode. The failure 
stems from the lack of data sharing, which on the other hand we showed being 
mandatory for space reasonability. Essentially, there are size exploding 
families such that their \SpKAM run produces environments of size exponential 
in the number of $\beta$ steps/transitions.
%

In order to prove this result we need to define the family of $\lambda$-terms $\{\tm_{n}\}_{n\in\nat}$ and prove two auxiliary lemmas. We first define the following data structures:
\[
\begin{array}{rcl@{\hspace{2cm}}rcl}
\lenv_0 & \defeq & \esub{\var_0}{(\Id,\stempty)} & \lenv_{n+1} & \defeq & 
\esub{\var_{n+1}}{\stack_n}\cdot\lenv_n
\\[4pt]
\stack_0 & \defeq & (\var_0\var_0,\lenv_0) & \stack_{n+1} & \defeq & 
(\var_0..\var_{n+1},\lenv_{n+1})
\end{array}
\]
Note that the size of $\lenv_{n}$ is exponential in $n$. 
\begin{lem}\label{l:expds}
	For each $n\in\nat$, $\size{\lenv_{n}}\geq2^n$.
\end{lem}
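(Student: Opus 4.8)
The plan is to prove the bound by a straightforward induction on $n$, the only conceptual ingredient being the observation that $\lenv_n$ occurs \emph{twice} inside $\lenv_{n+1}$: once directly, as the tail of the list $\lenv_{n+1} \defeq \esub{\var_{n+1}}{\stack_n}\cdot\lenv_n$, and once nested inside the closure $\stack_n = (\tmtwo,\lenv_n)$ that is stored in the entry for $\var_{n+1}$. Since the \SpKAM does \emph{not} share environments, both occurrences are genuinely counted by the additive size function of \reffig{kam}, and this duplication is exactly what forces the exponential blow-up. This is also the whole point of the example: with environment sharing the second copy would cost only a pointer, whereas here it is paid in full.

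For the base case $n=0$, I would simply unfold the size of an environment, obtaining $\size{\lenv_0} = \size{\var_0} + \size{(\Id,\stempty)} + \size{\stempty}$, and note that, since all summands are non-negative and the closure $(\Id,\stempty)$ has strictly positive size, this is $\geq 1 = 2^0$. For the inductive step, assuming $\size{\lenv_n}\geq 2^n$, I would unfold $\size{\lenv_{n+1}} = \size{\var_{n+1}} + \size{\stack_n} + \size{\lenv_n}$ and then use that the closure $\stack_n$ has environment component $\lenv_n$, so that $\size{\stack_n} = \size{\tmtwo} + \size{\lenv_n} \geq \size{\lenv_n}$. Dropping the non-negative summand $\size{\var_{n+1}}$ yields $\size{\lenv_{n+1}} \geq \size{\lenv_n} + \size{\lenv_n} = 2\,\size{\lenv_n} \geq 2\cdot 2^n = 2^{n+1}$ by the induction hypothesis.

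The arithmetic is routine and I expect no genuine obstacle; the lemma is essentially an accounting observation whose only subtlety is correctly recognizing the two independent contributions of $\lenv_n$. The single point worth stating explicitly is the inequality $\size{\stack_n}\geq\size{\lenv_n}$, which holds uniformly for all $n$ because every $\stack_n$ is a closure carrying $\lenv_n$ as its environment, and it is this inequality—rather than any property of the term components—that drives the doubling. The result then feeds directly into the forthcoming argument that the \SpKAM is unreasonable for abstract time, since these exponentially large environments are produced by a run whose number of $\beta$-transitions is only linear in $n$.
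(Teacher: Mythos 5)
Your proof is correct and is essentially the paper's own argument: the paper also derives $\size{\lenv_{n+1}}\geq 2\size{\lenv_{n}}$ from the two unshared occurrences of $\lenv_n$ in $\lenv_{n+1}$ (one as the tail, one inside the closure $\stack_n$) and concludes by induction. You merely spell out the base case and the inequality $\size{\stack_n}\geq\size{\lenv_n}$ that the paper leaves implicit.
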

\begin{proof}
	Since 
	$\lenv_{n+1} \defeq \esub{\var_{n+1}}{\stack_n}\cdot\lenv_n= 
	\esub{\var_{n+1}}{(\var_0..\var_{n},\lenv_{n})}\cdot\lenv_n$, we have
	 $\size{\lenv_{n+1}}\geq 2\size{\lenv_{n}}$ and thus $\size{\lenv_{n}}\geq2^n$.
\end{proof}
Now, we define the family of contexts $\{\ctx_n\}_{n\in\nat}$ as follows:
\[
\begin{array}{rcl}
\ctx_0 &\defeq& \lambda \var_0.\ctxhole(\var_0\var_0)
\\
\ctx_{n+1} &\defeq& 
\lambda\var_{n+1}.\ctxhole(\var_0\ldots\var_{n+1})
\end{array}
\]
The execution of $\ctx_n\ctxholep\tm$  generates the stacks $\stack_n$ and environments $\lenv_n$ defined above, provided that $\var_0,\ldots,\var_n$ appear free in $\tm$.
\begin{lem}\label{l:counter-tr}
	For each $\lambda$-term $\tm$, if the variables $\var_0,\ldots,\var_n$ appear free in $\tm$, then \\
	$\spkamstate{\ctx_0\ctxholep{\ctx_1\ctxholep{\ldots\ctx_{n} 
				\ctxholep{\tm}\ldots}}\Id} \stempty\stempty\rightarrow_{\mathrm\SpKAM}^{\Theta(n)} 
	\spkamstate{\tm}{\lenv_n}{\stack_n}$.
\end{lem}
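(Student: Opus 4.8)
The plan is to prove the sharper statement that the run has length exactly $2n+3$ (whence $\Theta(n)$), by induction on $n$, executing the \SpKAM and verifying which rule fires at each transition, in the style of the preceding $\eta$-expansion lemma. The key structural observations are that the contexts nest as $\ctx_k\ctxholep{N} = \la\var_k.N(\var_0\ldots\var_k)$ for $k\geq 1$ and $\ctx_0\ctxholep{N} = \la\var_0.N(\var_0\var_0)$, and that in each case the argument that $\ctx_k$ applies to the content of its hole is \emph{exactly} the term component of $\stack_k$. I would run the induction by peeling the \emph{innermost} context: in the case $n=m+1$, setting $\tm' \defeq \ctx_{m+1}\ctxholep\tm = \la\var_{m+1}.\tm(\var_0\ldots\var_{m+1})$, the code $\ctx_0\ctxholep{\ctx_1\ctxholep{\ldots\ctx_{m+1}\ctxholep\tm\ldots}}\,\Id$ is syntactically the $m$-instance of the family with inner term $\tm'$. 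Since only $\var_{m+1}$ is newly bound in $\tm'$, the variables $\var_0,\ldots,\var_m$ still occur free in $\tm'$ (already inside the subterm $\var_0\ldots\var_{m+1}$), so the induction hypothesis applies to $\tm'$, and it returns the \emph{same} data structures $\lenv_m$ and $\stack_m$, since these are defined independently of the inner term.

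For the base case $n=0$ the code is $(\la\var_0.\tm(\var_0\var_0))\,\Id$, and three transitions reach $\spkamstate{\tm}{\lenv_0}{\stack_0}$: a $\tokamseanv$ step (firing, rather than $\tokamseav$, because $\Id\notin\mathcal V$), then a $\tokambnw$ step (firing, rather than the collecting $\tokambw$, because $\var_0\in\fv{\tm(\var_0\var_0)}$) that builds $\lenv_0$, then a final $\tokamseanv$ step that pushes $(\var_0\var_0,\lenv_0)=\stack_0$. For the inductive step I would take the run supplied by the hypothesis up to $\spkamstate{\tm'}{\lenv_m}{\stack_m}$ and append exactly two transitions: a $\tokambnw$ step that pops $\stack_m$ and forms $\esub{\var_{m+1}}{\stack_m}\cdot\lenv_m = \lenv_{m+1}$, leaving active term $\tm(\var_0\ldots\var_{m+1})$, followed by a $\tokamseanv$ step that pushes $(\var_0\ldots\var_{m+1},\lenv_{m+1})=\stack_{m+1}$ and leaves $\tm$ as the active term. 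This yields a run of length $(2m+3)+2 = 2(m+1)+3$, as required.

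The invariant that makes all this go through is the domain invariant $\dom\lenv_k = \{\var_0,\ldots,\var_k\}$, immediate from the recurrence defining $\lenv_k$, together with the fact that the free variables of the argument applied by $\ctx_k$ coincide with $\dom\lenv_k$. These guarantee that every environment restriction performed by a $\tokamseanv$ transition acts as the identity: the restriction $\lenv|_{\tmtwo}$ to the pushed argument keeps the whole environment because the argument's free variables are exactly $\dom\lenv_k$, and the restriction $\lenv|_{\tm}$ to the continuing context keeps it because $\var_0,\ldots,\var_k$ all occur free there (only higher-indexed variables are bound by the head $\la$). The hypothesis $\var_0,\ldots,\var_n\in\fv\tm$ is needed precisely at the final $\tokamseanv$ transition, where the environment is restricted to $\fv\tm$ and must retain all of $\lenv_n$; it is carried through the induction by the free-variable check that licenses applying the hypothesis to $\tm'$.

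I expect the only real difficulty to be bookkeeping rather than insight: one must confirm at every step that the intended rule, and not its sibling, fires---always $\tokamseanv$ because every argument is an application (never a bare variable, so never $\tokamseav$), and always $\tokambnw$ because each abstracted variable $\var_k$ genuinely occurs (never the collecting $\tokambw$)---and, most delicately, that the \SpKAM's eager garbage collection never silently discards an environment entry. Aligning the machine's on-the-fly restrictions with the closed-form unfoldings $\lenv_{k+1} = \esub{\var_{k+1}}{\stack_k}\cdot\lenv_k$ and $\stack_{k+1} = (\var_0\ldots\var_{k+1},\lenv_{k+1})$ is the one place where an inattentive reading could slip.
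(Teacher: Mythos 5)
Your proposal is correct and follows essentially the same route as the paper: induction that treats $\ctx_{m+1}\ctxholep\tm$ as the inner term of the $m$-instance, applies the induction hypothesis to reach $\spkamstate{\ctx_{m+1}\ctxholep\tm}{\lenv_m}{\stack_m}$, and then appends one $\tokambnw$ and one $\tokamseanv$ transition, with the same three-step base case. Your extra bookkeeping (the exact length $2n+3$, the domain invariant $\dom{\lenv_k}=\{\var_0,\ldots,\var_k\}$, and the check that the eager restrictions act as the identity) is sound and only makes explicit what the paper leaves implicit in its transition tables.
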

\begin{proof}
	We proceed by simply executing the \SpKAM.
	\begin{itemize}
	\item Case $n=0$.
	\[\begin{array}{l|l|ll}
	\mathsf{Term}   & 
	\mathsf{Env} & \mathsf{Stack}\\
	\hhline{---}
	\ctx_0\ctxholep\tm\Id & \stempty & 
	\stempty 
	& \tokamseanv
	\\
	\ctx_0\ctxholep\tm\defeq\lambda \var_0.\tm(\var_0\var_0) & \stempty & 
	(\Id,\stempty)
	& \tokambnw
	\\
	\tm(\var_0\var_0) & \esub{\var_0}{(\Id,\stempty)} & \stempty
	& \tokamseanv
	\\
	\tm & \esub{\var_0}{(\Id,\stempty)} & 
	(\var_0\var_0,\esub{\var_0}{(\Id,\stempty)})\\
	\end{array}\]
	
	\item Case $n\geq 1$.
	We observe that $\ctx_0\ctxholep{\ctx_1\ctxholep{\ldots\ctx_{n} 
			\ctxholep{\ctx_{n+1}\ctxholep\tm}\ldots}}\Id$ can be rewritten to\\ 
	$\ctx_0\ctxholep{\ctx_1\ctxholep{\ldots\ctx_{n} 
			\ctxholep{\tmtwo}\ldots}}\Id$, where 
	$\tmtwo\defeq\ctx_{n+1}\ctxholep\tm$. Of 
	course $\var_0,\ldots,\var_n$ appear free in $\tmtwo$. We can thus immediately apply the \ih
	\[\begin{array}{l|l|ll}
	\mathsf{Term}   & 
	\mathsf{Env} & \mathsf{Stack}\\
	\hhline{---}
	\ctx_0\ctxholep{\ctx_1\ctxholep{..\ctx_{n} 
			\ctxholep{\ctx_{n+1}\ctxholep\tm}..}}\Id & \stempty & 
	\stempty & \rightarrow_{\mathrm\SpKAM}^{\Theta(n)}\ (\ih)\\
	\underbrace{\lambda \var_{n+1}.\tm(\var_0..\var_{n+1}) }_{\ctx_{n+1}\ctxholep\tm}
	& \lenv_n & \stack_n
	&\tokambnw\\
	\tm(\var_0..\var_{n+1}) & \underbrace{\esub{\var_{n+1}}{\stack_n} \cons 
	\lenv_n}_{=:\lenv_{n+1}} & 
	\stempty
	&\tokamseanv\\
	\tm & \lenv_{n+1} & (\var_0..\var_{n+1},\lenv_{n+1})=:\stack_{n+1} \\
	\end{array}\qedhere\]

\end{itemize}	
\end{proof}
Finally, we prove that the \SpKAM is not reasonable with respect to abstract time.
\begin{prop}[\SpKAM abstract time overhead explosion]
\label{prop:spkam-not-reas-for-nat-time}
	There is a family $\{\tm_{n}\}_{n\in\nat}$ of closed $\l$-terms such that there is a complete evaluation $\run_{n}:\tm_{n} \towh^{n} \tmtwo_{n}$ simulated by \SpKAM runs $\runtwo_{n}$ taking both space and time exponential in $n$, that is, $\sizespace{\runtwo_{n}}=\sizetime{\runtwo_{n}}=\Omega(2^{n})$.
\end{prop}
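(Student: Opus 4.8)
The plan is to synthesize the proposition from the two lemmas just proved, reusing the context family $\{\ctx_n\}$ and the environments $\{\lenv_n\}$. As witnesses I would take the closed terms
\[
\tm_n \defeq \ctx_0\ctxholep{\ctx_1\ctxholep{\ldots\ctx_n\ctxholep{\var_0\var_1\cdots\var_n}\ldots}}\Id ,
\]
plugging $\var_0\var_1\cdots\var_n$ into the innermost hole. This plugged term has exactly $\var_0,\ldots,\var_n$ free, so $\tm_n$ is closed (the binders of $\ctx_0,\ldots,\ctx_n$ capture them, and the trailing $\Id$ supplies the argument of the outermost abstraction), and it meets the free-variable hypothesis of Lemma~\ref{l:counter-tr}.

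For the space bound I would first apply Lemma~\ref{l:counter-tr}, obtaining a prefix run of length $\Theta(n)$ that reaches the state $\spkamstate{\var_0\cdots\var_n}{\lenv_n}{\stack_n}$. Since the machine space of a state is at least the size of its environment, Lemma~\ref{l:expds} gives $\sizespace{\runtwo_n}\geq\size{\lenv_n}\geq 2^n$ immediately. I would note that the recursion $\size{\lenv_{n+1}}\geq 2\size{\lenv_n}$ survives the \SpKAM's left-address weighting of sub-term pointers, because every summand in the size of an environment is non-negative, so the exponential lower bound is robust to the change of abstract implementation.

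For the time bound I would isolate, inside the inductive step of Lemma~\ref{l:counter-tr}, the single $\tokambnw$ transition that builds $\lenv_n$ from $\lenv_{n-1}$: it relocates the top stack closure $\stack_{n-1}=(\var_0\cdots\var_{n-1},\lenv_{n-1})$, of size $\geq\size{\lenv_{n-1}}\geq 2^{n-1}$, into the environment. Here I would invoke the defining discipline of the \SpKAM, namely the \emph{absence of data pointers}: environments and closures are contiguous strings copied by value, so materializing an entry carrying a closure of size $\Omega(2^n)$ forces $\Omega(2^n)$ work in that transition alone, whence $\sizetime{\runtwo_n}$ is $\Omega(2^n)$ as well. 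Finally I would check completeness: from $\spkamstate{\var_0\cdots\var_n}{\lenv_n}{\stack_n}$ the machine normalizes (the decoded closure reduces to $\Id$, since $\var_0$ is bound to $\Id$ and $\Id$ absorbs its arguments, so every involved closure decodes to a term reducing to $\Id$), and a direct execution shows this residual phase adds only $\Theta(n)$ further $\beta$-steps. By the implementation theorem (\refthm{kam-impl}, in its \SpKAM variant) this yields a complete $\towh$-sequence $\run_n$ with $\Theta(n)$ $\beta$-steps, so that $2^n$ is genuinely exponential in the abstract time.

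The step I expect to be the real obstacle is the time lower bound, or rather its rigour: the abstract implementation fixes transition times only as an \emph{upper} bound $poly(\size\state)$, so I must argue \emph{independently} that the exponential-size environment cannot be produced more cheaply under the no-data-pointer regime, i.e.\ that it must be physically materialized rather than shared behind a pointer. The remaining verifications---closedness of $\tm_n$, termination of the residual phase, and the $\Theta(n)$ count of its $\beta$-steps---are routine executions of the machine.
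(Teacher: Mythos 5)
Your choice of witness term breaks the proposition at the completeness/abstract-time step. Plugging $\var_0\cdots\var_n$ into the innermost hole means that, after the $\Theta(n)$-step prefix given by Lemma~\ref{l:counter-tr}, the machine must actually \emph{evaluate} the state $(\var_0\cdots\var_n,\lenv_n,\stack_n)$, whose decoding is a term built from $\Theta(2^n)$ applications of $\Id$: each $\unf{\stack_{i+1}}$ contains a copy of $\unf{\stack_0},\ldots,\unf{\stack_i}$, so its size doubles with $i$. Every one of those applications eventually reaches head position and is contracted, so the residual phase costs $\Theta(2^n)$ $\beta$-steps, not the $\Theta(n)$ you claim: writing $B_i$ for the number of $\beta$-steps needed to consume the stack entry $\stack_i$, a direct execution gives $B_{i+1}=1+\sum_{j\le i}B_j$, which grows exponentially. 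Consequently your complete evaluation $\run_n$ has abstract time $\Theta(2^n)$, and a machine cost of $\Omega(2^n)$ is then only \emph{linear} in the number of $\beta$-steps---no overhead explosion with respect to abstract time, which is precisely what the proposition asserts.

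The idea you are missing is that the exponentially large environment must be built and then \emph{discarded without being evaluated}, so that the number of $\beta$-steps stays linear while the space momentarily explodes. The paper plugs $\la\vartwo\Id$ into the innermost hole: after Lemma~\ref{l:counter-tr}, applied one level up since $\ctx_n\ctxholep{\la\vartwo\Id}$ has free variables $\var_0,\ldots,\var_{n-1}$, the machine performs one $\tokambnw$ creating $\lenv_n$, one $\tokamseanv$ materializing the closure $\stack_n$ on the stack, and then one $\tokambw$ that garbage-collects it because $\vartwo\notin\fv{\Id}$, ending on $\Id$ with empty data structures. This yields $\Theta(n)$ $\beta$-steps against $\Omega(2^n)$ space, and hence $\Omega(2^n)$ low-level time since the state must at least be written down, which is the claimed explosion. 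Your space lower bound via Lemma~\ref{l:expds} and your observation that the no-data-pointer discipline forces the environment to be physically materialized are both correct and match the paper's argument; only the witness term needs to be repaired.
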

\begin{proof}
	Define $\tm_n$ as $\tm_{n} \defeq \ctx_0\ctxholep{\ctx_1\ctxholep{\cdots\ctx_{n} 
			\ctxholep{\la{\vartwo} \Id}\cdots}}\Id$.	 Its \SpKAM execution follows.
	\[\begin{array}{l|l|ll}
	\mathsf{Term}   & 
	\mathsf{Env} & \mathsf{Stack}\\
	\hhline{---}	
	\ctx_0\ctxholep{\ctx_1\ctxholep{\cdots\ctx_{n-1}\ctxholep{\ctx_{n} 
				\ctxholep{\la{\vartwo} \Id}}\cdots}}\Id & \stempty & 
	\stempty & \rightarrow_{\mathrm\SpKAM}^{\Theta(n)}\   (\text{Lemma~\ref{l:counter-tr}})\\
	\underbrace{\lambda\var_{n}.(\la{\vartwo} \Id)(\var_0\ldots\var_{n})}_{=\ctx_n\ctxholep{\la{\vartwo} \Id}} & \lenv_{n-1} 
	& \stack_{n-1}
	&\tokambnw
	\\
	(\la{\vartwo} \Id)(\var_0\ldots\var_{n}) & \lenv_n & \stempty
	&\tokamseanv
	\\
	\la{\vartwo} \Id& \stempty & \stack_n
	&\tokambw
	\\
	\Id & \stempty & \stempty
	\end{array}
	\]
	The space consumed (and thus also the low-level time) is at least exponential in $n$ 
	because the size of $\lenv_n$ is exponential in $n$ (Lemma~\ref{l:expds}).
\end{proof}
Since the \NaKAM is less efficient than the \SpKAM, an analogous reasoning also shows that the \NaKAM is unreasonable with respect to abstract time.

\begin{prop}[\NaKAM abstract time overhead explosion]
\label{prop:NaKAM-time-unreas}
	There is a family $\{\tm_{n}\}_{n\in\nat}$ of closed $\l$-terms such that there is a complete evaluation $\run_{n}:\tm_{n} \towh^{n} \tmtwo_{n}$ simulated by \NaKAM runs $\runtwo_{n}$ taking both space and time at least exponential in $n$, that is, $\sizespace{\runtwo_{n}}=\sizetime{\runtwo_{n}}=\Omega(2^{n})$.
\end{prop}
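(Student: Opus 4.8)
The plan is to reuse verbatim the family $\{\tm_{n}\}_{n\in\nat}$ together with the auxiliary data structures $\lenv_n$, $\stack_n$ and the contexts $\ctx_n$ introduced for the \SpKAM in \refprop{spkam-not-reas-for-nat-time}, and to mirror that proof with the \NaKAM in place of the \SpKAM. The guiding observation is that the two optimizations separating the \SpKAM from the \NaKAM---eager garbage collection and unchaining---never fire along the relevant run for this particular family: the arguments created during the search are the applications $\var_0\var_0$ and $\var_0\ldots\var_k$, never bare variables, so the unchaining transition $\tokamseav$ plays no role; and every bound variable $\var_k$ genuinely occurs in the body it binds, so no $\tokambw$-style collection removes anything before the exponential environment is built. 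Consequently the \NaKAM performs, step by step, the same computation as the \SpKAM, and in particular it reaches a state carrying the environment $\lenv_n$.

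Concretely, the first step is a \NaKAM analogue of Lemma~\ref{l:counter-tr}, stating that
\[
\spkamstate{\ctx_0\ctxholep{\ctx_1\ctxholep{\ldots\ctx_{n}\ctxholep{\tm}\ldots}}\Id}{\stempty}{\stempty}\ \tonakam^{\Theta(n)}\ \spkamstate{\tm}{\lenv_n}{\stack_n}
\]
whenever $\var_0,\ldots,\var_n\in\fv\tm$. Its proof is the induction of Lemma~\ref{l:counter-tr} with the transition labels relabelled: each $\tokambnw$ becomes a plain $\tokamb$ (the \NaKAM omits the occurrence check, but its side condition holds here anyway, so the same substitution is added), and each $\tokamseanv$ becomes a plain $\tokamsea$ (the \NaKAM pushes the full environment, but since all of $\var_0,\ldots,\var_k$ are free in the pushed application $\var_0\ldots\var_k$, the restriction $\lenv|_{\tm}$ performed by the \SpKAM is a no-op here, so the pushed closures coincide). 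The base and inductive diagrams are then syntactically identical to those of Lemma~\ref{l:counter-tr}, which I would simply transcribe.

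Since Lemma~\ref{l:expds} is a statement about the data structure $\lenv_n$ alone, it is machine-independent and yields $\size{\lenv_n}\ge 2^n$ for free. Running the \NaKAM on $\tm_n$ then reaches, exactly as in \refprop{spkam-not-reas-for-nat-time}, a state whose environment is $\lenv_n$ (and, lacking collection, the \NaKAM only ever enlarges it thereafter). Hence $\sizespace{\runtwo_n}\ge\size{\lenv_n}\ge 2^n$, and since no run can write an object of size $s$ using fewer than $s$ steps, the low-level time obeys $\sizetime{\runtwo_n}\ge\sizespace{\runtwo_n}=\Omega(2^n)$ as well, giving the claim. I expect no serious obstacle; the only genuinely delicate point is making precise the informal slogan that the \NaKAM \emph{is less efficient than the \SpKAM}. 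I would avoid invoking a general comparison lemma between the two machines and instead rely on the explicit trace above, where the coincidence of the two runs on this family is checked directly from the transition rules, together with the side-condition checks ($\tokamb$ always fires, and $\tokamsea$ fires on every application regardless of its argument).
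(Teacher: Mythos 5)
Your proposal is correct and is essentially the paper's own argument: the paper dismisses this proposition with the remark that the \NaKAM is less efficient than the \SpKAM, and you carry out exactly that ``analogous reasoning'' by re-running the trace of \refprop{spkam-not-reas-for-nat-time} with the \NaKAM transitions, observing that the GC side conditions and environment restrictions are vacuous on this family until after $\lenv_n$ has been built. The only nitpick is that the two runs are not literally identical at the very last transitions (the \SpKAM restricts the active environment to $\stempty$ and fires $\tokambw$ on $\la\vartwo\Id$, while the \NaKAM keeps and even enlarges $\lenv_n$), but you already hedge this correctly and it only strengthens the lower bound.
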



\pagebreak
\section{Time vs Space}
\label{sect:time-spkam}
Here we discuss how to obtain, or approximate, reasonability for both space and time.

\paragraph{Reasonable Low-Level Time.} One way of recovering time reasonability is changing the time cost model from the number of $\beta$ steps to the time taken by the \SpKAM itself, which is a low-level notion of time. Such a time cost model is indeed reasonable. The key point is that the explosions of \refprop{spkam-not-reas-for-nat-time} never happen on $\l$-terms encoding TMs.

\begin{thm}[TMs are simulated by the \SpKAM in reasonable low-level time]
\label{thm:reas-time-spkam}
\hfill
\begin{enumerate}
\item Every TM run $\run$ can be simulated by the \SpKAM in time 
$\bigo{poly(\size\run)}$.
\item Every \SpKAM run $\exec: \compil\tm \tospkam^{*} \state$  can be 
implemented on TMs in time $\bigo{\sizetime{\run}}$.
\item \ccbn and the time of the \SpKAM provide a reasonable time cost 
model for the $\l$-calculus.
\end{enumerate}
\end{thm}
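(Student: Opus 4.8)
The plan is to obtain point 3 from the first two points, and to prove the two simulation directions separately. I would take $T_\l(\deriv) \defeq \sizetime{\runtwo}$, where $\runtwo$ is the \SpKAM run implementing the complete $\towh$-sequence $\deriv$; this run exists and is essentially unique because the \SpKAM is deterministic and implements \ccbn via the minor variant of \refthm{kam-impl} mentioned after \refthm{main-simulation}. With this choice, point 3 is just a verification of the two bullets of \refdef{reasonable}: the TMs-into-$\l$ direction is supplied by point 1 below together with the qualitative simulation of points 1 and 2 of \refthm{main-simulation}; the $\l$-into-TMs direction is supplied by point 2 below; and the two side conditions hold by \refprop{lin-space-constr-eq} (logarithmic space constructor equality) and by the polytime result equality available for environment-based sharing from the literature cited after \refdef{reasonable}. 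So the content lies in the first two points.

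For point 1 I would bound the low-level time as the product of the number of transitions and the maximal per-transition cost, both shown polynomial in $\size\run$. The $\tokambw,\tokambnw$ transitions are as many as the $\beta$-steps, namely $n = \Theta((T_{\textrm{TM}}(\run)+1)\cdot\size\istr\cdot\log\size\istr)$ by point 1 of \refthm{main-simulation}, which is polynomial in $\size\run$; the $\seasym$ and $\subsym$ transitions are polynomial in $n$ and $\size{\tm_0}$ by counting arguments analogous to those recalled for the \NaKAM in \refsect{naive-kam}, and $\size{\tm_0}=\size{\enc M\,\enc\istr}=\bigo{\size\istr}$ since the machine encoding is fixed and the input representation is linear. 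Each transition costs a polynomial in the state size by the abstract implementation, and by point 3 of \refthm{main-simulation}, together with the locking $S_{\textrm{TM}}(\run)\le T_{\textrm{TM}}(\run)$ on TMs, every reachable state has size $\bigo{S_{\textrm{TM}}(\run)+\log\size\istr}=poly(\size\run)$. Multiplying the two polynomial bounds gives $\sizetime{\runtwo}=\bigo{poly(\size\run)}$.

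For point 2 I would reuse the multi-tape layout of the space simulation of Proposition~\ref{thm:converse-simulation}: one tape for the active-term sub-term pointer into the fixed code, one for the environment, one for the stack, and a constant number of auxiliary tapes. The argument is that such a TM realizes each \SpKAM transition \emph{within the very per-transition cost declared by the abstract implementation}: the environment restrictions of $\tokamseav,\tokamseanv$ and the occurrence checks of the eager garbage collector in $\tokambw,\tokambnw$ are linear scans of the state, $\tokamsub$ is immediate thanks to the environment domain invariant (Lemma~\ref{l:spkam-invariants}), which reduces the looked-up environment to a single entry, and the resolution of left addresses is within the polynomial (in the code size) budget already charged by the abstract implementation. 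Summing over the run therefore gives total TM time $\bigo{\sum_i \sizetime{\state_i \to \state_{i+1}}}=\bigo{\sizetime{\run}}$: no hidden super-linear blow-up occurs, precisely because the declared abstract cost was chosen to be the realization cost.

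I expect the main obstacle to be the per-state size bound underlying point 1. Everything else is bookkeeping over transition counts plus a faithful, if tedious, multi-tape implementation; but the polynomiality of the low-level time hinges on the reachable \SpKAM states staying polynomially bounded on the image of the encoding, so that the polynomial per-transition cost does not compound with the number of transitions into a super-polynomial total. This is exactly the time-analogue of the space bound of \refthm{main-simulation} and is what excludes the exponential explosion of \refprop{spkam-not-reas-for-nat-time}, which arises only on $\l$-terms lying outside the image of the TM encoding. Once this bound is in hand, points 1, 2 and hence 3 follow.
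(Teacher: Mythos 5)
Your proposal is correct, but it leans on a different one of the two arguments the paper offers for the only non-trivial point (point 1), and it executes that argument more explicitly. The paper's primary proof is a qualitative inspection of the encoding: the exponential blow-up of \refprop{spkam-not-reas-for-nat-time} comes from uncontrolled duplication of environments, and on the image of the TM encoding duplication is confined to the fix-point operator and the input components, i.e.\ to objects of fixed size, so the per-transition cost cannot compound. The paper only mentions, as a secondary ``indirect'' route, that an exponential low-level time would force space linear in $\size\run$, contradicting point 3 of \refthm{main-simulation}. Your proof is essentially that second route turned into a direct, quantitative argument: since the space of the run is the maximum over reachable states of their size, point 3 of \refthm{main-simulation} bounds \emph{every} reachable state by $\bigo{S_{\textrm{TM}}(\run)+\log\size\istr}$, the abstract implementation charges each transition a polynomial in that state size, and the transition count is polynomial by point 1 of \refthm{main-simulation}; multiplying gives the bound. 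This buys you a cleaner, more self-contained derivation that reuses an already-proved bound instead of re-inspecting the encoding, at the cost of being slightly less informative about \emph{why} the states stay small. You also flesh out points 2 and 3 (the multi-tape realization within the declared per-transition costs, and the verification of the two bullets of the reasonability definition), which the paper dismisses as trivial and does not write out; your sketches there are consistent with what the paper intends.
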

\begin{proof}
	The first point is the only one which is non-trivial. We have already 
	proved that the Space KAM can simulate TMs runs $\run$ in a number of 
	\emph{transitions} which is polynomial in $\size\run$. However, this does 
	not 
	necessarily means that the (low-level) time is also polynomial in $\run$, 
	see Proposition~\ref{prop:spkam-not-reas-for-nat-time}. About the execution 
	of terms which are the image of the encoding of TMs into the $\l$-calculus, 
	we can say however that the overhead stays polynomial. Indeed, the 
	exponential blowup comes from the fact that environments are duplicated in 
	an uncontrolled way. This does not happen in the execution of the encoding 
	of TMs, where duplication is restrained to the fix-point operator and to 
	the input components of the state. In other words, we duplicate only 
	objects of fixed size, thus confirming the polynomial bound.
	
	There is another, indirect, way of proving the same results. If the 
	(low-level) time were exponential in $\size\run$, then the space should be 
	at least linear in $\size\run$\footnote{This is because space cannot be 
		less than logarithmic in time, when space and time are locked and reasonable.}. But we have proved that this is not the 
	case since space is linearly related with the \emph{space} consumption of 
	$\run$, and \emph{not} with its length (which is the time consumption).
\end{proof}
The drawback of this solution is that one gives up the natural cost model for 
time. Moreover, the low-level time of the \SpKAM can be very lax in comparison, as 
\refprop{spkam-not-reas-for-nat-time} shows.


\paragraph{KAM and Sharing of Environments} The \LinkKAM of \reffig{LinkKAM} (page \pageref{fig:LinkKAM}) adopts sharing of environments, which is the most common way of turning the \OutKAM into a machine reasonable for abstract time. The drawback is that such an implementation schema does not work with the space-oriented optimizations of the \CoKAM, at least not smoothly. The culprit is eager garbage collection (there are no problems instead for unchaining), because, when an environment $\lenv$ is shared among many closures, knowing that an entry $\esub\var\clos$ of $\lenv$ is garbage for a closure $(\tm,\lenv)$ is not enough to remove $\esub\var\clos$ from $\lenv$, since $\esub\var\clos$ might not be garbage for another closure using $\lenv$.

Without eager garbage collection, the space consumption of the \LinkKAM depends linearly on time, which is unreasonable for space. Therefore, sharing environments enables reasonability for abstract time at the expenses of space reasonability.

An interesting point is that the counter-example to time efficiency of the \SpKAM in  \refprop{spkam-not-reas-for-nat-time} is executed on the \LinkKAM in exponentially less time \emph{and} space than on the \SpKAM, thanks to the sharing of environments. This fact stresses that space reasonable does not mean space efficient: the \SpKAM is efficient on the encoding of TMs, but beyond the image of the encoding it can be terribly inefficient.

\paragraph{The Interleaving Technique} Forster et al. in 
\cite{cbv_reasonable} show that, given one machine that is reasonable for abstract time 
but not for ink space and one machine that is reasonable for ink space but not for 
abstract time, it is possible to build a third machine that is reasonable for both ink space 
and abstract time---even if the two are explosive together---by interleaving the two machines in a smart way. Despite being 
presented on a specific case, their construction is quite general (in fact it 
is not even limited to the $\l$-calculus), and can be adapted to our case replacing ink space with the space of the \SpKAM (the 
two starting machines being the \LinkKAM and the \SpKAM), under the assumption that the two machines share the same input, which is essential for logarithmic space\footnote{The results in Forster et al. 
\cite{cbv_reasonable} hold for decision problems (where the output is either \emph{yes} or \emph{no}), instead of computation problems (where the output can be any value) as in this paper, and are also given using fixed simulations. Their technique however is flexible and fairly independent from the notion of problem and also from the specific simulations, which are used as black boxes. The requirements for the technique are very lax, essentially that 1) the logarithm of time is linear in space (which is a fact true for most choices of cost models), and 2) the simulations should be runnable 'in rounds' (see \cite{cbv_reasonable}).}. The drawback of this 
solution is that it admits space exponential in time, as 
\refprop{spkam-not-reas-for-nat-time} shows. 

\paragraph{Trading Time for Space} From a practical rather than theoretical 
point of view, there is a further \emph{semi solution} that we now outline. 
The idea is to modify the \LinkKAM as to share \emph{closures} rather then environments, thus copying environments when crossing applications, but copying only their shallow structure, because for closures one would just copy pointers to them. Let us refer to this schema as to the \emph{Closure KAM}. The Closure KAM is reasonable for abstract time (even if slightly slower than the \LinkKAM) and---crucially---is compatible with the \CoKAM. Compared with the \LinkKAM, it thus has the advantage of disentangling space from time. It is not space reasonable, because of data pointers for sharing closures, that add an unreasonable pointer overhead which is logarithmic in the closure space of the \CoKAM. But such an overhead is only \emph{mildly} unreasonable. 

The Closure KAM is probably the best compromise between reasonability and efficiency for the practice of implementing functional programs. We plan to study it in future work.

\section{Call-by-Value and Other Strategies}
\label{sect:cbv}

\begin{figure*}[!t]
	\input{machines/SpaceLAM}
	\vspace{-8pt}
	\caption{The Collecting \LAM, the abstract layer of the \SpLAM.}
	\vspace{-8pt}
	\label{fig:splam}
\end{figure*}

How robust is our space cost model to changes of the evaluation strategy? The 
short answer is \emph{very robust}.

\paragraph{Closed Call-by-Value}
Our results smoothly adapt to weak call-by-value evaluation with closed terms, which we refer  to as \emph{\ccbv} and define as follows. Values and right-to-left \cbv evaluation contexts are given by:
\[\begin{array}{rrcl}
	\textsc{Values} & \val & \grameq & \lambda x.\tm\\[3pt]
	\textsc{Right-to-Left CbV Ctxs} & \evctx & \grameq & \ctxhole \midd \evctx\val \midd \tm\evctx
	\end{array}\]
The (deterministic) reduction strategy $\toval$ is defined as the contextual closure of the $\beta_\vsym$ variant of the $\beta$ rule
\[(\la\var\tm)\val\mapsto_{\beta_\vsym}\tm\isub{\var}\val\]
by right-to-left CbV evaluation contexts.

Our results smoothly adapt to such a setting, as we now explain.
First, it is easy to adapt the \SpKAM to \ccbv. The LAM (Leroy Abstract Machine) is a right-to-left\footnote{The argument presented here smoothly adapts to the left-to-right order.} \cbv  analogue of the KAM defined by Accattoli et al. in \cite{DBLP:conf/icfp/AccattoliBM14} and modeled after Leroy's ZINC \cite{Leroy-ZINC} (whence the name). It uses a further data structure, the \emph{dump}, storing the left sub-terms of applications yet to be evaluated. It is upgraded to the Collecting \LAM in \reffig{splam} by removing data pointers and adding eager GC.  Lastly, it gives the \SpLAM by adopting an abstract implementation analogous to that of the \SpKAM. Unchaining comes for free in \cbv, if one considers values to be only abstractions, see Accattoli and Sacerdoti Coen \cite{DBLP:journals/iandc/AccattoliC17}.

The next step is realizing that, because of the mentioned \emph{indifference property} of the deterministic $\l$-calculus $\detLam$ (containing the image of the encoding of TMs), the run of 
the \SpLAM on a term $\tm\in\detLam$ is almost identical (technically, weakly 
bisimilar) to the one of the \SpKAM on $\tm$.

\begin{prop}\label{prop:cbv-equiv}
	The Space KAM and the Space LAM are weakly bisimilar when executed on 
	$\detLam$-terms. Moreover, their space consumption is the same.
\end{prop}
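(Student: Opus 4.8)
The plan is to exhibit an explicit weak bisimulation whose related pairs are the \emph{observable} states: a \SpLAM state with empty dump $(\stempty,\tm,\lenv,\stack)$ and the \SpKAM state $\spkamstate{\tm}{\lenv}{\stack}$ with the same active term, environment, and stack. The first step is to record an invariant that lifts the indifference property to the machine level: on a $\detLam$-term, every closure $(\tmtwo,\lenv)$ occurring in the environment or in the stack of a reachable state of \emph{either} machine has $\tmtwo$ an abstraction. For the \SpLAM this is the standard value invariant of call-by-value machines. For the \SpKAM it follows by induction on the run: stack closures are created either by transition $\tokamseanv$, whose argument $\tmtwo\notin\mathcal V$ is an abstraction precisely because we are in $\detLam$, or by transition $\tokamseav$ as $\lenv(\var)$, which is an abstraction closure by the induction hypothesis together with the environment domain invariant (Lemma~\ref{l:spkam-invariants}); transition $\tokambnw$ only moves such closures into the environment. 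This invariant is exactly what makes \SpKAM unchaining agree with \SpLAM argument evaluation.

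Next I would define the relation $R$ pairing $\spkamstate{\tm}{\lenv}{\stack}$ with $(\stempty,\tm,\lenv,\stack)$ and verify the forward simulation by cases on the \SpKAM transition, using the $\detLam$ restriction to classify the argument met in the search phase. A $\tokambnw$, $\tokambw$, or head-variable $\tokamsub$ step is matched by the identical \SpLAM step. A search step on an application whose argument is an abstraction (transition $\tokamseanv$) is matched by the two-step \SpLAM burst $\tokamsea$ then $\tokamret$: the first pushes the left subterm on the dump and moves to the argument, which is already a value, so $\tokamret$ immediately pops it onto the stack, yielding $(\stempty,\tm,\lenv|_\tm,(\tmtwo,\lenv|_\tmtwo)\cons\stack)$, the $R$-image of the \SpKAM target. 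A search step with a \emph{variable} argument (transition $\tokamseav$) is matched by the three-step burst $\tokamsea,\tokamsub,\tokamret$: after pushing the left subterm, the argument variable is dereferenced by $\tokamsub$ to $\lenv(\var)$, an abstraction closure by the invariant, which $\tokamret$ then pushes on the stack; the result is again the $R$-image of the \SpKAM target $\spkamstate{\tm}{\lenv|_\tm}{\lenv(\var)\cons\stack}$. In every burst the dump is pushed once and popped once, so it is empty at the endpoints and $R$ is reinstated. Since both machines are deterministic and their final states correspond under $R$ (empty dump, abstraction as active term, empty stack), the forward simulation is in fact a weak bisimulation, reflecting termination and divergence.

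For the space claim I would first observe that $R$-related states have \emph{identical} components, the \SpLAM dump being empty, hence identical size; the two runs therefore traverse the same sequence of observable states with the same space. It then remains to check that the transient \SpLAM states inside a burst, those with a one-frame dump, do not raise the maximum. The key point is that a burst merely \emph{reorganizes} information: $\tokamsea$ moves the current stack into a dump frame and restarts the argument with an empty stack, and $\tokamret$ moves it back, so the number of closures $\sizecl{\cdot}$ equals that of the bracketing observable states and abstract space $\sizeaspace{\cdot}$ is preserved on the nose. For concrete space the only discrepancy is the transient variable closure $(\var,\lenv|_\var)$ materialized between $\tokamsea$ and $\tokamsub$ in the variable case, which exceeds the corresponding \SpKAM stack closure by a single variable pointer of size $\bigo{\log\size{\tm_0}}$; this is bounded by the current space and is exactly the kind of transient auxiliary space that, as discussed after \refdef{space-time-runs}, is safely omitted. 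Taking the maximum over the two runs then gives $\sizeaspace{\run_{\SpKAM}}=\sizeaspace{\run_{\SpLAM}}$ and equal concrete space cost.

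The main obstacle I expect is not the bisimulation itself, which is forced once the burst structure is identified, but the space bookkeeping: one must confirm that the \SpLAM's extra dump activity genuinely reorganizes rather than duplicates data, and that the only transient overhead is the single variable pointer above, so that the maximal space truly coincides and is not inflated by the administrative $\tokamsea/\tokamret/\tokamsub$ steps.
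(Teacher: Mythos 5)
Your proposal is correct and follows essentially the same route as the paper: the same relation $R$ pairing $\spkamstate{\tm}{\lenv}{\stack}$ with the empty-dump \SpLAM state, the same burst decomposition ($\tokamsea,\tokamret$ for abstraction arguments and $\tokamsea,\tokamsub,\tokamret$ for variable arguments), and the same observation that non-application transitions coincide. Your explicit abstraction-closure invariant and the transient-space bookkeeping merely make precise what the paper leaves implicit (it simply assumes $\lenv(\var)$ is an abstraction closure and states space preservation ``modulo a constant overhead'').
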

 \begin{proof}
	The transitions of the \SpKAM not dealing with applications 
	are 
	identical to the corresponding ones of the \SpLAM (if one ignores the 
	dump, 
	that 
	remains untouched). For the two transitions of the \SpKAM dealing with 
	applications, we show that, when the argument is a variable or an 
	abstraction 
	(as in $\detLam$), the \SpLAM behaves as the \SpKAM. If the active term is 
	$\tm\var$, indeed, the $\tokamseav$ transition of the \SpKAM is simulated 
	on 
	the \SpLAM by (with $\lenv(\var)=(\la\vartwo\tmtwo,\lenvtwo)$):
	\[\begin{array}{lll}
		\splamstate{\stempty}{\tm\var}{\lenv}{\stack}&\tosplam& 
		\splamstate{\dentry{(\tm,\lenv|_\tm)}\stack}{\var}{\lenv|_\var}{\stempty}\\
		&\tosplam&\splamstate{\dentry{(\tm,\lenv|_\tm)}\stack}{\la\vartwo\tmtwo}{\lenvtwo}{\stempty}\\
		&\tosplam&\splamstate{\stempty}{\tm}{\lenv|_\tm}{(\la\vartwo\tmtwo,\lenvtwo)\cons\stack}\\
		&=&\splamstate{\stempty}{\tm}{\lenv|_\tm}{\lenv(\var)\cons\stack}
		\end{array}\]
	If the active term instead is 
	$\tm(\la\var\tmtwo)$, the $\tokamseanv$ transition of the \SpKAM is 
	simulated 
	on the \SpLAM by:
	\[\begin{array}{lll}
		\splamstate{\stempty}{\tm(\la\var\tmtwo)}{\lenv}{\stack}&\tosplam& 
		\splamstate{\dentry{(\tm,\lenv|_\tm)}\stack}{\la\var\tmtwo}{\lenv|_{\la\var\tmtwo}}{\stempty}\\
		&\tosplam&\splamstate{\stempty}{\tm}{\lenv|_\tm}{(\la\var\tmtwo,\lenv|_{\la\var\tmtwo})\cdot\stack}\\
		\end{array}\]
	In particular, these macro steps show that to evaluate TMs there is no need 
	of 
	the dump.
	Now, by defining a relation $\rel$ between states of the \SpKAM and the 
	\SpLAM 
	as
	\[
		\state_K\,\rel\,\state_L\qquad\text{ iff }\qquad 
		\state_K=\spkamstate{\tm}{\lenv}{\stack}\text{ and } 
		\state_L=\splamstate{\stempty}{\tm}{\lenv}{\stack}
		\]
	the previous reasoning shows that $\rel$ is 
	a weak bisimulation preserving time and space complexity (modulo a constant 
	overhead).
\end{proof}

 Since the simulation of the 
\SpLAM on TMs is as smooth as for the \SpKAM, we have the following 
result.

\begin{thm}[The \SpLAM is reasonable for space]\label{thm:cbv}
	\ccbv evaluation and the space of the \SpLAM provide a reasonable space 
	cost model for the $\l$-calculus.
\end{thm}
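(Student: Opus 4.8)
The plan is to instantiate the definition of reasonable space cost model (\refdef{reasonable}) with the strategy $\toval$ and the space of the \SpLAM, and to observe that essentially all the work has already been done for the \SpKAM: the only genuinely new ingredient is the bisimulation of \refprop{cbv-equiv}. Concretely, \refdef{reasonable} demands two mutual simulations plus the two decoding conditions (polytime result equality and logarithmic-space constructor equality), and I would treat them in turn, reusing the matching \SpKAM statements.

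For the first simulation (TMs into the $\l$-calculus) I would use the \emph{same} encoding $\enc\cdot$ of log-sensitive TMs employed for the \SpKAM, whose image lies in $\detLam$. By the indifference property of $\detLam$, call-by-value evaluation $\toval$ coincides there with $\tobdet$, so the qualitative simulation and the time bound $n=\Theta((T_{\textrm{TM}}(\run)+1)\cdot\size\istr\cdot\log\size\istr)$ of points 1 and 2 of \refthm{main-simulation} transfer verbatim. For the space bound I would invoke \refprop{cbv-equiv}: since the \SpKAM and the \SpLAM are weakly bisimilar on $\detLam$-terms and use the same space, the bound $\bigo{S_{\textrm{TM}}(\run)+\log\size\istr}$ of point 3 carries over from the \SpKAM run to the corresponding \SpLAM run, establishing the first bullet of \refdef{reasonable}.

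For the converse simulation (the \SpLAM on TMs) I would adapt Proposition~\ref{thm:converse-simulation}: the \SpLAM carries exactly the data structures of the \SpKAM plus the dump, so a multi-tape TM simulates it using one additional tape for the dump, keeping the overhead at $\bigo{\lm{\run}}$. The two decoding conditions then come for free, since the decoding of \SpLAM states rests on the same closures-and-environments structure as for the \SpKAM. Indeed, at the synchronization points of the relation $\rel$ of \refprop{cbv-equiv} the dump is empty, so the result is read out of a closure exactly as before: polytime result equality is inherited, and the logarithmic-space constructor-equality algorithm of \refprop{lin-space-constr-eq} applies unchanged. Collecting the two simulations and the two conditions yields the claim, precisely as \refthm{main-simulation}, Proposition~\ref{thm:converse-simulation}, and \refprop{lin-space-constr-eq} combine into \refthm{main-thm-spkam} for the \SpKAM.

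The only delicate point, and thus the expected main obstacle, is verifying that the bisimulation of \refprop{cbv-equiv} preserves the \emph{concrete} space measure $\lm\cdot$ and not merely the abstract, number-of-closures one: one must check that the extra dump entry pushed during each macro-step, together with the left-address pointers used by the \SpLAM, adds only a constant overhead with respect to the space already counted for the matching \SpKAM state. Since \refprop{cbv-equiv} already asserts that $\rel$ preserves space complexity up to a constant, this reduces to a routine inspection of the two macro-steps exhibited in its proof, after which all the space bounds transfer and the theorem follows.
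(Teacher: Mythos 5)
Your proposal is correct and follows essentially the same route as the paper: reuse the $\detLam$-valued encoding of TMs, transfer the simulation and space bounds via the indifference property and the space-preserving weak bisimulation of \refprop{cbv-equiv}, and observe that the converse simulation on TMs and the decoding conditions carry over with only the dump as a (harmless) addition. The paper leaves these checks implicit ("the simulation of the \SpLAM on TMs is as smooth as for the \SpKAM"), so your more explicit accounting, including the remark on concrete versus abstract space, is a faithful elaboration of the intended argument.
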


\paragraph{Open and Strong Evaluation} Extending \cbn/\cbv evaluation to deal 
with open terms or even under abstractions, which is notoriously very delicate 
in the study of reasonable time, is instead straightforward for space. This is 
because these extensions play no role in the simulation of TMs, which is the 
delicate direction for space. Given the absence of difficulties, we refrain 
from introducing variants of the \SpKAM/LAM for open and strong evaluation.

\paragraph{Call-by-Need} The only major scheme for which our technique breaks is call-by-need (\cbneed) evaluation. To our knowledge, implementations of \cbneed inevitably rely on a heap and on data pointers similar to those of the \LinkKAM, to realize the memoization mechanism at the heart of \cbneed. Therefore, they are space unreasonable. This is not really surprising: being a time optimization of \cbn, \cbneed trades space for time, sacrificing space reasonability.
\begin{figure*}[!t]
\small
	\renewcommand{\arraystretch}{1.2}
	\centering
	\begin{tabular}{|c|c|c|c|ccc}
		\cline{1-4}
		& \textbf{Abstract Time}  
		& 
		\textbf{Low-Level Time}   
		& 
		\textbf{Low-Level Space} 
		\\
		& \textbf{Reasonable}  
		& 
		\textbf{Reasonable}   
		& 
		\textbf{Reasonable} 
		\\
			& (polyn. in \# of $\beta$)
			 &
			 (actual implementation cost)
			 &

			 \\
		\cline{1-4}
		\textbf\NaKAM & No, Proposition~\ref{prop:spkam-not-reas-for-nat-time} 
		& No, Proposition~\ref{prop:toy}.\ref{point:nakam} & No, 
		Proposition~\ref{prop:toy}.\ref{point:nakam}\\
		\cline{1-4}
		\textbf\SpKAM & No, Proposition~\ref{prop:spkam-not-reas-for-nat-time} 
		& Yes, Theorem~\ref{thm:reas-time-spkam} & 
		Yes, 
		Theorem~\ref{thm:main-thm-spkam}\\
		\cline{1-4}
		\textbf\LinkKAM & Yes, see Section~\ref{sect:naive-kam} & Yes, 
		see Section~\ref{sect:naive-kam} & 
		No, 
		see Section~\ref{sect:naive-kam}\\
		\cline{1-4}
		\textbf\SpLAM & No, via Proposition~\ref{prop:cbv-equiv}  & Yes, via Proposition~\ref{prop:cbv-equiv}  & 
		Yes, Theorem~\ref{thm:cbv}
		\\
		(CbV)
		&and Proposition~\ref{prop:spkam-not-reas-for-nat-time}
		&and Theorem.~\ref{thm:reas-time-spkam}
		&
		\\
		\cline{1-4}
	\end{tabular}
	\caption{Summary of the results of the paper.}
	\label{fig:summary}
\end{figure*}
\section{Conclusions}
Via a fine study of abstract machines and of the encoding of Turing machines, 
we  provide the first space cost model for the $\l$-calculus accounting for 
logarithmic space. We have reported our main results in \reffig{summary}.

Our cost model is given by an external device, the 700th abstract 
machine for the $\l$-calculus, so how 
canonical is it? The constraints 
for reasonable logarithmic space are \emph{very} strict. It seems that there is no 
room for significant variations in the machine nor in the encoding of TMs. Moreover, 
our cost model for space has the same relationship to abstract time than ink space (that is, it is explosive, as shown by \refprop{spkam-not-reas-for-nat-time}), and it 
smoothly adapts to other evaluation strategies, such as call-by-value. We then 
dare to say that our space cost model is fairly canonical.

We have also isolated an abstract notion of \emph{closure space}, given by the maximum number of closures used by the \SpKAM (without accounting for the size of sub-term pointers in the closures). In a companion paper~\cite{ICFP2022}, we have given a machine independent characterization of closure space based on multi types, providing evidence of the naturality of our cost model.

\medskip

\section*{Acknowledgment}
\noindent This work has been inspired by an old talk by Kazushige Terui on the space efficiency of the KAM~\cite{TeruiSlides}. The second author is partially supported by the ERC CoG ``DIAPASoN'' (GA 818616). The third author is partially supported by the ANR project ``PPS'' (ANR-19-CE48-0014) and by the European Union’s Horizon 2020 research and innovation programme
under the Marie Sklodowska-Curie grant agreement No 101034255.


\bibliographystyle{alphaurl}
\bibliography{main}

\pagebreak
\appendix
\section{Proofs of Section~\ref{sect:moving}}\label{app:moving}
In the following we will often use the execution of the fixed point combinator $\fix\defeq\theta\theta$, where $\theta\defeq\la\var \la\vartwo \vartwo(\var \var \vartwo )$. 
For this reason, we encapsulate its execution by the \SpKAM in a lemma. 
\begin{lem}\label{l:fix}
	For each term $\tmtwo$,  
	$\spkamstate{\theta}\stempty{(\theta,\stempty)\cons(\tmtwo,\lenv)\cons\stack}
	\tospkam^{\bigo{1}} 
	\spkamstate\tmtwo\lenv{\fix^{\kop}\cons\stack}$
	where $\fix^{\kop}\defeq(\var \var 
	\vartwo,\esub\vartwo{(\tmtwo,\lenv)}\cons\esub\var{(\theta,\stempty)})$ 
	consuming space $\bigo{\size\lenv+\size\stack+\log(\size\tmtwo)}$.
\end{lem}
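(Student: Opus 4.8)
The plan is to simply run the \SpKAM from the given state, checking step by step that a constant number of transitions reaches the claimed target, while keeping track of the size of each intermediate state. Since the active term is the combinator $\theta \defeq \la\var\la\vartwo\vartwo(\var\var\vartwo)$ and the stack is non-empty, the first two transitions are both $\tokambnw$: the binders $\var$ and $\vartwo$ occur in their respective bodies, so the two stack entries $(\theta,\stempty)$ and $(\tmtwo,\lenv)$ are pushed into the environment as $\esub\var{(\theta,\stempty)}$ and $\esub\vartwo{(\tmtwo,\lenv)}$. This produces the active term $\vartwo(\var\var\vartwo)$ with environment $\esub\vartwo{(\tmtwo,\lenv)}\cons\esub\var{(\theta,\stempty)}$ and stack $\stack$.

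Next, because the argument $\var\var\vartwo\notin\mathcal{V}$ is an application, the applicable search transition is $\tokamseanv$: it pushes onto the stack the closure $\fix^{\kop} = (\var\var\vartwo,\,\esub\vartwo{(\tmtwo,\lenv)}\cons\esub\var{(\theta,\stempty)})$, since the restriction $\lenv|_{\var\var\vartwo}$ keeps the whole environment (both $\var$ and $\vartwo$ occur), while it restricts the environment carried by the active variable $\vartwo$ to $\esub\vartwo{(\tmtwo,\lenv)}$. A final $\tokamsub$ then looks up $\vartwo$, replacing the active closure by $(\tmtwo,\lenv)$ and leaving the stack $\fix^{\kop}\cons\stack$ untouched. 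This is exactly the target $\spkamstate{\tmtwo}{\lenv}{\fix^{\kop}\cons\stack}$, and the run consists of the four transitions $\tokambnw,\tokambnw,\tokamseanv,\tokamsub$, hence $\bigo{1}$ transitions.

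For the space bound I would inspect the size of each of the five states and take the maximum. All terms appearing besides $\tmtwo$ (namely $\theta$, its sub-terms, and $\var\var\vartwo$) have bounded constructor size, so their sub-term pointers and the variable entries for $\var,\vartwo$ contribute only an additive term absorbed into $\bigo{\log\size\tmtwo}$; the closure $(\tmtwo,\lenv)$ contributes $\size{(\tmtwo,\lenv)} = \log\size\tmtwo + \size\lenv$; and the tail $\stack$ contributes $\size\stack$. The bulkiest intermediate state is the one carrying the environment $\esub\vartwo{(\tmtwo,\lenv)}\cons\esub\var{(\theta,\stempty)}$, of size $\log\size\tmtwo + \size\lenv + \bigo 1$; adding the active term and the untouched (or restricted) stack keeps the total within $\bigo{\size\lenv + \size\stack + \log\size\tmtwo}$. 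Since the restrictions $\lenv|_{\tm}$ only ever shrink environments, they cannot raise the space above this bound.

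The execution itself is entirely mechanical, so the hard part will be the space accounting: I must make sure that no intermediate state silently duplicates the expensive component $(\tmtwo,\lenv)$ in a way that would exceed the claimed bound. In this run $(\tmtwo,\lenv)$ is present essentially once at a time — first on the stack, then inside the environment, then on the stack again as part of $\fix^{\kop}$, and finally as the active closure — so the contributions of $\tmtwo$ and $\lenv$ stay additive rather than being multiplied, and the bound holds. I expect checking that the duplication inherent in a single fix-point unfolding does not blow up the space to be the only mildly delicate point of the argument.
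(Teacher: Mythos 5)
Your proposal is correct and follows essentially the same route as the paper: the same four-transition trace $\tokambnw,\tokambnw,\tokamseanv,\tokamseanv$-free $\tokamsub$ sequence (i.e.\ $\tokambnw,\tokambnw,\tokamseanv,\tokamsub$) producing exactly $\fix^{\kop}$ on the stack, with the space bound read off the intermediate states. The only nitpick is that after $\tokamseanv$ the closure $(\tmtwo,\lenv)$ is physically present \emph{twice} (in the active environment and inside $\fix^{\kop}$, since the \SpKAM does not share data structures), but this constant factor is absorbed by the $\bigo{\cdot}$, so your bound stands.
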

\begin{proof}
	\begin{align*}
	\begin{array}{l|l|ll}
	\mathsf{Term}   & 
	\mathsf{Env} & \mathsf{Stack}\\
	\hline
	\theta\defeq\la\var \la\vartwo \vartwo(\var \var \vartwo ) & \stempty & 
	(\theta,\stempty)\cons(\tmtwo,\lenv)\cons\stack&\tokambnw\\
	\la\vartwo \vartwo(\var \var \vartwo ) & \esub\var{(\theta,\stempty)} & 
	(\tmtwo,\lenv)\cons\stack&\tokambnw\\
	\vartwo(\var \var \vartwo ) & 
	\esub\vartwo{(\tmtwo,\lenv)}\cons\esub\var{(\theta,\stempty)} & 
	\stack&\tokamseanv\\
	\vartwo & \esub\vartwo{(\tmtwo,\lenv)} & \overbrace{(\var \var 
		\vartwo,\esub\vartwo{(\tmtwo,\lenv)}\cons\esub\var{(\theta,\stempty)})}^{\fix^{\kop}}\cons\stack&\tokamsub\\
	\tmtwo & \lenv & \fix^{\kop}\cons\stack
	\end{array}\\
	\tag*{\qedhere}
\end{align*}
\end{proof}
\subsection{Proof of Proposition~\ref{prop:toy}}

We prove the following proposition in a top-down style, \ie required lemmata 
are below. This is done because otherwise lemmata statements would seem quite 
arbitrary.
Nonetheless, we need a preliminary definition of some specific environments.
\begin{defi}
	Let $\str\defeq\bool_1\cons\ldots\cons\bool_n\cons\ems$ be a string of length $n\geq 0$. 
	Then, for each $0\leq i\leq n$ we can define 
	$\lenv_i,\lenvtwo_i,\lenvthree_i$  as follows:
	\[
	\begin{array}{rcl@{\hspace{1cm}}rcl}
	\lenv_0 &\defeq& \esub\var{(\theta,\stempty)}\cons 
	\esub \vartwo{(\toyaux,\stempty)} & \lenv_{i+1}&\defeq& 
	\esub\var{(\var,\lenv_i)}\cons 
	\esub \vartwo{(\vartwo,\lenv_i)}\\[4pt]
	\lenvthree_0&\defeq& \stempty& 
	\lenvthree_{i+1}&\defeq&\esub{\var_\ems}{(\Id,\lenvtwo_{i})} \cons 
	\esub{\var_1}{(\varf,\lenvtwo_{i})} \cons 
	\esub{\var_0}{(\varf,\lenvtwo_{i})}\cons\lenvthree_i\\[4pt]
	\multicolumn{6}{c}{\lenvtwo_{i}\defeq 
	\esub\varthree{(\enc{\bool_{i+1}..\bool_n\cons\ems},\lenvthree_{i})} 
	\cons\esub\varf{(\var \var \vartwo,\lenv_{i})}}\\[4pt]
	\end{array}\]
\end{defi}
One can easily notice that the sizes of $\lenv_i,\lenvtwo_i,\lenvthree_i$ are 
exponential in $i$.

\begin{prop}\label{p:toyapp}
	Let $\strone\in\Bool^*$ and $\toy \defeq \fix \toyaux$ where $\toyaux\defeq\la\varf\la 
	{\varthree}\varthree 
	\varf\varf \Id$.
	\begin{enumerate}
		\item $\toy\, \encp\strone\Bool \towh^{\Theta(\size\strone)} \Id$.
		\item The \NaKAM evaluates $\toy\, 
		\encp\strone\Bool$ in 
		space 
		$\Omega(2^{\size\strone})$.
		\item The \SpKAM evaluates $\toy\, \encp\strone\Bool$ in space 
		$\Theta(\log\size\strone)$.
	\end{enumerate}
\end{prop}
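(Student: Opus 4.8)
The plan is to prove the three points in a top-down fashion, each by running the relevant reduction or machine on $\toy\,\encp\strone\Bool$ and tracking an invariant indexed by the number $i$ of characters already consumed. For point~1, I would first use the fix-point computation $\fix\toyaux \tob^{*} \toyaux(\fix\toyaux)$ recalled in the excerpt to reduce $\toy\,\encp\strone\Bool$ to $\encp\strone\Bool\,\toy\,\toy\,\Id$ in a constant number of $\towh$-steps. Since $\encp\strone\Bool$ is a Scott string, feeding it the three arguments $\toy,\toy,\Id$ performs pattern matching on the leading character: for $\strone = \bool\strtwo$ with $\bool\in\Bool$ both branch variables are bound to $\toy$, so the result is $\toy\,\encp\strtwo\Bool$, whereas for $\strone=\varepsilon$ the selected branch is $\Id$. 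This yields a head-reduction cycle $\toy\,\encp{\bool\strtwo}\Bool \towh^{\bigo{1}} \toy\,\encp\strtwo\Bool$ that consumes exactly one character per cycle, so the whole evaluation takes $\Theta(\size\strone)$ steps and ends on $\Id$, establishing point~1.

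For point~2, I would execute the \NaKAM and prove, by induction on $i$, a bookkeeping lemma stating that the state reached after consuming $i$ characters carries the environments $\lenv_i,\lenvtwo_i,\lenvthree_i$ defined just before the statement. The crucial phenomenon is that, lacking both unchaining and garbage collection, each fix-point unfolding creates a renaming entry of the form $\esub\var{(\var,\lenv_i)}$ and duplicates $\lenv_i$ into both entries of $\lenv_{i+1}$; matching the one-cycle \NaKAM trace against the recursive definition gives exactly $\lenv_{i+1} = \esub\var{(\var,\lenv_i)}\cons\esub\vartwo{(\vartwo,\lenv_i)}$, and similarly for $\lenvtwo_{i+1},\lenvthree_{i+1}$. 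Since $\size{\lenv_{i+1}}\geq 2\,\size{\lenv_i}$ (as noted after the definition) we get $\size{\lenv_n}\geq 2^n$, and taking the maximum over the run yields space $\Omega(2^{\size\strone})$.

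For point~3, I would run the \SpKAM instead and establish the opposite invariant: at the start of each cycle the state is the constant-size configuration $\spkamstate{\theta}{\stempty}{(\theta,\stempty)\cons(\toyaux,\stempty)\cons(\encp\strtwo\Bool,\stempty)\cons\stempty}$, where $\strtwo$ is the current suffix. The fix-point unfolding is handled in $\bigo{1}$ transitions by Lemma~\ref{l:fix}, and one cycle of \SpKAM transitions brings this state to the analogous one for the shorter suffix. The point is that here the two optimizations keep the number of closures bounded by a constant: unchaining (transition $\tokamseav$) moves the recursive-call closure directly onto the stack instead of growing the renaming chains that explode on the \NaKAM, while eager garbage collection (transition $\tokambw$) discards the two unused arguments after the Scott pattern matching. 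Every surviving closure is a sub-term pointer into the fixed code $\toy\,\encp\strone\Bool$, whose constructor size is $\Theta(\size\strone)$; hence each pointer has size $\bigo{\log\size\strone}$ and, with $\bigo{1}$ closures, the total space is $\bigo{\log\size\strone}$. The matching lower bound $\Omega(\log\size\strone)$ follows because the pointer to the current suffix must distinguish among $\Theta(\size\strone)$ distinct sub-terms, giving $\Theta(\log\size\strone)$.

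The main obstacle is the precise verification of the two state invariants, which requires faithfully replaying several consecutive machine transitions per consumed character (including the fix-point unfolding) and checking that the resulting environments are literally $\lenv_i,\lenvtwo_i,\lenvthree_i$ on the \NaKAM, but collapse to the three constant-size closures above on the \SpKAM. The delicate part is making the eager garbage-collection and unchaining steps of the \SpKAM line up so that no environment survives across a cycle: this is exactly where the two optimizations do their work, and it is what separates the $\Omega(2^{\size\strone})$ behaviour of the \NaKAM from the $\Theta(\log\size\strone)$ behaviour of the \SpKAM.
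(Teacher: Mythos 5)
Your proposal is correct, and for points~2 and~3 it follows essentially the same route as the paper: both replay the \NaKAM with an invariant carrying the exponentially growing environments $\lenv_i,\lenvtwo_i,\lenvthree_i$ (the paper's Lemma~\ref{l:inv-kam-scrolling}), and both replay the \SpKAM with a constant-number-of-closures invariant at the start of each fix-point cycle (the paper's Lemma~\ref{l:inv-spkam-scrolling}, whose recurring state $\spkamstate{\vartwo(\var\var\vartwo)}{\lenv_0}{(\encp\strtwo\Bool,\stempty)}$ is your state up to a few transitions), concluding with $\bigo 1$ closures each of pointer size $\bigo{\log\size\strone}$. The one genuine divergence is point~1: you prove it directly by weak head reduction, unfolding $\fix\toyaux$ and running the Scott pattern matching to get the cycle $\toy\,\encp{\bool\strtwo}\Bool \towh^{\bigo 1} \toy\,\encp\strtwo\Bool$, whereas the paper obtains point~1 as a corollary of the \SpKAM run of point~3 via the implementation theorem (\refthm{kam-impl}), exploiting the bijection between $\towh$ steps and $\tokamb$ transitions. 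Your route is more self-contained and makes the $\Theta(\size\strone)$ step count visible at the level of the calculus; the paper's is shorter given that the machine trace must be computed anyway. A minor plus of your write-up is that you actually argue the $\Omega(\log\size\strone)$ lower bound in point~3 (the pointer to the current suffix must range over $\Theta(\size\strone)$ distinct left addresses), which the paper's lemma leaves implicit since it only states the $\bigo{\cdot}$ direction.
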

\begin{proof}Since $\Bool$ is the only 
	alphabet that we are using, we remove all the superscripts.
	\begin{enumerate}
		\item This point follows from the implementation theorem, applied to 
		the sequence of point 3.
		\item We prove the statement executing 
		$\toy\, \enc\strone$ with the \NaKAM.
		\[
		\begin{array}{l|l|ll}
		\mathsf{Term}   & 
		\mathsf{Env} & \mathsf{Stack}\\
		\hline
		\toy\, \enc\strone & \stempty & \stempty& \tokamsea\\
		\toy\defeq\fix\toyaux  & \stempty & (\enc\strone,\stempty) & \tokamsea\\
		\fix\defeq\theta\theta & \stempty & 
		(\toyaux,\stempty)\cons(\enc\strone,\stempty) & \tokamsea\\
		\theta\defeq\la\var \la\vartwo \vartwo(\var \var \vartwo ) & 
		\stempty & (\theta,\stempty) \cons (\toyaux,\stempty) 
		\cons (\enc\strone,\stempty) & \tokamb^2\\
		\vartwo(\var \var \vartwo ) & \esub\var{(\theta,\stempty)}\cons 
		\esub \vartwo{(\toyaux,\stempty)}=:\lenv_0 & (\enc\strone,\stempty) 
		& 
		\text{Lemma~\ref{l:inv-kam-scrolling}}\\
		\Id & \lenvtwo_{\size\str} & \stempty
		\end{array}
		\]
		The space bound is proved since $\lenvtwo_{\size\str}$ is exponential 
		in $\size\str$.
		\item We prove the statement executing 
		$\toy\, \enc\strone$ with the \SpKAM.
		\[
		\begin{array}{l|l|ll}
		\mathsf{Term}   & 
		\mathsf{Env} & \mathsf{Stack}\\
		\hline
		\toy\, \enc\strone & \stempty & \stempty & \tokamseanv\\
		\toy\defeq\fix\toyaux  & \stempty & (\enc\strone,\stempty) & \tokamseanv\\
		\fix\defeq\theta\theta & \stempty & 
		(\toyaux,\stempty)\cons(\enc\strone,\stempty) & \tokamseanv\\
		\theta\defeq\la\var \la\vartwo \vartwo(\var \var \vartwo ) & 
		\stempty & (\theta,\stempty) \cons (\toyaux,\stempty) 
		\cons (\enc\strone,\stempty) & \tokambnw^2\\
		\vartwo(\var \var \vartwo ) & \esub\var{(\theta,\stempty)}\cons 
		\esub \vartwo{(\toyaux,\stempty)}=:\lenv_0 & (\enc\strone,\stempty) 
		& 
		\text{Lemma~\ref{l:inv-spkam-scrolling}}\\
		\Id & \stempty & \stempty
		\end{array}
		\]
		The space bound is proved considering the bound in 
		Lemma~\ref{l:inv-spkam-scrolling}. \qedhere
	\end{enumerate}
\end{proof}

The second point in the statement of the previous proposition needs the following auxiliary lemma, proved by induction. 
\begin{lem}\label{l:inv-kam-scrolling}
	$\spkamstate{\vartwo(\var \var \vartwo 
		)}{\lenv_i}{(\enc\str,\lenvthree_i)}\tonakam^{\Omega(\size\str)} 
	\spkamstate{\Id}{\lenvtwo_{i+\size\str}}{\stempty}$.
\end{lem}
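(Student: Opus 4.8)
The plan is to prove the statement by induction on the length $\size\str$ of the string sitting on the stack, while tracking the complementary index $i$ carried by the environment families $\lenv_i,\lenvtwo_i,\lenvthree_i$. The invariant I maintain is exactly the shape of the left-hand configuration: whenever the \NaKAM reaches a state $\spkamstate{\vartwo(\var\var\vartwo)}{\lenv_i}{(\enc\str,\lenvthree_i)}$, consuming one character of $\str$ brings it back to a state of the very same shape with $i$ incremented and $\str$ replaced by its tail, and \emph{without ever collecting anything}. Since the \NaKAM performs neither unchaining nor garbage collection, the closures accumulate and $\size{\lenvtwo_{i+\size\str}}$ grows exponentially, which is precisely what feeds the space bound of \refprop{toy}.\ref{point:nakam}.

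For the base case $\str=\ems$ I would just run the machine. The leading $\tokamsea$ pushes $(\var\var\vartwo,\lenv_i)$; looking up $\vartwo$ in $\lenv_i$ triggers a chain of $\tokamsub$ transitions resolving down to $(\toyaux,\stempty)$; two $\tokamb$ transitions install the closures for $\varf$ and $\varthree$, producing the environment $\lenvtwo_i$; then the $\tokamsea^3/\tokamsub/\tokamb^3$ phase unfolds $\enc\ems=\la\var_0\la\var_1\la\var_\ems\var_\ems$, producing the environment $\lenvthree_{i+1}$ and selecting the $\var_\ems$ branch, whose closure is $(\Id,\lenvtwo_i)$; a final $\tokamsub$ lands on $\spkamstate{\Id}{\lenvtwo_i}{\stempty}=\spkamstate{\Id}{\lenvtwo_{i+0}}{\stempty}$, as required.

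The inductive step $\str=\bool\cons\str'$ follows the same opening moves up to the point where the active term is the selected variable $\var_\bool$ applied to the tail, $\var_\bool\,\enc{\str'}$ (the cases $\bool=0,1$ are uniform, since $\lenvthree_{i+1}$ maps both $\var_0$ and $\var_1$ to $(\varf,\lenvtwo_i)$). From there the crucial cycle is: $\tokamsea$ pushes $(\enc{\str'},\lenvthree_{i+1})$, resolving $\var_\bool$ gives $(\varf,\lenvtwo_i)$, resolving $\varf$ in $\lenvtwo_i$ gives $(\var\var\vartwo,\lenv_i)$, so we re-enter the fix-point body $\var\var\vartwo$; two $\tokamsea$ plus the $\tokamsub$ chain resolving $\var$ to $(\theta,\stempty)$ re-enter $\theta$, and two $\tokamb$ rebuild the environment into $\lenv_{i+1}$, leaving us in $\spkamstate{\vartwo(\var\var\vartwo)}{\lenv_{i+1}}{(\enc{\str'},\lenvthree_{i+1})}$. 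Applying the induction hypothesis to $\str'$ (of length $\size\str-1$) at index $i+1$ yields $\spkamstate{\Id}{\lenvtwo_{(i+1)+\size{\str'}}}{\stempty}=\spkamstate{\Id}{\lenvtwo_{i+\size\str}}{\stempty}$, closing the induction; the $\Omega(\size\str)$ transition count is immediate, as each character costs at least a constant number of steps.

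I expect the main obstacle to be the sheer environment bookkeeping, exactly because the \NaKAM keeps everything. The delicate points are: (i) verifying that the $\tokamsub$ chains on $\var$ and $\vartwo$ really terminate on $(\theta,\stempty)$ and $(\toyaux,\stempty)$, which needs an auxiliary sub-induction on $i$ along the nested $\lenv_j$; (ii) checking that the environments produced by $\tokamb$ coincide with the \emph{defined} $\lenv_{i+1}$, $\lenvtwo_i$, $\lenvthree_{i+1}$ \emph{up to the order of their distinctly-named entries} (the machine prepends most-recent-first, whereas $\lenv_{i+1}$ is listed in source order), being careful that for $\lenvthree$ the intended shadowing of the reused Scott-encoding names $\var_0,\var_1,\var_\ems$ is respected, since lookup returns the topmost binding; and (iii) confirming that no step collects or short-cuts anything, so that the closures $(\var,\lenv_i)$ and $(\vartwo,\lenv_i)$ genuinely persist inside $\lenv_{i+1}$, which is what makes $\size{\lenvtwo_{i+\size\str}}$ exponential and thereby delivers the $\Omega(2^{\size\str})$ space blow-up.
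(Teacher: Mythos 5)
Your proposal is correct and matches the paper's own proof: both proceed by structural induction on $\str$, directly executing the \NaKAM and checking that the environments produced by the $\tokamb$ steps coincide with the families $\lenv_i,\lenvtwo_i,\lenvthree_i$, with the inductive step closing the fix-point cycle at index $i+1$. The delicate points you flag (the length-$(i{+}1)$ $\tokamsub$ chains, entry ordering, and the absence of any collection) are exactly what the paper's explicit trace verifies implicitly.
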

\begin{proof}
	By induction on the structure of $\str$.
	\[
	\begin{array}{l|l|ll}
	\mathsf{Term}   & 
	\mathsf{Env} & \mathsf{Stack}\\
	\hline \rule{0pt}{2.3ex}
	\vartwo(\var \var \vartwo ) & \lenv_i & (\enc\strone,\lenvthree_i) & \tokamsea\\
	\vartwo & \lenv_i & (\var \var \vartwo,\lenv_i)\cons 
	(\enc\strone,\lenvthree_i)&\tokamsub^{i+1}\\
	\toyaux\defeq\la\varf\la 
	{\varthree}\varthree 
	\varf\varf \Id & \stempty & (\var \var \vartwo,\lenv_i)\cons 
	(\enc\strone,\lenvthree_i) & \tokamb^2\\
	\varthree \varf\varf \Id & 
	\esub\varthree{(\enc\strone,\lenvthree_i)}\cons\esub\varf{(\var \var 
		\vartwo,\lenv_i)}=:\lenvtwo_i & \stempty& \tokamsea^3\\
	\varthree & 
	\esub\varthree{(\enc\strone,\lenvthree_i)}\cons\esub\varf{(\var \var 
		\vartwo,\lenv_i)} & (\varf,\lenvtwo_i) \cons (\varf,\lenvtwo_i) 
	\cons (\Id,\lenvtwo_i) & \tokamsub\\
	\enc\strone & \lenvthree_i & (\varf,\lenvtwo_i) \cons 
	(\varf,\lenvtwo_i) 
	\cons (\Id,\lenvtwo_i)
	\end{array}
	\]
	Case $\str=\ems$.
	\[
	\begin{array}{l|l|ll}
	\mathsf{Term}   & 
	\mathsf{Env} & \mathsf{Stack}\\
	\hline \rule{0pt}{2.3ex}
	\enc\strone\defeq\la{\var_0}\la{\var_1}\la{\var_\ems}\var_\ems & 
	\lenvthree_i & (\varf,\lenvtwo_i) \cons (\varf,\lenvtwo_i) 
	\cons (\Id,\lenvtwo_i) & \tokamb^3\\
	\var_\ems & \esub{\var_\ems}{(\Id,\lenvtwo_i)} \cons 
	\esub{\var_1}{(\varf,\lenvtwo_i)} \cons 
	\esub{\var_0}{(\varf,\lenvtwo_i)}\cons \lenvthree_i & \stempty & \tokamsub\\
	\Id & \lenvtwo_i & \stempty
	\end{array}
	\]
	Case $\str=\bool\cons\strtwo$.
	\begin{align*}
	\footnotesize\begin{array}{l|l|ll}
	\mathsf{Term}   & 
	\mathsf{Env} & \mathsf{Stack}\\
	\hline \rule{0pt}{2.3ex}
	\enc\strone\defeq\la{\var_0}\la{\var_1}\la{\var_\ems}\var_b\enc\strtwo
	& \lenvthree_i & (\varf,\lenvtwo_i) \cons (\varf,\lenvtwo_i) 
	\cons (\Id,\lenvtwo_i) & \tokamb^3\\
	\var_\bool\enc\strtwo & \esub{\var_\ems}{(\Id,\lenvtwo_i)} \cons 
	\esub{\var_1}{(\varf,\lenvtwo_i)} \cons 
	\esub{\var_0}{(\varf,\lenvtwo_i)}\cons\lenvthree_i=:\lenvthree_{i+1} & 
	\stempty & \tokamsea\\
	\var_\bool & \esub{\var_\ems}{(\Id,\lenvtwo_i)} \cons 
	\esub{\var_1}{(\varf,\lenvtwo_i)} \cons 
	\esub{\var_0}{(\varf,\lenvtwo_i)}\cons \lenvthree_i & 
	(\enc\strtwo,\lenvthree_{i+1}) & \tokamsub\\
	\varf & \esub\varthree{(\enc\strone,\lenvthree_i)}\cons\esub\varf{(\var 
		\var \vartwo,\lenv_i)} & (\enc\strtwo,\lenvthree_{i+1}) & \tokamsub\\
	\var \var \vartwo & \esub{\var}{(\var,\lenv_{i-1})} \cons 
	\esub{\vartwo}{(\vartwo,\lenv_{i-1})} & (\enc\strtwo,\lenvthree_{i+1}) & \tokamsea^2\\
	\var &  \esub{\var}{(\var,\lenv_{i-1})} \cons 
	\esub{\vartwo}{(\vartwo,\lenv_{i-1})} & (\var,\lenv_i) 
	\cons(\vartwo,\lenv_i) \cons (\enc\strtwo,\lenvthree_{i+1}) & 
	\tokamsub^{i+1}\\
	\theta\defeq\la\var \la\vartwo \vartwo(\var \var \vartwo ) & 
	\stempty & (\var,\lenv_i) 
	\cons(\vartwo,\lenv_i) \cons (\enc\strtwo,\lenvthree_{i+1}) &\tokamb^2\\
	\vartwo(\var \var \vartwo ) & \esub{\var}{(\var,\lenv_i)} \cons 
	\esub{\vartwo}{(\vartwo,\lenv_i)}=:\lenv_{i+1} & 
	(\enc\strtwo,\lenvthree_{i+1}) & \ih\\
	\Id & \lenvtwo_{i+\size\str} & \stempty
	\end{array}\\
	\tag*{\qedhere}
	\end{align*}
\end{proof}
The third point in the statement of the proposition~\ref{p:toyapp} needs the following auxiliary lemma, proved by induction.
\begin{lem}\label{l:inv-spkam-scrolling}
	The \SpKAM executes the reduction $\spkamstate{\vartwo(\var \var \vartwo 
		)}{\lenv_0}{(\enc\str,\stempty)}\tospkam^{\Theta(\size\str)} 
	\spkamstate{\Id}{\stempty}{\stempty}$ consuming $\bigo{\log(\size\str)}$ 
	space.
\end{lem}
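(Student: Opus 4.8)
The plan is to proceed by induction on the structure of $\str$, paralleling the \NaKAM computation of Lemma~\ref{l:inv-kam-scrolling} but tracking how eager garbage collection and unchaining keep the environment from growing. The first move is to reduce the initial state in a constant number of transitions to a convenient \emph{loop-head} form: a $\tokamseanv$ step, a $\tokamsub$ unfolding $\vartwo$ to $\toyaux = \la\varf\la\varthree\varthree\varf\varf\Id$, and a $\tokambnw$ firing the outer $\beta$-redex that binds $\varf$ lead to
$$\spkamstate{\la\varthree\varthree\varf\varf\Id}{\esub\varf{(\var\var\vartwo,\lenv_0)}}{(\enc\str,\stempty)}.$$
I would then prove by induction that every such loop-head state, whose environment has the form $\esub\varf{(\var\var\vartwo,\lenv^\bullet)}$ with $\lenv^\bullet$ consisting of exactly the two fixpoint bindings $\esub\var{(\theta,\stempty)}$ and $\esub\vartwo{(\toyaux,\stempty)}$ in either order, reduces to $\spkamstate{\Id}{\stempty}{\stempty}$ in $\Theta(\size\str)$ transitions and $\bigo{\log\size\str}$ space. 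Quantifying over the order of the two bindings absorbs the reordering introduced by Lemma~\ref{l:fix} and keeps the induction hypothesis applicable across iterations.

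From the loop head, descending into the body $\varthree\varf\varf\Id$ first garbage-collects (via $\tokambw$) the argument destined for the closed $\Id$, then uses unchaining ($\tokamseav$) to push two copies of the recursive-call closure $(\var\var\vartwo,\lenv^\bullet)$ onto the stack, and finally unfolds $\varthree$ to the string, reaching $\spkamstate{\enc\str}{\stempty}{(\var\var\vartwo,\lenv^\bullet)\cons(\var\var\vartwo,\lenv^\bullet)\cons(\Id,\stempty)}$, where the Scott string acts as a selector on these three closures. In the base case $\str=\ems$ the $\var_\ems$ branch is chosen, both recursive-call closures are discarded by $\tokambw$, and the state reduces to $\spkamstate{\Id}{\stempty}{\stempty}$, matching the claim. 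In the inductive case $\str=\bool\cons\strtwo$ the selected branch is $(\var\var\vartwo,\lenv^\bullet)$ applied to $(\enc\strtwo,\stempty)$, while the two unused branches are collected; unchaining on $\var\var\vartwo$ followed by Lemma~\ref{l:fix} then re-runs the fixpoint and brings the machine to the loop head for $\strtwo$, whose environment is again $\esub\varf{(\var\var\vartwo,\lenv^\bullet)}$ up to reordering of the two fixpoint bindings. The induction hypothesis applied to $\strtwo$ concludes, and since each character is consumed in a constant number of transitions we obtain the $\Theta(\size\str)$ bound.

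For the space bound, the decisive observation—and the whole point of the lemma—is that the environment never accumulates: after each iteration it is back to just the two fixpoint bindings, in sharp contrast with the \NaKAM, where the environments $\lenvtwo_i$ of Lemma~\ref{l:inv-kam-scrolling} double in size at every step. Consequently every reachable state holds only a constant number of closures (at most three on the stack and two per environment), so the abstract space is $\bigo{1}$. Since the initial code $\toy\,\enc\str$ has constructor size $\Theta(\size\str)$, each sub-term pointer costs $\bigo{\log\size\str}$, yielding the claimed $\bigo{\log\size\str}$ concrete space. The main obstacle is precisely the bookkeeping that certifies non-accumulation: one must pinpoint which $\tokambw$ and $\tokamseav$ steps discard which closures and check that the two fixpoint bindings are the only survivors of each cycle, and one must cleanly handle the reordering of those bindings produced by Lemma~\ref{l:fix}, which is exactly what the order-agnostic formulation of the induction hypothesis is designed to neutralize.
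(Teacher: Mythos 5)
Your proof is correct and follows essentially the same route as the paper's: induction on the structure of $\str$, directly executing the \SpKAM and observing that eager garbage collection and unchaining restore the two-entry fixpoint environment at every iteration, so that the number of closures is constant and the space reduces to the $\bigo{\log\size\str}$ cost of constantly many sub-term pointers. The only slip is attributing the emptying of the environment in the closure $(\Id,\stempty)$ to $\tokambw$ — it is actually performed by the restriction $\lenv|_{\Id}$ in $\tokamseanv$ — and your order-agnostic loop-head invariant (absorbing the entry reordering coming from Lemma~\ref{l:fix}) is a harmless repackaging of the paper's exact reconstitution of $\lenv_0$ before each recursive call.
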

\begin{proof}
	By induction on the structure of $\str$.
	\[
	\begin{array}{l|l|ll}
	\mathsf{Term}   & 
	\mathsf{Env} & \mathsf{Stack}\\
	\hline
	\vartwo(\var \var \vartwo ) & \lenv_0 & (\enc\strone,\stempty)& \tokamseanv\\
	\vartwo & \esub \vartwo{(\toyaux,\stempty)} & (\var \var 
	\vartwo,\lenv_0)\cons 
	(\enc\strone,\stempty) & \tokamsub\\
	\toyaux\defeq\la\varf\la {\varthree}\varthree 
	\varf\varf \Id & \stempty & (\var \var \vartwo,\lenv_0)\cons 
	(\enc\strone,\stempty) & \tokambnw^2\\
	\varthree \varf\varf \Id & 
	\esub\varthree{(\enc\strone,\stempty)}\cons\esub\varf{(\var \var 
		\vartwo,\lenv_0)} & \stempty & \tokamseanv\tokamseav^2\\
	\varthree & 
	\esub\varthree{(\enc\strone,\stempty)} & (\var \var \vartwo,\lenv_0) 
	\cons (\var \var 
	\vartwo,\lenv_0) 
	\cons (\Id,\stempty)& \tokamsub\\
	\enc\strone & \stempty & (\var \var \vartwo,\lenv_0) 
	\cons (\var \var 
	\vartwo,\lenv_0) 
	\cons (\Id,\stempty)
	\end{array}
	\]
	Case $\str=\ems$.
	\[
	\begin{array}{l|l|ll}
	\mathsf{Term}   & 
	\mathsf{Env} & \mathsf{Stack}\\
	\hline \rule{0pt}{2.3ex}
	\enc\strone\defeq\la{\var_0}\la{\var_1}\la{\var_\ems}\var_\ems & 
	\stempty & (\var \var \vartwo,\lenv_0) 
	\cons (\var \var 
	\vartwo,\lenv_0) 
	\cons (\Id,\stempty) & \tokambw^2\tokamb \\
	\var_\ems & \esub{\var_\ems}{(\Id,\stempty)}  & \stempty & \tokamsub\\
	\Id & \stempty & \stempty
	\end{array}
	\]
	Case $\str=\bool\cons\strtwo$.
	\[
	\begin{array}{l|l|ll}
	\mathsf{Term}   & 
	\mathsf{Env} & \mathsf{Stack}\\
	\hline \rule{0pt}{2.3ex}
	\enc\strone\defeq\la{\var_0}\la{\var_1}\la{\var_\ems}\var_b\enc\strtwo
	& \stempty & (\var \var \vartwo,\lenv_0) 
	\cons (\var \var \vartwo,\lenv_0) \cons (\Id,\stempty) & \tokamb^2\tokambw\\
	\var_\bool\enc\strtwo & \esub{\var_b}{(\var \var \vartwo,\lenv_0)}  & 
	\stempty & \tokamsea\\
	\var_\bool & \esub{\var_b}{(\var \var \vartwo,\lenv_0)} & 
	(\enc\strtwo,\stempty) & \tokamsub\\
	\var \var \vartwo & \lenv_0 & (\enc\strtwo,\stempty) & \tokamseav\\
	\var &  \esub{\var}{(\theta,\stempty)} & (\theta,\stempty) 
	\cons(\toyaux,\stempty) \cons (\enc\strtwo,\stempty) & \tokamsub\\
	\theta\defeq\la\var \la\vartwo \vartwo(\var \var \vartwo ) & 
	\stempty & (\theta,\stempty) 
	\cons(\toyaux,\stempty) \cons (\enc\strtwo,\stempty) & \tokambnw\\
	\vartwo(\var \var \vartwo ) & \lenv_0 & 
	(\enc\strtwo,\stempty) & \ih\\
	\Id & \stempty & \stempty
	\end{array}
	\]
	The space bound is proved since there is a static bound, namely 8, on the number of closures stored during the execution, and the size of each closure is clearly bounded by $\log(\size{\str})$.
\end{proof}

\subsection{Proof of Proposition~\ref{prop:toytapeglobal}}
As before an auxiliary lemma is required to prove the proposition. We state and prove it below the main proposition.
\begin{prop}
	Let $\strone\in\Bool^*$ and $\globalcopy \defeq \la\varthree(\fix 
	(\la\varf\la 
	{\strone'}\strone' \varf\varf \varthree) \varthree)$.
	\begin{enumerate}
		\item $\globalcopy\, \encp\strone\Bool \towh^{\Theta(\size\strone)} 
		\encp\strone\Bool$.
		\item The space used by the \SpKAM to simulate the evaluation of the 
		previous point is $\Theta(\log\size\strone)$.
	\end{enumerate}
\end{prop}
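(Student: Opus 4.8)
The plan is to mirror the proof of \refprop{toy}: point~1 follows from point~2 via the implementation theorem (the step count being $\Theta(\size\strone)$ is then immediate), so the real work is to execute $\globalcopy\,\enc\strone$ on the \SpKAM and isolate the repetitive part in an invariant established by induction on the string. Throughout, write $B$ for the loop body $\la\varf\la{\strone'}\strone'\varf\varf\varthree$, so that $\globalcopy=\la\varthree(\fix\,B\,\varthree)$ with $\fix=\theta\theta$.

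First I would run the opening phase from $\spkamstate{\globalcopy\,\enc\strone}{\stempty}{\stempty}$. A $\tokamseanv$ followed by $\tokambnw$ consumes the argument $\enc\strone$ against the outermost $\la\varthree$, producing the single \emph{captured global copy} as the environment entry $\esub\varthree{(\enc\strone,\stempty)}$. The trailing occurrence of $\varthree$ is then pushed by $\tokamseav$ and two $\tokamseanv$ steps expose the head $\theta$ of $\fix$, at which point Lemma~\ref{l:fix} applies (with $\tmtwo=B$ and $\lenv=\esub\varthree{(\enc\strone,\stempty)}$) and lands the machine on $\spkamstate{B}{\esub\varthree{(\enc\strone,\stempty)}}{\fix^{\kop}\cons(\enc\strone,\stempty)\cons\stempty}$.

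Second, and this is the core, I would state the invariant that for any suffix $\strtwo$ still to be processed the machine reaches a state of the form $\spkamstate{B}{\esub\varthree{(\enc\strone,\stempty)}}{\fix^{\kop}\cons(\enc\strtwo,\stempty)\cons\stack}$, and prove by induction on $\strtwo$ that it reduces to $\spkamstate{\enc\strone}{\stempty}{\stack}$. Two $\tokambnw$ steps bind $\varf\mapsto\fix^{\kop}$ and the loop variable $\strone'\mapsto(\enc\strtwo,\stempty)$ (the suffix), while $\varthree$ keeps pointing at the \emph{full} string; scrolling $\strone'\varf\varf\varthree$ by $\tokamseav$/$\tokamsub$ then exposes the Scott match on $\enc\strtwo$ with the three branch closures $\fix^{\kop},\fix^{\kop},(\enc\strone,\stempty)$ on the stack. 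In the base case $\strtwo=\ems$ the match selects the $\var_\ems$-branch, which is exactly the captured copy $(\enc\strone,\stempty)$, so the output is the full original string $\enc\strone$; in the step $\strtwo=\bool\strthree$ it selects $\var_\bool\mapsto\fix^{\kop}$, re-enters the recursive call $\var\var\vartwo$, and by a second use of Lemma~\ref{l:fix} returns to the invariant shape with the shorter suffix $\enc\strthree$ on the stack and the entry $\esub\varthree{(\enc\strone,\stempty)}$ \emph{left untouched}. This yields point~1.

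For the space bound I would read off the run recorded by the invariant that the number of live closures stays bounded by a fixed constant (eager garbage collection, realised by the environment restrictions in $\tokamseav/\tokamseanv$ and the collection in $\tokambw$, prevents any accumulation), while every sub-term pointer in these closures addresses the code $\globalcopy\,\enc\strone$, whose constructor size is $\Theta(\size\strone)$, hence has size $\bigo{\log\size\strone}$; multiplying gives the upper bound, and $\Omega(\log\size\strone)$ is immediate since at least one pointer into $\enc\strone$ is always maintained. The main obstacle is precisely the bookkeeping in the invariant: one must check that the captured entry $\esub\varthree{(\enc\strone,\stempty)}$, which is also embedded inside every $\fix^{\kop}$, survives each iteration verbatim. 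Semantically the $\l$-term copies the whole string at each unfolding of the fix-point, yet the \SpKAM only ever duplicates the \emph{pointer} to the immutable closure $(\enc\strone,\stempty)$, never the string itself; it is this pointer-only duplication, together with the constant bound on live closures, that keeps the space logarithmic and sharply distinguishes the global-copy scheme from the local-copy one of \refprop{toytape}.
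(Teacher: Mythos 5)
Your proposal is correct and follows essentially the same route as the paper: point~1 is derived from the machine execution via the implementation theorem, the execution is split into an opening phase ending in an application of Lemma~\ref{l:fix}, the scrolling loop is handled by an invariant proved by induction on the remaining suffix (the paper packages this as Lemma~\ref{l:global}, stated slightly more generally with an arbitrary closure bound to $\varthree$), and the space bound comes from the constant bound on live closures times the logarithmic size of sub-term pointers. Your observation that the global copy survives only through the copy of $\esub\varthree{(\enc\strone,\stempty)}$ embedded in $\fix^{\kop}$, while only pointers are ever duplicated, is exactly the point the paper makes as well.
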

\begin{proof}
	The first point is a consequence of the second one, since the \SpKAM correctly implements Closed Call-by-Name. We prove the second point of the statement by directly executing the 
	\SpKAM. Since $\Bool$ is the only 
	alphabet that we are using, we remove all the superscripts. Let us define 
	$\tm\defeq\la\varf\la 
	{\strone'}\strone' \varf\varf \varthree$. 
	
	\[
	\begin{array}{l|l|ll}
	\mathsf{Term} & \mathsf{Environment} & \mathsf{Stack}\\ 
	\cline{1-3}
	\globalcopy\, \enc\strone & \stempty & \stempty & \tokamseanv\\
	\globalcopy\defeq\la\varthree(\fix \tm \varthree) & \stempty & 
	(\enc\strone,\stempty)=:\str^\kop & \tokambnw\\
	\fix \tm \varthree & \esub\varthree{\str^\kop} & 
	\stempty & \tokamseav\\
	\fix \tm & \esub\varthree{\str^\kop} & \str^\kop & \tokamseanv\\
	\fix\defeq \theta \theta & 
	\stempty & (\tm,\esub\varthree{\str^\kop}) \cons \str^\kop & \tokamseanv\\
	\theta\defeq\la\var \la\vartwo \vartwo(\var \var \vartwo) & \stempty & 
	(\theta,\stempty) \cons (\tm,\esub\varthree{\str^\kop}) \cons \str^\kop 
	& \to^{\Theta(\size{\str})}(\text{ Lemma~\ref{l:global}})\\
	\enc\str & \stempty & \stempty
	\end{array}
	\]
	The space bound is immediate considering the space bound of lemma~\ref{l:global}, and the fact that during the execution only a fixed number of closures is stored.
\end{proof}
\begin{lem}\label{l:global}
	Let $\strone\in\Bool^*$ and $\tm\defeq\la\varf\la 
	{\strone'}\strone' \varf\varf \varthree$. Then \\
	$\spkamstate{\theta}{\stempty}{(\theta,\stempty) \cons 
		(\tm,\esub\varthree{(\tmtwo,\lenv)}) \cons 
		\str^\kop}\tospkam^{\Theta(\size{\str})}\spkamstate{\tmtwo}{\lenv}{\stempty}$
	and the space used is 
	$\bigo{\size\lenv+\log\size{\str}+\log\size{\tmtwo}}$.
\end{lem}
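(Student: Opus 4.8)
The plan is to argue by induction on the length of $\str$, the crucial observation being that the global copy $(\tmtwo,\lenv)$ bound to $\varthree$ travels \emph{inside} the fix-point closure and is hence preserved across every recursive call, being actually consumed only in the base case. Throughout I set $\lenv_{\varthree}\defeq\esub\varthree{(\tmtwo,\lenv)}$, recall that $\str^\kop=(\enc\str,\stempty)$, and note that the initial state of the lemma is $\spkamstate{\theta}{\stempty}{(\theta,\stempty)\cons(\tm,\lenv_{\varthree})\cons\str^\kop}$. By Lemma~\ref{l:fix} (instantiated with iterated closure $(\tm,\lenv_{\varthree})$ and tail $\str^\kop$), unfolding $\fix$ reaches in $\bigo{1}$ steps the state $\spkamstate{\tm}{\lenv_{\varthree}}{\fix^{\kop}\cons\str^\kop}$ with $\fix^{\kop}\defeq(\var\var\vartwo,\esub\vartwo{(\tm,\lenv_{\varthree})}\cons\esub\var{(\theta,\stempty)})$.

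First I would execute the part common to both cases. From $\spkamstate{\tm}{\lenv_{\varthree}}{\fix^{\kop}\cons\str^\kop}$, two $\tokambnw$ steps consume the binders of $\tm=\la\varf\la{\strone'}\strone'\varf\varf\varthree$, binding $\varf$ to $\fix^{\kop}$ and $\strone'$ to $\str^\kop$. Then the head application $\strone'\varf\varf\varthree$ is unfolded by three $\tokamseav$ steps, which (processing the occurrences right-to-left) push first $(\tmtwo,\lenv)$, then $\fix^{\kop}$, then $\fix^{\kop}$, and finally a $\tokamsub$ step replaces the head variable $\strone'$ by $\enc\str$. This yields the \emph{dispatch state} $\spkamstate{\enc\str}{\stempty}{\fix^{\kop}\cons\fix^{\kop}\cons(\tmtwo,\lenv)}$, whose stack carries the three Scott branches (recurse, recurse, return the global copy) in the expected order.

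From the dispatch state I branch on $\str$. For the base case $\str=\ems$, matching $\enc\ems=\la{\var_0}\la{\var_1}\la{\var_{\ems}}\var_{\ems}$ triggers two $\tokambw$ steps that garbage-collect the two $\fix^{\kop}$ arguments, one $\tokambnw$ binding $\var_{\ems}$ to $(\tmtwo,\lenv)$, and one $\tokamsub$, landing exactly on $\spkamstate{\tmtwo}{\lenv}{\stempty}$. For the inductive case $\str=\bool\strtwo$, matching $\enc\str=\la{\var_0}\la{\var_1}\la{\var_{\ems}}\var_{\bool}\enc\strtwo$ keeps the selected $\fix^{\kop}$ (binding $\var_{\bool}$) while two $\tokambw$ steps discard the other $\fix^{\kop}$ and the stack copy of $(\tmtwo,\lenv)$; crucially the global copy survives inside $\fix^{\kop}$ through $\lenv_{\varthree}$. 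A bounded number of further transitions (a $\tokamseanv$ pushing $(\enc\strtwo,\stempty)$, a $\tokamsub$ unfolding $\fix^{\kop}$, two $\tokamseav$ and a $\tokamsub$) then reconstruct precisely $\spkamstate{\theta}{\stempty}{(\theta,\stempty)\cons(\tm,\lenv_{\varthree})\cons(\enc\strtwo,\stempty)}$, i.e. the initial state of the lemma for the shorter string $\strtwo$ with the \emph{same} pair $(\tmtwo,\lenv)$. The induction hypothesis closes this case in $\Theta(\size\strtwo)$ steps; since both the common prefix and the dispatch take $\bigo{1}$ steps, the run has length $\Theta(\size\str)$.

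For the space bound I would observe that only a constant number of closures are live at any state, so it suffices to weigh each. The encoded-string closures $(\enc\str,\stempty)$ and $(\enc\strtwo,\stempty)$ and the active pointer into $\enc\str$ cost $\bigo{\log\size\str}$ each, using that left addresses assign to every suffix of $\enc\str$ a pointer of size logarithmic in $\size\str$. The only non-constant environment data is the pair $(\tmtwo,\lenv)$, of size $\bigo{\log\size\tmtwo+\size\lenv}$, which occurs at most a constant number of times (directly on the stack and nested inside the $\fix^{\kop}$ copies via $\lenv_{\varthree}$). Summing the constantly many weighted closures gives the claimed $\bigo{\size\lenv+\log\size\str+\log\size\tmtwo}$. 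The main obstacle is purely the bookkeeping of eager GC and environment restriction: I must verify that collecting the direct stack copy of $(\tmtwo,\lenv)$ in the inductive step does not lose it (it persists inside $\fix^{\kop}$) and that $\lenv_{\varthree}$ reappears unchanged at the recursive state, so that the induction hypothesis applies verbatim.
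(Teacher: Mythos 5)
Your proposal is correct and follows essentially the same route as the paper: induction on the structure of $\str$, a common prefix obtained via Lemma~\ref{l:fix} leading to the dispatch state $\spkamstate{\enc\str}{\stempty}{\fix^{\kop}\cons\fix^{\kop}\cons(\tmtwo,\lenv)}$, the same case analysis with eager GC collecting the stack copy of $(\tmtwo,\lenv)$ while it survives inside $\fix^{\kop}$, and the same constant-number-of-closures argument for the space bound. No gaps.
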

\begin{proof}
	We proceed by induction on the 
	structure of $\str$. The first steps are common to both the base case and 
	the induction step. We define $\fix^\kop\defeq(\var\var\vartwo, 
	\esub\vartwo{(\tm,\esub\varthree{(\tmtwo,\lenv)})} \cons 
	\esub\var{(\theta,\stempty)})$.
	
	\[
	\begin{array}{l|l|ll}
	\mathsf{Term} & \mathsf{Environment} & \mathsf{Stack}\\ 
	\cline{1-3} \rule{0pt}{2.3ex}
	\theta\defeq\la\var \la\vartwo \vartwo(\var \var \vartwo) & \stempty & 
	(\theta,\stempty) \cons (\tm,\esub\varthree{(\tmtwo,\lenv)}) \cons 
	\str^\kop 
	& \to(\text{ Lemma~\ref{l:fix}})\\
	\tm\defeq\la\varf\la 
	{\strone'}\strone' \varf\varf \varthree & 
	\esub\varthree{(\tmtwo,\lenv)} & 
	\fix^\kop \cons \str^\kop & \tokambnw^2\\
	\strone' \varf\varf \varthree & \esub\varf{\fix^\kop} 
	\cons\esub{\strone'}{\str^\kop}\cons \esub\varthree{(\tmtwo,\lenv)} & 
	\stempty & \tokamseav^3\\
	\strone' & \esub{\strone'}{\str^\kop} & \fix^\kop \cons \fix^\kop \cons 
	(\tmtwo,\lenv) & \tokamsub\\
	\enc\str & \stempty & \fix^\kop \cons \fix^\kop \cons 
	(\tmtwo,\lenv)
	\end{array}
	\]
	
	Base case: $\str=\ems$.
	
	\[
	\begin{array}{l|l|ll}
	\mathsf{Term} & \mathsf{Environment} & \mathsf{Stack}\\ 
	\cline{1-3} \rule{0pt}{2.3ex}
	\enc\str\defeq\la{\var_0}\la{\var_1}\la{\var_\ems}\var_\ems & \stempty 
	& \fix^\kop \cons \fix^\kop \cons (\tmtwo,\lenv) & \tokambw^2\tokambnw\tokamsub\\
	\tmtwo & \lenv & \stempty
	\end{array}
	\]
	
	Inductive case: $\str:b\cons\strtwo$ where $b\in\{0,1\}$.
	
	\[
	\begin{array}{l|l|ll}
	\mathsf{Term} & \mathsf{Environment} & \mathsf{Stack}\\ 
	\cline{1-3} \rule{0pt}{2.3ex}
	\enc\str\defeq\la{\var_0}\la{\var_1}\la{\var_\ems}\var_b\enc\strtwo & 
	\stempty 
	& \fix^\kop \cons \fix^\kop \cons (\tmtwo,\lenv) & \tokamb^2\tokambw\\
	\var_b\enc\strtwo & \esub{\var_b}{\fix^\kop} & \stempty & \tokamsea\\
	\var_b & \esub{\var_b}{\fix^\kop} & (\enc\strtwo,\stempty)=: 
	\strtwo^\kop & \tokamsub\\
	\var \var \vartwo & \esub\vartwo{(\tm,\esub\varthree{(\tmtwo,\lenv)})} 
	\cons 
	\esub\var{(\theta,\stempty)} & \strtwo^\kop & \tokamseav^2\\
	\var & \esub\var{(\theta,\stempty)} & (\theta,\stempty) \cons 
	(\tm,\esub\varthree{(\tmtwo,\lenv)}) \cons \strtwo^\kop & \tokamsub\\
	\theta & \stempty & (\theta,\stempty) \cons 
	(\tm,\esub\varthree{(\tmtwo,\lenv)}) \cons \strtwo^\kop & 
	\to^{\Theta(\size\strtwo)} 
	\text{ \ih}\\
	\tmtwo & \lenv & \stempty
	\end{array}
	\]
	
	About space, it is immediate to see that all computations are constrained 
	in space $\bigo{\size\lenv+\log\size{\str}+\log\size{\tmtwo}}$, since at 
	any point 
	during the computation there is a \emph{bounded} number of closures, 
	independent from both $\str$ and $\tmtwo$.
\end{proof}
\section{Proofs of Section~\ref{sect:reasonability}}\label{sec:main-proof}
This appendix is devoted to the proof of the main theorem of 
the paper, \ie the space reasonable simulation of TMs into the $\l$-calculus 
(better, the \SpKAM). It is a boring proof, where we simply execute the image 
of the encoding of TMs into the $\l$-calculus with the \SpKAM.

First, we need to understand how a TM configuration is represented in the 
\SpKAM, \ie how it is mapped to environments and closures. This will be a sort of \emph{invariant} of the execution.

\begin{defi}
	A configuration $\config$ of a TM is represented as a KAM closure 
	$\config^{\kop}$ in 
	the 
	following way:
	\[
	(\inputstr,\counter,\strone,\elem,\strtwo,\state)^{\kop}\defeq(\tuple{f,c,m,\cods\elem,
		d,\cods\state},\esub f{(\enc\inputstr,\stempty)},\esub c {n^{\kop}}, 
	\esub m {\str{^{\kop}}}, \esub d {\strtwo^{\kop}})
	\]
	where
	$\str^{\kop}=\begin{cases}
	(\cods\ems,\stempty) & \text{if }\str=\ems\\
	(\lambda \var_1.\ldots.\lambda \var_{\size\alpone}.\lambda 
	\vartwo.\var_{i_\elem}z,\esub z {r^{\kop}}) & \text{if 
	}\str=\elem_i\strtwo
	\end{cases}
	$
\end{defi}

We observe that this representation preserves the space consumption, \ie it is 
reasonable.
\begin{lem}\label{l:config-size}
	Let $\config\defeq(\inputstr,\counter,\strone,\elem,\strtwo,\state)$ be a 
	configuration of a Turing machine and 
	$\size{\config}\defeq\size\strone+\size\strtwo$ its 
	space consumption. Then 
	$\size{\config^{\kop}}=\Theta(\size{\config}+\log(\size 
	i))$.
\end{lem}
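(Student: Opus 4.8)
The plan is to compute $\size{\config^\kop}$ directly from the size conventions fixed by the abstract implementation of \reffig{kam}, with the sub-term size redefined for the \SpKAM in \refsect{spkam} as $\sizespace{\tmtwo}^{I}=\size{{\tt ladd}(\tmtwo,\tm_{0})}$. Since $\size{}$ of a closure is additive over its term and its environment entries, I would unfold $\config^\kop=(\tuple{f,c,m,\cods\elem,d,\cods\state},\lenv)$, with $\lenv$ the four entries for $f,c,m,d$, and bound each summand separately. The code is $\tm_0=\enc{M}\,\enc{\inputstr}$, where $\enc{M}$ (the transition table) has size $\bigo{1}$ and $\inputstr$ is the full input; the head-position counter satisfies $\counter\le\size{\inputstr}$. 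The crux of the whole computation will be a uniform $\bigo{1}$ bound on the sub-term pointers occurring in $\config^\kop$, which I obtain from the left-addressing scheme.

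First I would dispatch the structural contributions. By the sub-term invariant of \refsect{naive-kam}, the term $\tuple{f,c,m,\cods\elem,d,\cods\state}$ (and the constructors $\cods\elem,\cods\state$ inside it) is a literal sub-term of $\tm_0$, in fact of its fixed left part $\enc{M}$, since the finitely many output tuples of the transition function occur literally in $\enc{M}$. By the characteristic property of left addresses recalled in \refsect{lambda-addresses} --- in an application $\tmtwo\tmthree$ the addresses inside $\tmtwo$ are independent of $\tmthree$ --- every such address has size $\bigo{\log\size{\enc{M}}}=\bigo{1}$, so the term contributes $\bigo{1}$; likewise the finitely many variables $f,c,m,d$ cost $\bigo{1}$. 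The input entry $\esub f{(\enc{\inputstr},\stempty)}$ contributes only a pointer to $\enc{\inputstr}$, which is the root of the right child of the top application and hence occupies in-order position $\size{\enc{M}}+1$, a constant, so its left address has size $\bigo{1}$. This is exactly where log-sensitivity enters: the whole, arbitrarily long, input is paid for by a single constant-size pointer.

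Next I would bound the three data entries, each a chain of closures produced by the recursive clause of $\str^\kop$. A single link $(\lambda\var_1\ldots\lambda\var_{\size\alpone}\lambda\vartwo.\var_{i_\elem}z,\esub z{r^\kop})$ has size $\bigo{1}+\size{z}+\size{r^\kop}$, since its Scott constructor is again a sub-term of $\enc{M}$ and thus $\bigo{1}$ by left addressing, and $z$ is one variable. Unrolling the recursion, the chain of a length-$k$ string has size $\Theta(k)$: the upper bound by summing $\bigo{1}$ over the $k$ links plus the base case $(\cods\ems,\stempty)$, the matching lower bound because each link is a non-empty closure. Hence $\size{\strone^\kop}=\Theta(\size\strone)$ and $\size{\strtwo^\kop}=\Theta(\size\strtwo)$, while the counter entry $\esub c{n^\kop}$ (with $n=\counter$) is the $\str^\kop$-encoding of the binary string $\ntostr{\counter}$ of length $\Theta(\log\counter)$ and so contributes $\Theta(\log\counter)$. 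Summing everything,
\[
\size{\config^\kop}=\bigo{1}+\Theta(\size\strone)+\Theta(\size\strtwo)+\Theta(\log\counter)=\Theta(\size\strone+\size\strtwo+\log\counter)=\Theta(\size\config+\log\size{i}),
\]
identifying the counter contribution $\log\counter$ with the statement's $\log\size{i}$ (here $i$ is the head-position counter $\counter$); since $\counter\le\size{\inputstr}$, the same computation yields the upper bound $\bigo{\size\config+\log\size{\inputstr}}$ actually used in \refthm{main-simulation}.

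The hard part is the uniform constant-pointer claim, namely certifying that every term pointer appearing in $\config^\kop$ lands inside the fixed part $\enc{M}$ (or at the root of $\enc{\inputstr}$) and therefore has a left address of size $\bigo{1}$ rather than $\bigo{\log\size{\inputstr}}$. This is the left-address critical point of \refsect{spkam}, and it is precisely what separates the reasonable $\Theta(\size\config+\log\size{i})$ bound here from the naive $\Theta(\size\config\cdot\log\size{\inputstr})$ bound one would get with tree addresses or global code pointers --- compare the local-copy analysis of \refprop{toytape} with its left-addressed refinement \refprop{toytape2}. Discharging it rigorously requires the exact shape of the encoding from \cite{log-encoding}; once it is granted, the remaining work is only the additive bookkeeping above.
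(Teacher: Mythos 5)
Your proposal is correct and follows exactly the route the paper intends: the paper in fact gives no explicit proof of Lemma~\ref{l:config-size}, only the one-line remark that the sub-term pointers inside $\counter^\kop$, $\str^\kop$, $\strtwo^\kop$ have constant size (via left addressing into the fixed part $\enc M$) and that the counter is bounded by $\log\size\istr$, which are precisely the two points your additive bookkeeping elaborates. The only nitpick is that the $i$ in the statement's $\log(\size i)$ is the input string $\istr$ rather than the head-position counter, but since $\counter\le\size\istr$ your computation already yields the bound in that form.
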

In this lemma, we have already considered that the size of pointers inside 
$\counter^\kop$, $\str^\kop$, $\strtwo^\kop$ is constant and that 
$n\leq\log\size\istr$.

Now we are able to prove the main theorem. A series of intermediate lemmata, about 
the different combinators used in the encoding ($\inits,\finals,\transs$), are 
necessary. They are stated 
and proved below the main statement . By $\too^*_f$, we mean that the space 
consumption of that series of transitions is $f$.

\begin{thm}[TM are simulated by the \SpKAM in reasonable space]
	There is an encoding $\enc{\cdot}$ of log-sensitive TM into 
	$\detLam$ such that if the run $\run$ of the TM $M$ on input 
	$\istr\in\Bool^{*}$:
	\begin{enumerate}
		\item 
		\emph{Termination}: ends in $\state_{\bool}$ with $\bool\in\Bool$, then
		there is a complete sequence $\runtwo:\enc M\, \enc\istr \tobdet^n 
		\enc{\bool}$
		where $n=\Theta((T_{\textrm{TM}}(\run)+1)\cdot \size\istr\cdot 
		\log{\size\istr})$.
		
		\item 
		\emph{Divergence}: diverges, then $\enc M\, \enc\istr$ is 
		$\tobdet$-divergent.
		
		\item 
		\emph{\SpKAM}: the space used by the \SpKAM to simulate the evaluation 
		of  point 1 is $\bigo{S_{\textrm{TM}}(\run) +\log\size\istr}$ if $\enc 
		M$ 
		and $\enc\istr$ have separate address spaces.
	\end{enumerate}
\end{thm}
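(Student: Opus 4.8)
The plan is to prove all three points of the theorem simultaneously by
directly executing the \SpKAM on the image of the encoding, following the
modular structure of the encoding into the three combinators
$\inits,\transs,\finals$. The backbone of the argument is a simulation
\emph{invariant}: at the key synchronization points of the run, a TM
configuration $\config=(\inputstr,\counter,\strone,\elem,\strtwo,\state)$ is
represented exactly by the closure $\config^{\kop}$ of
Definition~\ref*{l:config-size}, and by \reflem{config-size} this closure has
size $\Theta(\size\config+\log\size i)=\bigo{\size\strone+\size\strtwo+\log\size\istr}$.
I would first state and prove this representation lemma (essentially checking
that the head index $n\le\log\size\istr$ occupies $\bigo{\log\size\istr}$ space,
and that the pointers inside $\counter^\kop,\str^\kop,\strtwo^\kop$ are of
constant size because they point into the fixed sub-terms of $\enc M$, using
the left-addressing property from \refsect{lambda-addresses}).

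The second step is to prove three \emph{combinator lemmata}, one for each of
$\inits$, $\transs$, and $\finals$, each asserting that the \SpKAM executes the
corresponding fragment carrying one representation $\config^{\kop}$ to the next
$(\config')^{\kop}$, while the transient space stays within
$\bigo{\size{\config^{\kop}}+\log\size\istr}$. The delicate one is
$\transs$ (one simulated TM step): here the fix-point is unfolded once via
\reflem{fix}, a fresh copy of the transition table is created, the current
symbols are read off the tapes, and the new configuration is written back. The
crucial points are that (i)~reading the input tape goes through the
\emph{mathematical} global-copy scheme of \refprop{toytapeglobal}, which costs
$\bigo{\size\istr\log\size\istr}$ \emph{time} but only $\bigo{\log\size\istr}$
\emph{space}—this is precisely why the input head is stored as a binary index
$\counter$; and (ii)~reading and writing the work tape uses the
\emph{intrinsic} local-copy scheme, for which \refprop{toytape2} already
guarantees a $\bigo{\size\strone+\size\strtwo}$ space bound, thanks to the fact
that the $\bigo{\size\str}$ sub-term pointers into the fixed code
$\transs$ have \emph{constant} left-address size while only $\bigo1$ pointers
into the actual tape content carry the $\bigo{\log}$ factor. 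Eager GC and
unchaining are what keep the number of live closures proportional to the tape
length rather than to the number of elapsed transitions.

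With the combinator lemmata in hand, points~\ref*{p:main-simulation-one}
and~\ref*{p:main-simulation-two} follow by induction on the length of the TM
run $\run$: each TM step is simulated by a $\transs$-block costing
$\Theta(\size\istr\log\size\istr)$ $\beta$-steps (the reading of the input tape
dominates), and chaining $T_{\textrm{TM}}(\run)+1$ blocks (the $+1$ for the
final $\finals$ test) gives
$n=\Theta((T_{\textrm{TM}}(\run)+1)\cdot\size\istr\cdot\log\size\istr)$;
divergence of the TM yields an infinite such chain, hence a
$\tobdet$-divergent term. For point~\ref*{p:main-simulation-three}, the
\SpKAM space at every reachable state is bounded by the max of
$\size{\config^{\kop}}$ over all represented configurations plus the constant
transient overhead of the combinator blocks, which by \reflem{config-size} is
$\bigo{\max_{\config}(\size\strone+\size\strtwo)+\log\size\istr}
=\bigo{S_{\textrm{TM}}(\run)+\log\size\istr}$, using the separate-address-space
hypothesis so that pointers into $\enc M$ do not inflate the count.

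\textbf{The main obstacle} I anticipate is the bookkeeping inside the
$\transs$ lemma, and in particular verifying the space claim rather than the
qualitative simulation. The qualitative execution is a routine (if long)
unrolling of the \SpKAM transitions, but the reasonable-space bound hinges on a
\emph{low-level} accounting of pointer sizes: one must argue, state by state,
that every closure living in the environment or stack during a $\transs$-block
either represents a tape cell (hence carries a constant-size pointer into the
content plus a constant-size left address into $\transs$) or is one of a bounded
number of auxiliary closures. Only this fine distinction—rather than the naive
``number of closures times $\log\size{\tm_0}$''—delivers the additive
$\log\size\istr$ instead of an unreasonable $\log\size\istr\cdot\,(\text{tape
length})$ factor. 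Getting this invariant to be stable across the fix-point
unfolding, where a fresh copy of the table is spawned at every step and must be
collected by eager GC before the next step, is where the real care is required.
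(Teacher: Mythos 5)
Your proposal matches the paper's proof in both structure and substance: the same configuration-representation invariant $\config^{\kop}$ with its size lemma, the same decomposition into combinator lemmata for $\inits$, $\transs$ (with the fix-point unfolding lemma), and $\finals$, and the same low-level accounting observing that all but a fixed number of pointers target the machine code rather than the input. The only cosmetic difference is that the paper delegates points 1 and 2 to the separate encoding document and devotes the appendix to point 3, whereas you fold all three into the same induction; the content is the same.
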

\begin{proof}
	The first two points are proved in~\cite{log-encoding}. We concentrate on the third point.
	
	We simply evaluate $\enc M\, \enc\istr$ with the Space KAM.
	
	\begin{align*}
	\begin{array}{l|l|ll}
	\mathsf{Term}   & 
	\mathsf{Env} & \mathsf{Stack}\\
	\hhline{---} \rule{0pt}{2.3ex}
	\enc M\, \enc\istr\defeq\inits(\transs^M(\finals(\la \var\var)))\enc\istr 
	& 
	\stempty & \stempty & \too^*_{\bigo{\log(\size\istr)}}\text{ 
		(Lemma \ref{l:init})}\\
	\transs(\finals(\la \var\var)) & \stempty & 
	\initconfigs^{\kop} & \too^*_{\bigo{S_{\textrm{TM}}(\run) 
			+\log\size\istr}} \text{ 
		(Lemma \ref{l:trans})}\\
	\finals(\la\var\var) & \stempty & \configtwo^\kop & 
	\too^*_{\bigo{S_{\textrm{TM}}(\run) +\log\size\istr}} \text{ (Lemma 
		\ref{l:final})}\\
	\enc \bool & \stempty & \stempty
	\end{array}\\
	\tag*{\qedhere}
\end{align*}
\end{proof}

\paragraph{Init and Final.} Here, we provide the execution traces for the combinators $\inits$ and $\finals$.

\begin{lem}\label{l:init}
	$\spkamstate{\inits\,\cont\,\enc\inputstr}{ 
		\stempty}{\stempty}\tospkam^{\bigo{1}}\spkamstate\cont 
	\stempty{\initconfigs^{\kop}}$ and consumes space 
	$\Theta(\log(\size\inputstr))$.
\end{lem}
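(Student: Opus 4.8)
The plan is to prove the claim by running the \SpKAM directly on $\inits\,\cont\,\enc\inputstr$, unfolding the definition of the initialisation combinator $\inits$ from the encoding of~\cite{log-encoding}. That combinator has the shape $\inits = \la g\la f.\,(\la c\la m\la d.\,g\,\tuple{f,c,m,\cods\elem,d,\cods\state})\,N_c\,N_m\,N_d$, where $g$ receives the continuation, $f$ the input, $N_c$ encodes the initial counter, and $N_m,N_d$ encode the two (empty) halves of the initial work tape. Since the code is closed and the starting environment is empty, both $\cont$ and $\enc\inputstr$ are closed and hence not variables, so the run opens with two $\tokamseanv$ transitions that peel off the two topmost applications and load $(\cont,\stempty)$ and $(\enc\inputstr,\stempty)$ onto the stack, reaching $\spkamstate{\inits}{\stempty}{(\cont,\stempty)\cons(\enc\inputstr,\stempty)}$.

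Next I would fire the two outer binders of $\inits$ with $\tokambnw$ (both $g$ and $f$ occur in the body), binding $g$ to the continuation and $f$ to the input and producing the local environment $\esub f{(\enc\inputstr,\stempty)}\cons\esub g{(\cont,\stempty)}$ over an empty stack. The body is then evaluated in a fixed number of further transitions: three $\tokamseanv$ push the constant closures for the counter ($n^{\kop}$) and for the two empty tape halves ($(\cods\ems,\stempty)$), three $\tokambnw$ bind $c,m,d$, and a final $\tokamseanv$ followed by $\tokamsub$ pushes the tuple as a single closure and restores $\cont$ as the active term. The crucial bookkeeping is to check that this pushed closure is literally $\initconfigs^{\kop}$: its code is $\tuple{f,c,m,\cods\elem,d,\cods\state}$ with $\elem$ blank and $\state$ the initial state, and by the environment-domain invariant (Lemma~\ref{l:spkam-invariants}) the ambient environment restricted to $\fv{\tuple{f,c,m,\cods\elem,d,\cods\state}}=\{f,c,m,d\}$ equals $\esub f{(\enc\inputstr,\stempty)}\cons\esub c{n^{\kop}}\cons\esub m{(\cods\ems,\stempty)}\cons\esub d{(\cods\ems,\stempty)}$ (up to the immaterial order of entries), which is exactly the representation of $\initconfigs$ with an empty work tape and initial counter. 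This identifies the target state $\spkamstate{\cont}{\stempty}{\initconfigs^{\kop}}$ and shows that the whole run has constant length, i.e.\ $\tospkam^{\bigo{1}}$.

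For the space bound I would observe that only a constant number of closures are ever live, and classify their sub-term pointers by left address. The closures for the empty tape halves, for the initial counter, and for every sub-term of $\inits$ and of $\cont$ all point into the fixed combinator $\enc M$; since, under the separate-address-space assumption of the theorem, the left addresses of $\enc M$ are independent of $\inputstr$, these pointers have size $\bigo{1}$. The only input-dependent closure is $(\enc\inputstr,\stempty)$, whose left address occupies $\Theta(\log\size\inputstr)$ bits. Summing over the constantly many closures gives the upper bound $\bigo{\log\size\inputstr}$, while the unavoidable presence of this input pointer forces the matching $\Omega(\log\size\inputstr)$ lower bound, hence $\Theta(\log\size\inputstr)$.

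The execution itself is routine; the main obstacle is the exact identification of the closure built by the body with the prescribed representation $\initconfigs^{\kop}$, which requires aligning the tuple combinator $\tuple{\cdots}$, the binder names, and the restricted environment so that the two closures coincide on the nose, and then carrying out the left-address analysis precisely enough to certify that every pointer other than the one to $\enc\inputstr$ is genuinely of constant size. This is exactly the place where the left-addressing scheme and the separate-address-space hypothesis discussed in \refsect{spkam} are used.
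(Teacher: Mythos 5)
Your proposal is correct and follows essentially the same route as the paper: a direct constant-length \SpKAM{} execution trace that assembles the tuple closure, identifies it with $\initconfigs^{\kop}$, and then bounds space by noting that only constantly many closures are live, all with constant-size left addresses except the single pointer into $\enc\inputstr$. The only discrepancy is your guessed shape of $\inits$ — the paper's combinator is $(\la d\la e\la f\la{\cont'}\la{\inputstr'}.\,\cont'\tuple{\inputstr', d \csep e ,\cods{\elemblank}, f \csep \cods{\statein}})\,\enc{\ntostr 0}\,\enc{\varepsilon}\,\enc{\varepsilon}$, i.e.\ the three constants are applied outside a single $5$-ary abstraction rather than via an inner redex under the continuation and input binders — but this only permutes the $\tokamseanv$/$\tokambnw$ steps and affects none of the claimed bounds.
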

\begin{proof}
	The \SpKAM execution is in Figure~\ref{fig:init}.
	The space bound is immediate by inspecting the execution.
\end{proof}
\begin{figure}
	\begin{sideways} {\scriptsize\(
		\begin{array}{l@{\hspace{1cm}}|l@{\hspace{1cm}}|ll}
		\mathsf{Term}   & 
		\mathsf{Env} & \mathsf{Stack}\\
		\hhline{---} \rule{0pt}{2.3ex}
		\inits\,\cont\,\enc\inputstr & \stempty & \stempty & \tokamseanv \\
		\inits\,\cont & \stempty & (\enc\inputstr,\stempty) & \tokamseanv\\
		\inits\defeq(\la {d} \la {e} \la {f} \la{\cont'} \la{\inputstr'} 
		\cont'
		\tuple{\inputstr', d \csep e ,
			\cods{\elemblank}, f \csep 
			\cods{\statein}})\enc{\ntostr 
			0}\enc{\varepsilon} \enc{\varepsilon} & \stempty 
		& 
		\overbrace{(\cont,\stempty)}^{\cont^{\kop}}\cons 
		\overbrace{(\enc\inputstr,\stempty)}^{\enc\inputstr{^{\kop}}} &\tokamseanv^3\\
		\la {d} \la {e} \la {f} \la{\cont'} \la{\inputstr'} 
		\cont' \tuple{\inputstr', d \csep e ,\cods{\elemblank}, f \csep 
			\cods{\statein}} & \stempty & 
		\overbrace{(\enc{\ntostr 0},\stempty)}
		^{\enc{\ntostr 0}^{\kop}}\cons 
		\overbrace{(\enc\ems,\stempty)}^{\enc{\ems}^{\kop}}
		\cons(\enc\ems,\stempty)\cons\cont^{\kop}\cons
		\enc\inputstr^{\kop} & \tokambnw^5\\
		\cont' \tuple{\inputstr', d \csep e ,\cods{\elemblank}, f \csep 
			\cods{\statein}} & \esub{\cont'} 
		{\cont^{\kop}}\cons\overbrace{\esub {\inputstr'} 
			{{\enc\inputstr}^{\kop}}\cons\esub d {\enc{\ntostr 0}^{\kop}} 
			\cons \esub e {\enc{\ems}^{\kop}}\cons \esub f 
			{\enc{\ems}^{\kop}}}^{\initconfigs^\env}  & 
		\stempty & \tokamseanv\\
		\cont' & \esub{\cont'} {\cont^{\kop}} & 
		(\tuple{\inputstr', d \csep e ,\cods{\elemblank}, f \csep 
			\cods{\statein}},\initconfigs^\env) &\tokamsub\\
		\cont & \stempty & \underbrace{(\tuple{\inputstr', d \csep e 
				,\cods{\elemblank}, f \csep 
				\cods{\statein}},\initconfigs^\env)}_{\initconfigs^{\kop}}
		\end{array}
		\)}\end{sideways} 
	\caption{\SpKAM execution of the $\inits$ combinator.}
	\label{fig:init}
\end{figure}

\begin{lem}\label{l:final} Let $\config$ be a final configuration, \ie 
	$\config\defeq(\inputstr,\counter,\str,\elem,\strtwo,\statefin)$ where 
	$\statefin\in\Statesfin$. Then
	\[\spkamstate{\finals(\la \var 
		\var)}\stempty{\config^\kop}\tospkam^{\bigo{1}}
	\begin{cases}
	\spkamstate{\la \var\la\vartwo\var}\stempty\stempty & \text{ if } 
	\statefin=\statefint\\
	\spkamstate{\la \var\la\vartwo\vartwo}\stempty\stempty & \text{ if } 
	\statefin=\statefinf
	\end{cases}\]
	Moreover, the space consumption is $\Theta(\size{\config^\kop})$.
\end{lem}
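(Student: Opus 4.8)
The plan is to prove the lemma by directly running the \SpKAM on the given initial state and reading off both the number of transitions and the peak space, exactly as the surrounding proof (and Lemmas~\ref{l:init} and~\ref{l:trans}) do for the other combinators. The essential observation is that $\finals$ is a fixed closed term, of size independent of $\config$ and of the input, and that the configuration tuple $\tuple{f,c,m,\cods\elem,d,\cods\state}$ has fixed arity. Hence the whole execution is a bounded-length sequence of transitions, which immediately yields the $\bigo 1$ step count; the only thing that grows with the configuration is the data it carries, namely the closures stored inside $\config^\kop$.

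First I would spell out the search phase: with the single closure $\config^\kop$ sitting on the stack and $\finals(\la\var\var)$ as the active term, transitions $\tokamseanv$ (and, for variable arguments, $\tokamseav$) descend through the fixed application structure of $\finals$, pushing the identity continuation $\la\var\var$ and the already-present $\config^\kop$ onto the stack. Then the $\beta$-firing phase ($\tokambnw$/$\tokambw$) binds the continuation and forces the configuration tuple into active position. Being a Scott-style tuple, when it meets the selector it hands its six components to a projection that keeps only the state component $\cods\state$, while the five unused components $f,c,m,\cods\elem,d$ are discarded by eager garbage collection, i.e.\ by the $\tokambw$ transitions that fire precisely when the bound variable does not occur. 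This discarding is what prevents the stored input pointer (carried by $f$) and the tape closures from being retained, and is the crux of the tight space bound. Finally, the final state $\cods\state$ with $\statefin\in\Statesfin$ dispatches: I would check, from the encoding of states, that $\statefint$ selects the branch reducing to $\la\var\la\vartwo\var$ and $\statefinf$ the branch reducing to $\la\var\la\vartwo\vartwo$, both of which are closed, so the machine terminates with empty environment and empty stack.

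For the space bound I would argue two inequalities. The lower bound $\Omega(\size{\config^\kop})$ is immediate, since the initial state already contains $\config^\kop$ on its stack, so $\size{\config^\kop}$ bounds the peak from below. For the upper bound $\bigo{\size{\config^\kop}}$, note that along the run only a constant number of closures is live at any time (the fixed fragments coming from $\finals$ together with sub-closures of $\config^\kop$), and by Lemma~\ref{l:spkam-invariants} environments are restricted to occurring variables, so no closure is duplicated and the total size never exceeds a constant multiple of $\size{\config^\kop}$. Combining the two gives $\Theta(\size{\config^\kop})$.

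The main obstacle is not conceptual but bookkeeping: carrying out the above requires the explicit definition of $\finals$ and of the tuple and state encodings, which live in the companion encoding document~\cite{log-encoding}, after which one simply writes out the full transition table. The genuinely delicate point to verify is that eager garbage collection really does collect the five unused tuple components (in particular the one carrying the input pointer) rather than copying them forward; were any of them retained, the space would degrade beyond $\Theta(\size{\config^\kop})$. Once the execution trace is laid out this is visible by inspection, as it is for the analogous traces in Figure~\ref{fig:init} and Lemma~\ref{l:fix}.
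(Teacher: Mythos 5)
Your proposal is correct and follows essentially the same route as the paper: the paper's proof (Figure~\ref{fig:final} and the two case traces) also just executes the \SpKAM step by step, relies on the fixed arity of $\finals$ and of the configuration tuple for the $\bigo 1$ transition count, discards the five unused tuple components via the $\tokambw$ garbage-collecting transitions exactly as you identify, and concludes the $\Theta(\size{\config^\kop})$ space bound by inspection of the trace. Your explicit two-sided argument for the space bound (lower bound from the initial state, upper bound from the constant number of live, unduplicated closures) is a slightly more detailed justification of what the paper leaves as ``immediate by inspecting the execution,'' but it is the same argument.
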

\begin{proof}
	Let us define $\tm\defeq\lambda 
	\inputstr'.\lambda n'.\lambda 
	\wstrl'.\lambda \elem'.\lambda 
	\wstrr'.\lambda \state'. \state' N_1\ldots N_{\size{\States}}\cont'$. We execute the Space KAM on this term in Figure~\ref{fig:final}. 
	
	Two cases. If $\statefin=\statefint$, then:
	\[\begin{array}{l|l|ll}
	\mathsf{Term}   & 
	\mathsf{Env} & \mathsf{Stack}\\
	\hhline{---}
	\cods\statefint\defeq\lambda x_1\ldots\lambda x_{\size\States}.x_i & 
	\stempty & (N_i,\stempty)_{1\leq i\leq\size\States}\cons(\la\var 
	\var,\stempty)&\tokamb^{\size{\States}}\\
	\var_i & \esub{\var_i}{(N_i,\stempty)} & (\la\var \var,\stempty) & \tokamsub\\
	N_i\defeq\la{\cont'}{\cont'(\la\var{\la\vartwo\var)}} & \stempty & 
	(\la\var \var,\stempty) & \tokambnw\\
	{\cont'(\la\var{\la\vartwo\var})} & \esub{\cont'}{(\la\var 
		\var,\stempty)} & \stempty & \tokamseanv\\
	\cont' & \esub{\cont'}{(\la\var \var,\stempty)} & 
	(\la\var\la\vartwo\var,\stempty) & \tokamsub\\
	\la\var \var & \stempty & (\la\var\la\vartwo\var,\stempty) &\tokambnw\\
	\var & \esub{\var}{(\la\var\la\vartwo\var,\stempty)} & \stempty & \tokamsub\\
	\la\var\la\vartwo\var & \stempty & \stempty
	\end{array}\]
	
	If $\statefin=\statefinf$, then:
	\[\begin{array}{l|l|ll}
	\mathsf{Term}   & 
	\mathsf{Env} & \mathsf{Stack}\\
	\hhline{---}
	\cods\statefinf\defeq\lambda x_1\ldots\lambda x_{\size\States}.x_i & 
	\stempty & (N_i,\stempty)_{1\leq i\leq\size\States}\cons(\la\var 
	\var,\stempty)&\tokamb^{\size{\States}}\\
	\var_i & \esub{\var_i}{(N_i,\stempty)} & (\la\var \var,\stempty) & \tokamsub\\
	N_i\defeq\la{\cont'}{\cont'(\la\var{\la\vartwo\vartwo})} & \stempty & 
	(\la\var \var,\stempty) & \tokambnw\\
	{\cont'(\la\var{\la\vartwo\vartwo})} & \esub{\cont'}{(\la\var 
		\var,\stempty)} & \stempty & \tokamseanv\\
	\cont' & \esub{\cont'}{(\la\var \var,\stempty)} & 
	(\la\var\la\vartwo\vartwo,\stempty) & \tokamsub\\
	\la\var \var & \stempty & (\la\var\la\vartwo\vartwo,\stempty) &\tokambnw\\
	\var & \esub{\var}{(\la\var\la\vartwo\vartwo,\stempty)} & \stempty & \tokamsub\\
	\la\var\la\vartwo\vartwo & \stempty & \stempty
	\end{array}\]
	The space bound is immediate by inspecting the execution.
\end{proof}

\begin{figure}
	\begin{sideways}
	{\scriptsize$
		\begin{array}{l|l|ll}
		\mathsf{Term}   & 
		\mathsf{Env} & \mathsf{Stack}\\
		\hhline{---}
		\finals(\la \var \var) & \stempty & \config^\kop & \tokamseanv\\
		\finals\defeq\la {\cont'} \la {\config'}\config'\tm & \stempty & 
		(\la\var 
		\var,\stempty)\cons\config^\kop & \tokambnw^2\\
		\config'\tm & \esub{\config'}{\config^\kop}\cons\esub{\cont'}{(\la\var 
			\var,\stempty)} 
		& \stempty & \tokamseanv\\
		\config' & \esub{\config'}{\config^\kop} & (\tm,\esub{\cont'}{(\la\var 
			\var,\stempty)}) & \tokamsub\\
		\config^\kop\defeq\la\var\var fcm\cods\elem d\cods\statefin & \esub 
		f{(\enc\inputstr,\stempty)},\esub c {n^\kop}, \esub m {\str{^\kop}}, 
		\esub 
		d 
		{\strtwo^\kop} & (\tm,\esub{\cont'}{(\la\var 
			\var,\stempty)}) & \tokambnw\\
		\var fcm\cods\elem d\cods\statefin & \esub\var 
		{(\tm,\esub{\cont'}{(I,\stempty)})} \cons \esub 
		f{(\enc\inputstr,\stempty)},\esub c {n^\kop}, \esub m {\str{^\kop}}, 
		\esub 
		d 
		{\strtwo^\kop} & \stempty & \tokamsea^6\\
		\var & \esub\var {(\tm,\esub{\cont'}{(\la\var \var,\stempty)})} & 
		(\enc\inputstr,\stempty)\cons\counter^\kop\cons\str^\kop\cons 
		(\cods\elem,\stempty)\cons\strtwo^\kop\cons(\cods\statefin,\stempty) & \tokamsub\\
		\tm\defeq\lambda 
		\inputstr'.\lambda n'.\lambda 
		\wstrl'.\lambda \elem'.\lambda 
		\wstrr'.\lambda \state'. \state' N_1\ldots N_{\size{\States}}\cont' & 
		\esub{\cont'}{(\la\var 
			\var,\stempty)} & 
		(\enc\inputstr,\stempty)\cons\counter^\kop\cons\str^\kop\cons 
		(\cods\elem,\stempty)\cons\strtwo^\kop\cons(\cods\statefin,\stempty) &\tokamb^6\\
		q'N_1\ldots N_{\size{Q}}k' & \esub{\state'}{(\cods\statefin,\stempty)} 
		\cons 
		\esub{\cont'}{(\la\var \var,\stempty)} & \stempty &\tokamseanv^{1+\size\States}\\
		\state' & \esub{\state'}{(\cods\statefin,\stempty)} & 
		(N_i,\stempty)_{1\leq 
			i\leq\size\States}\cons(\la\var \var,\stempty) & \tokamsub\\
		\cods\statefin\defeq\lambda x_1\ldots\lambda x_{\size\States}.x_i & 
		\stempty & (N_i,\stempty)_{1\leq i\leq\size\States}\cons(\la\var 
		\var,\stempty)
		\end{array}
		$}\end{sideways}
	\caption{First part of the \SpKAM execution of the $\finals$ combinator.}
	\label{fig:final}
\end{figure}

\paragraph{Transition Function.} Here we execute the combinator $\transs^M$ (abbreviated to $\transs$ for readability reasons), \ie the main ingredient of the encoding. First, we execute the initialization steps.

\begin{lem}
	$\spkamstate{\transs\,\cont}\stempty{\initconfigs^{\kop}}\tospkam^{\bigo{1}}\spkamstate{\theta}\stempty{(\theta,\stempty)\cons(\transaux,\stempty)\cons (\cont,\stempty)
		\cons\initconfigs^{\kop}}$ in\linebreak space $\bigo{\log(\size\istr)}$.
\end{lem}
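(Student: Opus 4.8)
The plan is to establish the lemma by a direct execution of the \SpKAM, in exactly the same style as the preceding lemmas for $\inits$ and $\finals$. Recall that $\transs$ is the fix-point of the step combinator $\transaux$, i.e.\ $\transs = \theta\theta\,\transaux$, so that the starting active term $\transs\,\cont$ is the left-nested application $(((\theta\theta)\transaux)\cont)$, evaluated against the single closure $\initconfigs^{\kop}$ already sitting on the stack.

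First I would fire the search transitions that descend along the left spine of this application. Each of the three arguments $\cont$, $\transaux$, and $\theta$ is an abstraction, hence not a variable, so each descent is a $\tokamseanv$ step that simply pushes the closure $(\cdot,\stempty)$ on the stack and leaves the environment empty. Concretely, the run is
\[
\spkamstate{\transs\,\cont}{\stempty}{\initconfigs^{\kop}}
\tokamseanv
\spkamstate{\theta\theta\,\transaux}{\stempty}{(\cont,\stempty)\cons\initconfigs^{\kop}}
\tokamseanv
\spkamstate{\theta\theta}{\stempty}{(\transaux,\stempty)\cons(\cont,\stempty)\cons\initconfigs^{\kop}}
\tokamseanv
\spkamstate{\theta}{\stempty}{(\theta,\stempty)\cons(\transaux,\stempty)\cons(\cont,\stempty)\cons\initconfigs^{\kop}},
\]
which is exactly the claimed target state. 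Since this is a fixed number of transitions, independent of the input, the run is $\bigo{1}$.

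For the space bound, the only component of the traversed states whose size depends on the input is the closure $\initconfigs^{\kop}$ that is present from the start: by Lemma~\ref{l:config-size}, and using that the initial configuration has empty work tapes, its size is $\bigo{\log\size\istr}$ (the logarithmic term coming from the pointer to the input held by $\initconfigs^{\kop}$). The three closures added along the way, $(\theta,\stempty)$, $(\transaux,\stempty)$, and $(\cont,\stempty)$, contain only sub-term pointers to fixed parts of the code $\tm_0$ (the combinators $\theta$, $\transaux$, and the continuation), whose left addresses are independent of $\enc\istr$ and therefore of constant size under the separate-address-space hypothesis of the main theorem. They thus contribute only $\bigo{1}$, so every state traversed has size $\bigo{\log\size\istr}$, giving the claimed bound.

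I expect no serious obstacle, as the whole argument is a constant-length computation; the only delicate points are the two that recur throughout Appendix~\ref{sec:main-proof}. First, one must check that each $\tokamseanv$ step genuinely applies, i.e.\ that the pushed arguments are not variables, which is immediate since $\cont$, $\transaux$ and $\theta$ are all abstractions. Second --- and this is where the logarithmic bound is actually won --- one must argue that the fixed combinators contribute constant rather than logarithmic space; this relies precisely on the left-addressing scheme of \refsect{spkam} together with the separation of the address spaces of $\enc M$ and $\enc\istr$.
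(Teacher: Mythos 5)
Your proof is correct and follows essentially the same route as the paper's: a direct execution consisting of three $\tokamseanv$ search steps down the left spine of $((\theta\theta)\transaux)\cont$, reaching exactly the claimed state, with the space bound coming from the $\bigo{\log(\size\istr)}$ size of $\initconfigs^{\kop}$ and the constant size of the pushed combinator closures. The only nitpick is that $\cont$ (instantiated as $\finals(\la\var\var)$ in the main theorem) is an application rather than an abstraction, but the side condition of $\tokamseanv$ only requires the argument not to be a variable, so each step applies regardless.
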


\begin{proof}
	\begin{align*}
	\begin{array}{l|l|ll}
	\mathsf{Term}   & 
	\mathsf{Env} & \mathsf{Stack}\\
	\hhline{---} \rule{0pt}{2.3ex}
	\transs\cont & \stempty & \initconfigs^{\kop} & \tokamseanv\\
	\transs\defeq\fix\transaux & \stempty & 
	(\cont,\stempty)\cons\initconfigs^{\kop}&\to^*\text{ (Lemma }\ref{l:fix})\\
	\theta & \stempty & (\theta,\stempty)\cons(\transaux,\stempty)\cons(\cont,\stempty)
	\cons\initconfigs^{\kop} &
	\end{array} \\
	\tag*{\qedhere}\end{align*}
\end{proof}
Then, we prove the main lemma about the $\transs$ combinator. It simply states that each transition of the Turing machine is simulated by the \SpKAM, with the right, \ie linear, space complexity overhead.
\begin{lem}Let $\config$ be a Turing machine configuration. Then:
	\begin{itemize} 
		\item if $\config$ is a final configuration, then
	$\spkamstate{\theta}\stempty{(\theta,\stempty)\cons(\transaux,\stempty)\cons\cont^{\kop}
	 \cons\config^{\kop}}\tospkam^{\bigo{1}}
	\spkamstate{\cont}\stempty{\config^{\kop}}$ in space 
	$\bigo{\size{\config^\kop}}$;
		\item otherwise if $\config \tomachtur  \configtwo$, then\\
		$\spkamstate{\theta}\stempty{(\theta,\stempty)\cons(\transaux,\stempty)\cons\cont^{\kop}
			\cons\config^{\kop}}\tospkam^{\bigo{1}}
		\spkamstate{\theta}\stempty{(\theta,\stempty)\cons(\transaux,\stempty)\cons\cont^{\kop}
			\cons\configtwo^{\kop}}$ in space\linebreak $\bigo{\size{\config^\kop}}$.
\end{itemize}
\end{lem}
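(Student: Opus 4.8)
The plan is to prove the statement by directly running the \SpKAM, splitting on whether the state component of $\config$ is final, exactly as is done for the $\inits$ and $\finals$ combinators in Figures~\ref{fig:init} and~\ref{fig:final}. The source state has $\theta$ as active term with $(\theta,\stempty)\cons(\transaux,\stempty)$ on top of the stack, so the first move is to apply Lemma~\ref{l:fix} with $\tmtwo:=\transaux$, $\lenv:=\stempty$ and stack $\cont^{\kop}\cons\config^{\kop}$. This yields in $\bigo 1$ transitions the state $\spkamstate{\transaux}{\stempty}{\fix^{\kop}\cons\cont^{\kop}\cons\config^{\kop}}$, where $\fix^{\kop}=(\var\var\vartwo,\esub\vartwo{(\transaux,\stempty)}\cons\esub\var{(\theta,\stempty)})$ is the shared recursive handle representing $\transs$ itself. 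Following the encoding of~\cite{log-encoding}, the body $\transaux$ then $\beta$-binds this handle together with the continuation $\cont^{\kop}$ and the configuration $\config^{\kop}$, destructures $\config^{\kop}$ by feeding its Scott tuple $\tuple{f,c,m,\cods\elem,d,\cods\state}$ to a field extractor (binding the input pointer, the counter closure $n^{\kop}$, the work-tape closures $\str^{\kop},\strtwo^{\kop}$, the head symbol $\cods\elem$ and the state $\cods\state$), and finally dispatches on $\cods\state$, which as a Scott-encoded state selects the branch $N_{j}$ associated to the current state through the same $\tokamsub$/$\tokamb$ pattern already carried out for $\finals$.

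First I would treat the final case: the selected branch hands the configuration to the continuation, so the machine reaches $\spkamstate{\cont}{\stempty}{\config^{\kop}}$, the recursive handle $\fix^{\kop}$ being discarded by eager garbage collection. For the non-final case with $\config\tomachtur\configtwo$, the branch $N_{j}$ realizes the TM transition function: it inspects the already-extracted work-tape symbol $\cods\elem$ at constant cost, reads the input symbol at the mathematical position $n$ stored in $n^{\kop}$ by an auxiliary read-only routine in the global-copy style of \refprop{toytapeglobal} (which copies only a \emph{pointer} to $\enc\istr$ and hence stays within $\bigo{\log\size\istr}$ space), updates the counter to $n\pm1$ and shifts the intrinsic work tape by moving one cell between $\str^{\kop}$ and $\strtwo^{\kop}$, and assembles the successor tuple $\configtwo^{\kop}$. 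The $\bigo 1$ structural transitions are then closed by the recursive call $\varf\,\cont'\,\configtwo$: since $\varf$ is bound to $\fix^{\kop}$, executing $\var\var\vartwo$ in its environment reconstructs $\theta$ with $(\theta,\stempty)\cons(\transaux,\stempty)$ on the stack, landing exactly in $\spkamstate{\theta}{\stempty}{(\theta,\stempty)\cons(\transaux,\stempty)\cons\cont^{\kop}\cons\configtwo^{\kop}}$, which is the claimed target with $\config$ replaced by $\configtwo$.

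For the space bound I would maintain the invariant that at every intermediate state only a constant number of closures is live, each being either a sub-closure of $\config^{\kop}$ (hence of size $\bigo{\size{\config^{\kop}}}$ by Lemma~\ref{l:config-size}) or a fresh closure produced during the transition computation; the sub-term pointers into the fixed combinators $\theta,\transaux,N_{j}$ are of constant size thanks to the left-addressing scheme (as exploited in \refprop{toytape2}), while the only logarithmic pointers are those into $\enc\istr$ and the counter $n^{\kop}$. Both subroutines that do real work — the input read and the binary arithmetic on the counter — run in $\bigo{\log\size\istr}$ space, which is absorbed by $\bigo{\size{\config^{\kop}}}$. The main obstacle I expect is precisely this low-level space accounting: one must check that the intrinsic shift of the work tape preserves $\size\str+\size\strtwo$ up to an additive constant and that the new counter stays $\leq\log\size\istr$, so that $\configtwo^{\kop}$ again satisfies the size invariant of Lemma~\ref{l:config-size}; and one must verify that no transient copy of the input string (as opposed to a pointer to it) is ever materialized during scrolling, which is exactly what keeps the simulation logarithmic instead of incurring the $\Theta(\size\istr\log\size\istr)$ blow-up of the naive local-copy approach.
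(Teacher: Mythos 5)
Your proposal follows essentially the same route as the paper: unfold the fix-point via Lemma~\ref{l:fix}, destructure $\config^{\kop}$, dispatch to the branch determined by the read symbols and the state, split on final versus non-final, and bound space by observing that only a constant number of closures is ever live, with constant-size left-address pointers into the fixed combinators and logarithmic pointers only into $\enc\istr$ and the counter. The only (harmless) discrepancy is in the described dispatch order — the paper's encoding performs the $\lookupl$ input read first and then dispatches successively on input symbol, work symbol, and state (reaching $C_{i,j,\state_g}$), rather than selecting a state branch $N_j$ as in $\finals$ — but this does not affect the structure or validity of the argument.
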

\begin{proof}
	\renewcommand{\transaux}{\mathtt{tx}}
	The first part of the proof is common to both points. \\
	Let us define $\transaux\defeq\mathtt{transaux}$ and
	$\tm\defeq\la{\istr'}\la{\counter'}\la{\wstrl'}\la{\elem'}\la{\wstrr'}\la{\state'}\lookupl\,K
	 \istr' \counter'$. The execution is in Figure~\ref{fig:casezero}.
	\begin{figure}\begin{sideways}{\scriptsize\(
	\begin{array}{l|l|ll}
	\mathsf{Term}   & 
	\mathsf{Env} & \mathsf{Stack}\\
	\hhline{---}
	\theta\defeq\la\var \la\vartwo \vartwo(\var \var \vartwo ) & \stempty & 
	(\theta,\stempty)\cons(\transaux,\stempty)\cons\cont^{\kop} \cons\config^{\kop} & 
	\tokambnw^2\\
	\vartwo(\var \var \vartwo) & \esub \var {(\theta,\stempty)}\cons\esub 
	\vartwo {(\transaux,\stempty)}
		& \cont^{\kop} \cons\config^{\kop} & \tokamseanv\\
	\vartwo & \esub\vartwo{(\transaux,\stempty)} & \overbrace{(\var \var 
		\vartwo,\esub\vartwo{(\transaux,\stempty)}\cons 
		\esub\var{(\theta,\stempty)})}^{\fix^{\kop}}\cons\cont^{\kop} 
		\cons\config^{\kop} & \tokamsub\\
	\transaux & \stempty & \fix^{\kop}\cons\cont^{\kop} \cons\config^{\kop}& =\\
	\lambda x.\la\cont\lambda \config'.\config'\tm & \stempty &  
	\fix^{\kop}\cons\cont^{\kop} \cons\config^{\kop}&\tokambnw^2\\
	\lambda \config'.\config'\tm & \overbrace{\esub 
	{\cont' }
	{\cont^{\kop}}\cons \esub\var {\fix^{\kop}}}^\env & \config^{\kop} & \tokambnw\\
	\config'\tm
	&\esub{\config'}{\config^{\kop}}\cons\env & \stempty & \tokamseanv\\
	\config' & \esub{\config'}{\config^{\kop}} & (\tm,\env) & \tokamsub\\
	\la\var\var fcm\cods{\elem_j}
		d\cods{\state_g} & \esub f{(\enc\inputstr,\stempty)},\esub c 
		{n^{\kop}}, 
	\esub m {\str{^{\kop}}}, \esub d {\strtwo^{\kop}} & (\tm,\env)&\to^7\\
	\var & \esub \var {(\tm,\env)} & 
	\overbrace{(\enc\inputstr,\stempty)}^{\enc\inputstr^{\kop}}\cons 
	n^{\kop}\cons 
	\str^{\kop}\cons \overbrace{(\cods{\elem_j},\stempty)}^{\cods{\elem}^{\kop}}\cons 
	r^{\kop}\cons\overbrace{(\cods{\state_g},\stempty)}^{\cods\state^{\kop}} &\tokamsub\\
	\la{\istr'}\la{\counter'}\la{\wstrl'}\la{\elem'}\la{\wstrr'}\la{\state'}\lookupl\,K
	\istr' \counter' & \env & 
	\enc\inputstr^{\kop}\cons n^{\kop}\cons \str^{\kop}\cons 
	\cods\elem^{\kop}\cons 
	r^{\kop}\cons\cods\state^{\kop}&\tokambnw^6\\
	\lookupl\,K
	\istr' \counter' & \overbrace{\esub 
	{i'} {\enc 
	i^{\kop}}\cons \esub {n'} {n^{\kop}}\cons \esub {\wstrl} 
	{s^{\kop}}\cons \esub {\elem'} 
	{\cods\elem^{\kop}}\cons 
	\esub {\wstrr} {r^{\kop}}\cons \esub {\state'} {\cods 
	q^{\kop}}\cons\env}^{\envtwo}& 
	\stempty & \tokamsea^3\\
	\lookupl & \stempty & 
	(K,\envtwo)\cons  \enc i^{\kop}\cons n^{\kop} & \to^*\\
	K\defeq \la {\bool'} \bool' 
	A_{0}A_{1}A_{\stsym}A_{\ensym}\elem'\state'\var \cont' \istr' 
	\counter'\wstrl' \wstrr' & 
	\envtwo & (\cods{\elem_i},\stempty) &\tokambnw\\
	\bool' 
	A_{0}A_{1}A_{\stsym}A_{\ensym}\elem'\state'\var \cont' \istr' 
	\counter'\wstrl' \wstrr'& 
	\esub{\bool'}{(\cods{\elem_i},\stempty)},\envtwo & \stempty&\tokamsea^{12}\\
	\bool' & \esub{\bool'}{(\cods{\elem_i},\stempty)} & 
	(A_\bool,\stempty)_{b\in\Boolin} \cons \cods\elem^{\kop} \cons 
	\cods\state^{\kop} \cons 
	\fix^{\kop} \cons \cont^{\kop} \cons \enc i^{\kop} \cons \counter^{\kop} 
	\cons \str^{\kop} \cons \strtwo^{\kop} & \tokamsub
	\\
	\cods{\elem_i}\defeq\lambda \var_0.\lambda \var_1.\lambda 
	\var_\stsym.\lambda \var_\ensym.\var_i & \stempty & 
	(A_\bool,\stempty)_{b\in\Boolin}\cons \cods\elem^{\kop} \cons 
	\cods\state^{\kop} \cons 
	\fix^{\kop} \cons \cont^{\kop} \cons \enc i^{\kop} \cons \counter^{\kop} 
	\cons \str^{\kop} \cons \strtwo^{\kop} & \to^5\\
	A_i\defeq \la{\elem'} \elem' B_{i,0} B_{i,1}B_{i,\elemblank} & \stempty & 
	\cods\elem^{\kop} \cons \cods\state^{\kop} \cons 
	\fix^{\kop} \cons \cont^{\kop} \cons \enc i^{\kop} \cons \counter^{\kop} 
	\cons \str^{\kop} \cons \strtwo^{\kop} & \tokambnw\\
	\elem' B_{i,0} B_{i,1}B_{i,\elemblank} & \esub{\elem'}{\cods\elem^{\kop}} & 
	\cods\state^{\kop} \cons 
	\fix^{\kop} \cons \cont^{\kop} \cons \enc i^{\kop} \cons \counter^{\kop} 
	\cons \str^{\kop} \cons \strtwo^{\kop}&\tokamseanv^3\\
	\elem' & \esub{\elem'}{\cods\elem^{\kop}} & 
	(B_{i,\bool},\stempty)_{\bool\in\Boolb} \cons \cods\state^{\kop} \cons 
	\fix^{\kop} \cons \cont^{\kop} \cons \enc i^{\kop} \cons \counter^{\kop} 
	\cons \str^{\kop} \cons \strtwo^{\kop} & \tokamsub\\
	\cods{\elem_j}\defeq \lambda \var_0.\lambda \var_1.\lambda 
	\var_\elemblank.\var_j &  \stempty & 
	(B_{i,\bool},\stempty)_{\bool\in\Boolb} \cons \cods\state^{\kop} \cons 
	\fix^{\kop} \cons \cont^{\kop} \cons \enc i^{\kop} \cons \counter^{\kop} 
	\cons \str^{\kop} \cons \strtwo^{\kop}&\to^4\\
	B_{i,j}\defeq \la{\state'}\state' C_{i,j,\state_{1}}\ldots 
	C_{i,j,\state_{\size\States}} & \stempty & 
	 \cods\state^{\kop} \cons 
	 \fix^{\kop} \cons \cont^{\kop} \cons \enc i^{\kop} \cons \counter^{\kop} 
	 \cons \str^{\kop} \cons \strtwo^{\kop} & \tokambnw\\
	\state' C_{i,j,\state_{1}}\ldots 
	C_{i,j,\state_{\size\States}} & \esub{\state'}{\cods\state^\kop} & 
	\fix^{\kop} \cons \cont^{\kop} \cons \enc i^{\kop} \cons \counter^{\kop} 
	\cons \str^{\kop} \cons \strtwo^{\kop} & 
	\tokamseanv^{\size\States}\\
	\state' & \esub{\state'}{\cods\state^\kop} & 
	(C_{i,j,\state_g},\stempty)_{1\leq g \leq \size\alpone}\cons
	\fix^{\kop} \cons \cont^{\kop} \cons \enc i^{\kop} \cons \counter^{\kop} 
	\cons \str^{\kop} \cons \strtwo^{\kop} & \tokamsub\\
	\cods{\state_g}\defeq\lambda x_1\ldots\lambda x_{\size\States}.x_g & 
	\stempty & (C_{i,j,\state_g},\stempty)_{1\leq g \leq \size\alpone}\cons
	\fix^{\kop} \cons \cont^{\kop} \cons \enc i^{\kop} \cons \counter^{\kop} 
	\cons \str^{\kop} \cons \strtwo^{\kop}&\to^{1+\size\States}\\
	C_{i,j,\state_g} & \stempty & 
	\fix^{\kop} \cons \cont^{\kop} \cons \enc i^{\kop} \cons \counter^{\kop} 
	\cons \str^{\kop} \cons \strtwo^{\kop}\\
	\end{array}\)}\end{sideways}
\caption{The first part of the \SpKAM execution of the combinator $\transs$.}
\label{fig:casezero} \end{figure} 
	Cases of the transition to apply:
	\begin{itemize}
	\item \emph{No transition}, that is, $\config$ is a final configuration, 
	which happens when $\state_g\in\Statesfin$. \\We have
    $C_{i,j,\state_g}\defeq\la\var \la{\cont'} \la{\istr'} \la{\counter'} 
    \la{\wstrl'} \la{\wstrr'}\cont' \tuple{\istr',\counter' \csep 
    \wstrl',\cods{\elem_j}, 
    	\wstrr' \csep \cods{\state_g} }$, and\\ $\config^\kop\defeq(\tuple{\istr',\counter' \csep 
    	\wstrl',\cods{\elem_j}, 
    	\wstrr' \csep \cods{\state_g} },\env_2)$, where $E_2\defeq\esub {\wstrl'} {s^{\kop}}\cons\esub 
    {\wstrr'} 
{r^{\kop}}\cons\esub {i'} {\enc i^{\kop}}\cons\esub {n'} 
{n^{\kop}}$
    
    \[\scriptsize
    \begin{array}{l|l|ll}
    	\mathsf{Term}   & 
    	\mathsf{Env} & \mathsf{Stack}\\
    	\hhline{---} \rule{0pt}{2.6ex}
    	C_{i,j,\state_g} & \stempty& \fix^{\kop} \cons \cont^{\kop} \cons \enc 
    	i^{\kop} \cons \counter^{\kop} 
    	\cons \str^{\kop} \cons \strtwo^{\kop}&\tokamb^6\\
    	\cont' \tuple{\istr',\counter' \csep 
    		\wstrl',\cods{\elem_j}, 
    		\wstrr' \csep \cods{\state_g} } & \esub {k'} 
    	{k^{\kop}}\cons\overbrace{\esub {\wstrl'} {s^{\kop}}\cons\esub 
    	{\wstrr'} 
    	{r^{\kop}}\cons\esub {i'} {\enc i^{\kop}}\cons\esub {n'} 
    	{n^{\kop}}}^{\env_2}& 
    	\stempty & \tokamseanv\\
    	\cont' & \esub {k'} {k^{\kop}=:(k,\stempty)} & 
    	\config^{\kop} & \tokamsub\\
    	\cont & \env & \config^{\kop}
    \end{array}
	\]
	
	\item \emph{The heads do not move}, that is, 
	$\delta(\elem_i,\elem_j,\state_g)=(0 \csep \elem_h,\downarrow \csep 
	\state_l)$. We set\\ $\configtwo^\kop\defeq(\tuple{\istr',\counter''\csep 
		\wstrl',\cods{\elem_h}, \wstrr' \csep 
		\cods{\state_l}},E_{2})$, where $E_2\defeq\esub {\wstrl'} {s^{\kop}}\cons\esub 
	{\wstrr'} 
{r^{\kop}}\cons\esub {i'} {\enc i^{\kop}}\cons\esub {n''} 
{n^{\kop}}$.

 {\scriptsize\[
	\arraycolsep=1.5pt
    \begin{array}{l|l|ll}
    	\mathsf{Term}   & 
    	\mathsf{Env} & \mathsf{Stack}\\
    	\hhline{---} \rule{0pt}{2.6ex}
    	C_{i,j,\state_g}\defeq \la\var \la{\cont'} \la{\istr'} \la{\counter'} 
    	\la{\wstrl'} \la{\wstrr'}S\counter'& \stempty& \fix^{\kop} \cons 
    	\cont^{\kop} \cons \enc 
    	i^{\kop} \cons \counter^{\kop} 
    	\cons \str^{\kop} \cons \strtwo^{\kop} & \tokambnw^6\\
	
	S\counter'
    & \esub\var{\fix^{\kop}}\cons \esub {k'} 
    {k^{\kop}}\cons\esub {\wstrl'} {s^{\kop}}\cons\esub 
    	{\wstrr'} 
    	{r^{\kop}}\cons\esub {i'} {\enc i^{\kop}}\cons\esub {n'} 
    	{n^{\kop}}
    & \stempty & \tokamseav\\
    S \defeq \la {\counter''} \var\cont' \tuple{\istr',\counter''\csep 
    	\wstrl',\cods{\elem_h}, \wstrr' \csep \cods{\state_l}}  & 
    	\esub\var{\fix^{\kop}}\cons \esub {k'} 
    {k^{\kop}}\cons\esub {\wstrl'} {s^{\kop}}\cons\esub 
    	{\wstrr'} 
    	{r^{\kop}}\cons\esub {i'} {\enc i^{\kop}} & \counter^{\kop} & \tokambnw\\
    \var\cont' \tuple{\istr',\counter''\csep 
    	\wstrl',\cods{\elem_h}, \wstrr' \csep \cods{\state_l}} &  
    	\esub\var{\fix^{\kop}}\cons \esub {k'} 
    {k^{\kop}}\cons\overbrace{\esub {\wstrl'} {s^{\kop}}\cons\esub 
    {\wstrr'} 
    {r^{\kop}}\cons\esub {i'} {\enc i^{\kop}}\cons\esub {n''} 
    {n^{\kop}}}^{\env_2} & \stempty& \tokamsea\\
	x
    & \esub \var {\fix^{\kop}}
    & \cont^{\kop}\cons\configtwo^{\kop} & \tokamsub
    \\
    \var \var \vartwo
	&\esub\vartwo{(\transaux,\stempty)}\cons\esub\var{(\theta,\stempty)}
	&\cont^{\kop}\cons \configtwo^{\kop}
	& \tokamseav^{2}
    \\
    \var
	&\esub\var{(\theta,\stempty)}
	&(\theta,\stempty) \cons (\transaux,\stempty) \cons \cont^{\kop} \cons \configtwo^{\kop} & \tokamsub
    \\
    \theta
	&\stempty
	&(\theta,\stempty) \cons (\transaux,\stempty) \cons \cont^{\kop} \cons \configtwo^{\kop}
    \end{array}
	\]}
	
		\item \emph{The heads move right}, that is, 
	$\delta(\elem_i,\elem_j,\state_g)=(1 \csep \elem_h,\rightarrow \csep 
	\state_l)$. The execution of the first part is in Figure~\ref{fig:caseone}.
	\begin{figure}\begin{sideways}{\scriptsize\(
	\begin{array}{l|l|ll}
		\mathsf{Term}   & 
		\mathsf{Env} & \mathsf{Stack}\\
		\hhline{---} \rule{0pt}{2.6ex}
		C_{i,j,\state_g}\defeq \la\var \la{\cont'} \la{\istr'} 
		\la{\counter'} 
		\la{\wstrl'} \la{\wstrr'}\succl  R \counter' & \stempty& \fix^{\kop} \cons 
		\cont^{\kop} \cons \enc 
		i^{\kop} \cons \counter^{\kop} 
		\cons \str^{\kop} \cons \strtwo^{\kop} & \tokambnw^6\\
		
		\succl  R \counter'
		& \overbrace{\esub\var{\fix^{\kop}}\cons \esub {k'} 
		{k^{\kop}}\cons\esub {\wstrl'} {s^{\kop}}\cons\esub 
		{\wstrr'} 
		{r^{\kop}}\cons\esub {i'} {\enc i^{\kop}}}^{\env_2}\cons\esub {n'} 
		{n^{\kop}}
		& \stempty & \tokamsea^2\\
		\succl & \stempty & (R,\env_2)\cons n^{\kop} & \to^*\\
		R \defeq \la{\counter''}\wstrr' R_0^{\state_l,\elem_h} 
		R_1^{\state_l,\elem_h} 
		R_{\elemblank}^{\state_l,\elem_h}R_{\ems}^{\state_l,\elem_h} \var \cont' 
		\istr' \counter'' \wstrl'& \env_2 & m^{\kop}\defeq(n+1)^{\kop} &\tokambnw\\
		\wstrr' R_0^{\state_l,\elem_h} 
		R_1^{\state_l,\elem_h} 
		R_{\elemblank}^{\state_l,\elem_h}R_{\ems}^{\state_l,\elem_h} \var \cont' 
		\istr' \counter'' \wstrl' & \esub\var{\fix^{\kop}}\cons \esub {k'} 
		{k^{\kop}}\cons\esub {\wstrl'} {s^{\kop}}\cons\esub 
		{\wstrr'} 
		{r^{\kop}}\cons\esub {i'} {\enc i^{\kop}}\cons\esub {n''} 
		{m^{\kop}} & \stempty & \tokamseav^9\\
		\wstrr' & \esub {\wstrr'} {r^{\kop}} & 
		(R^{\state_l,\elem_h}_\texttt{x},\stempty)_{\texttt{x}\in\{0,1,\elemblank,\ems\}}
		 \cons \fix^{\kop} \cons 
		 \cont^{\kop} \cons \enc i^{\kop} \cons m^{\kop} \cons \str^{\kop}\\
	\end{array}\)}\end{sideways}
	\caption{The \SpKAM execution of the beginning of ``the heads move right''.}
	\label{fig:caseone} \end{figure} 
	Two cases.
	\begin{itemize}
		\item $\strtwo=\ems$. Define $\tm\defeq (\la d \lambda \wstrl'.\var\cont' 
		\tuple{\istr',\counter'\csep \wstrl',\cods{\elemblank},d \csep  
		\cods{\state_l}})\enc\ems$. The execution is in Figure~\ref{fig:casetwo}.
		\begin{figure}\begin{sideways}{\scriptsize\(
		\begin{array}{l|l|ll}
			\mathsf{Term}   & 
			\mathsf{Env} & \mathsf{Stack}\\
			\hhline{---} \rule{0pt}{2.6ex}
			\wstrr' & \esub {\wstrr'} {r^{\kop}} & 
			(R^{\state_l,\elem_h}_\texttt{x},\stempty)_{\texttt{x}\in\{0,1,\elemblank,\ems\}}
			\cons \fix^{\kop} \cons 
			\cont^{\kop} \cons \enc i^{\kop} \cons m^{\kop} \cons 
			\str^{\kop}&\to^5\\
			R^{\state_l,\elem_h}_\ems \defeq \la\var \la{\cont'} \la{\istr'} 
			\la{\counter'}
			\appendchar{\elem_h}\tm & \stempty & \fix^{\kop} \cons 
				\cont^{\kop} 
			\cons \enc i^{\kop} \cons m^{\kop} \cons \str^{\kop}&\tokambnw^4\\
			\appendchar{\elem_h}\tm & \overbrace{\esub\var{\fix^{\kop}}\cons \esub 
			{k'} 
			{k^{\kop}}\cons\esub {i'} {\enc i^{\kop}}\cons\esub {n'} 
			{m^{\kop}}}^{\env_2}& \str^{\kop} & \tokamseanv\\
			\appendchar{\elem_h} & \stempty & (\tm,\env_2)\cons \str^{\kop} & \to^*\\
			\tm \defeq (\la d \lambda \wstrl'.\var\cont' 
			\tuple{\istr',\counter'\csep \wstrl',\cods{\elemblank},d \csep  
			\cods{\state_l}})\enc\ems & \env_2 & \str_h^\kop\defeq 
			(\elem_h\cons\str)^{\kop} & \tokamsea\\
			\la d \lambda \wstrl'.\var\cont' 
			\tuple{\istr',\counter'\csep \wstrl',\cods{\elemblank},d \csep  
				\cods{\state_l}} & \env_2 & (\enc\ems,\stempty)\cons\str_h^\kop & 
				\tokambnw^2\\
			\var\cont' \tuple{\istr',\counter'\csep \wstrl',\cods{\elemblank},d 
			\csep  \cods{\state_l}} & \esub\var{\fix^{\kop}}\cons \esub 
			{k'} {k^{\kop}}\cons\overbrace{\esub {i'} {\enc i^{\kop}}\cons\esub 
			{n'} 
			{m^{\kop}} \cons \esub d {(\enc\ems,\stempty)}\cons 
			\esub{\wstrl'} {\str_h^\kop}}^{\env_3} & \stempty & \tokamsea^2\\
			\var & \esub\var{\fix^{\kop}} & k^{\kop} \cons 
			\overbrace{(\tuple{\istr',\counter'\csep \wstrl',\cods{\elemblank},d 
				\csep  \cods{\state_l}},\env_3)}^{\configtwo^\kop} & \tokamsub\\
			\var \var \vartwo
			&\esub\vartwo{(\transaux,\stempty)}\cons\esub\var{(\theta,\stempty)}
			&\cont^{\kop}\cons \configtwo^{\kop}
			& \tokamseav^{2}
			\\
			\var
			&\esub\var{(\theta,\stempty)}
			&(\theta,\stempty) \cons (\transaux,\stempty) \cons \cont^{\kop} \cons 
			\configtwo^{\kop} &\tokamsub
			\\
			\theta
			&\stempty
			&(\theta,\stempty) \cons (\transaux,\stempty) \cons \cont^{\kop} \cons 
			\configtwo^{\kop}
		\end{array}\)}\end{sideways}
	\caption{The \SpKAM execution of the sequel of ``the heads move right'', case $r=\varepsilon$.}
	\label{fig:casetwo} \end{figure} 
	 
	\item $\strtwo=\elem''\cons\strtwo'$. Define $\tm\defeq \lambda 
	\wstrl'.\var\cont' 
	\tuple{\istr',\counter' \csep \wstrl',\cods{\elem''},\wstrr' \csep 
		\cods{\state_l}}$. The execution is in Figure~\ref{fig:casethree}.
	\begin{figure}\begin{sideways}{\scriptsize\(
	\begin{array}{l|l|ll}
		\mathsf{Term}   & 
		\mathsf{Env} & \mathsf{Stack}\\
		\hhline{---} \rule{0pt}{2.6ex}
		\wstrr' & \esub {\wstrr'} {r^{\kop}} & 
		(R^{\state_l,\elem_h}_\texttt{x},\stempty)_{\texttt{x}\in\{0,1,\elemblank,\ems\}}
		\cons \fix^{\kop} \cons 
		\cont^{\kop} \cons \enc i^{\kop} \cons m^{\kop} \cons 
		\str^{\kop} & \tokamsub\\
		\la {\var_0} \la {\var_1} \la {\var_\elemblank} \la {\vartwo} 
		\var_{i_{\elem''}}\varthree & \esub\varthree{\strtwo'^\kop} & 
		(R^{\state_l,\elem_h}_\texttt{x},\stempty)_{\texttt{x}\in\{0,1,\elemblank,\ems\}}
		\cons \fix^{\kop} \cons \cont^{\kop} \cons \enc i^{\kop} \cons m^{\kop} 
		\cons \str^{\kop} & \to^6\\
		R^{\state_l,\elem_h}_{\elem''}\defeq \la{\wstrr'} \la\var \la{\cont'} 
		\la{\istr'} \la{\counter'}\appendchar{\elem_h}\tm & \stempty & 
		\strtwo'^\kop \cons 
		\fix^{\kop} \cons \cont^{\kop} \cons \enc i^{\kop} \cons m^{\kop} \cons 
		\str^{\kop}& \tokambnw^5\\
		\appendchar{\elem_h}\tm& \overbrace{\esub{\wstrr'}{\strtwo'^\kop}\cons 
		\esub\var{\fix^{\kop}}\cons 
		\esub {k'} {k^{\kop}}\cons\esub {i'} {\enc i^{\kop}}\cons\esub {n'} 
			{m^{\kop}}}^{\env_2} & \str^{\kop} & \tokamseanv\\
		\appendchar{\elem_h} & \stempty & (\tm,\env_2)\cons\str^{\kop} & \to^*\\
		\tm \defeq \lambda 
		\wstrl'.\var\cont' \tuple{\istr',\counter' \csep 
		\wstrl',\cods{\elem''},\wstrr' \csep \cods{\state_l}} & \env_2 & 
		\str_h^\kop\defeq(\elem_h\cons\str)^\kop & \tokambnw\\
		\var\cont' \tuple{\istr',\counter' \csep \wstrl',\cods{\elem''},\wstrr' 
		\csep \cods{\state_l}} & \esub\var{\fix^{\kop}}\cons \esub 
	{k'} {k^{\kop}}\cons\overbrace{\esub {i'} {\enc i^{\kop}}\cons\esub {n'} 
	{m^{\kop}} \cons \esub{\wstrr'}{\strtwo'^\kop}\cons \esub{\wstrl'} 
	{\str_h^\kop}}^{\env_3} & \stempty & \tokamsea^2\\
	\var & \esub\var{\fix^{\kop}} & \cont^\kop \cons 
	\overbrace{(\tuple{\istr',\counter' \csep \wstrl',\cods{\elem''},\wstrr' 
			\csep \cods{\state_l}},\env_3}^{\configtwo^\kop} & \tokamsub\\
		\var \var \vartwo
		&\esub\vartwo{(\transaux,\stempty)}\cons\esub\var{(\theta,\stempty)}
		&\cont^{\kop}\cons \configtwo^{\kop}& \tokamseav^{2}\\
		\var &\esub\var{(\theta,\stempty)}&(\theta,\stempty) \cons 
		(\transaux,\stempty) \cons \cont^{\kop} \cons \configtwo^{\kop} & \tokamsub\\
		\theta&\stempty &(\theta,\stempty) \cons (\transaux,\stempty) \cons 
		\cont^{\kop} \cons \configtwo^{\kop}
	\end{array}\)}\end{sideways}
\caption{The \SpKAM execution of the sequel of ``the heads move right'', case $r=a''\cons r'$.}
\label{fig:casethree} \end{figure} 
	\end{itemize}
	\item All the other cases are almost identical \emph{mutatis mutandis}.
	\end{itemize}
	About the space bound we observe that in the simulations \emph{all} the 
	pointers except for those related to the input part of the state, which are in \emph{fixed} number, are 
	pointers to the machine, and not to the input. Moreover, the space overhead 
	of the simulation of one step of the TM is constant, \ie non input 
	dependent.
\end{proof}

\begin{lem}\label{l:trans}
	If $\run:\config\to^n\configtwo$ and $\configtwo$ is final, then  
	$\spkamstate{\transs\,\cont}\stempty{\initconfigs^{\kop}} 
	\tospkam\spkamstate{\cont}\stempty{\config^{\kop}}$ in space 
	$\bigo{S_{\textrm{TM}}(\run)+\log(\size\istr)}$.
\end{lem}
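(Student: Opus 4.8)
The plan is to prove the statement by induction on the number $n$ of Turing machine transitions in $\run$, gluing together the two lemmas stated just above: the initialization lemma, which unfolds the first fix-point step of $\transs = \fix\,\transaux$, and the per-step simulation lemma, which carries out a single iteration of the fix-point. I read the displayed goal in its intended form (with $\tospkam^{*}$ in place of the single arrow and $\configtwo^{\kop}$ in place of the right-hand $\config^{\kop}$, these being evident slips): writing the run as $\config = \config_{0} \tomachtur \config_{1} \tomachtur \cdots \tomachtur \config_{n} = \configtwo$ with $\configtwo$ final and $\config^{\kop} = \initconfigs^{\kop}$, the claim is $\spkamstate{\transs\,\cont}\stempty{\config^{\kop}} \tospkam^{*} \spkamstate{\cont}\stempty{\configtwo^{\kop}}$. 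By the initialization lemma the left-hand state first reduces to $\spkamstate{\theta}\stempty{(\theta,\stempty)\cons(\transaux,\stempty)\cons(\cont,\stempty)\cons\config_{0}^{\kop}}$ in space $\bigo{\log\size\istr}$, using that the initial configuration has empty work tapes, so $\size{\config_{0}^{\kop}} = \Theta(\log\size\istr)$ by Lemma~\ref{l:config-size}. Recall also that $(\cont,\stempty) = \cont^{\kop}$ for the closed continuation $\cont$, so this is exactly the source state of the per-step lemma.

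The core is then an auxiliary induction on $n$ showing $\spkamstate{\theta}\stempty{(\theta,\stempty)\cons(\transaux,\stempty)\cons\cont^{\kop}\cons\config_{0}^{\kop}} \tospkam^{*} \spkamstate{\cont}\stempty{\configtwo^{\kop}}$. In the base case $n = 0$ the configuration $\config_{0} = \configtwo$ is final, so the final branch of the per-step lemma sends the fix-point state directly to $\spkamstate{\cont}\stempty{\configtwo^{\kop}}$ in space $\bigo{\size{\configtwo^{\kop}}}$. For the inductive step the first transition $\config_{0} \tomachtur \config_{1}$ is non-final, so the non-final branch of the per-step lemma takes the fix-point state over $\config_{0}^{\kop}$ to the analogous fix-point state over $\config_{1}^{\kop}$ in space $\bigo{\size{\config_{0}^{\kop}}}$, after which I apply the induction hypothesis to the shorter run $\config_{1} \to^{n-1} \configtwo$.

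The space accounting is best done globally rather than step by step. Each configuration $\config_{i}$ along $\run$ is represented by a closure $\config_{i}^{\kop}$ of size $\bigo{\size{\config_{i}} + \log\size\istr}$ by Lemma~\ref{l:config-size}, and the per-step lemma guarantees that simulating one transition costs space only $\bigo{\size{\config_{i}^{\kop}}}$, \emph{without} accumulating across iterations---this is exactly where the eager garbage collection and unchaining of the \SpKAM are doing their job, both already absorbed into the per-step lemma. Since the space cost of a run is the maximum over its states (\refdef{space-time-runs}) and $\size{\config_{i}} \leq S_{\textrm{TM}}(\run)$ for every $i$ by definition of $S_{\textrm{TM}}$, the peak space is $\max_{0 \le i \le n} \bigo{\size{\config_{i}^{\kop}}} = \bigo{S_{\textrm{TM}}(\run) + \log\size\istr}$, which is the claimed bound.

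I expect the only genuine subtlety to be this space bound rather than the transition count: the inductive gluing of reductions is routine, but one must verify that the peak space is the \emph{pointwise} maximum of the per-configuration closure sizes and does not grow with $n$. This rests entirely on the non-cumulative space guarantee of the per-step lemma, so the proof amounts to threading that guarantee correctly through the induction and then bounding every intermediate work-tape length by $S_{\textrm{TM}}(\run)$.
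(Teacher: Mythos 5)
Your proof is correct and follows essentially the same route as the paper's: an induction on $n$ that glues the initialization lemma to repeated applications of the per-step simulation lemma, with the space bound obtained by combining $S_{\textrm{TM}}(\run)=\max_{\config\in\run}\size\config$ with Lemma~\ref{l:config-size}. The paper states this in one sentence; your write-up merely makes the same argument explicit (and correctly reads the statement's $\tospkam$ and right-hand $\config^{\kop}$ as slips for $\tospkam^{*}$ and $\configtwo^{\kop}$).
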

\begin{proof}
	By a simple induction on $n$, using the two lemmata above, and knowing that 
	$S_{\textrm{TM}}(\run)=\max_{\config\in\run}\size\config$ (we have also to 
	consider that $\size\config=\size{\config^\kop}$, by 
	Lemma~\ref{l:config-size}).
\end{proof}

\end{document}